\documentclass[10pt,reqno]{amsart}
\usepackage[a4paper,top=3cm, bottom=2.8cm, left=3cm, right=3cm, twoside]{geometry}
\usepackage{amsthm}
 \usepackage{amsmath}
 \usepackage{amsfonts}
 \usepackage{amssymb}
 \usepackage{amscd}
 \usepackage{url}
 \usepackage[utf8x]{inputenc}
\usepackage{csquotes}
 \usepackage[T1]{fontenc}
\usepackage[sfdefault]{FiraSans}
 \usepackage[utf8x]{inputenc}
 \usepackage[english]{babel}
 \usepackage[mathscr]{eucal}
 \usepackage{lmodern}
 \usepackage{bera}
 \usepackage{float} 
 \usepackage{bold-extra}
 \usepackage{textcomp}
 \usepackage{graphicx}
 \usepackage{caption}
\usepackage{subcaption} 
 \usepackage[dvipsnames,table]{xcolor}
 \usepackage{mathtools} 
 \usepackage{enumitem}
 \usepackage{tikz} 
 \usepackage{tikz-cd}
 \usepackage{wasysym}
 \usepackage{todonotes}
 \usepackage{fancybox}
 \usepackage{fancyhdr}
 \usepackage{xspace}
\usepackage[nodisplayskipstretch]{setspace}
\usepackage{hyperref}
\usepackage[maxbibnames=99,
 style=numeric,
    sorting=nyt
]{biblatex}
\DeclareFieldFormat{doi}{}
\DeclareFieldFormat{url}{}
\DeclareFieldFormat{issn}{}
\DeclareFieldFormat{isbn}{}
\DeclareFieldFormat{note}{}

\DeclareDelimFormat{nametitledelim}{\addspace}
\expandafter\let\csname ver@natbib.sty\endcsname\relax

 \theoremstyle{definition}  
  \newtheorem{defn}{Definition}[section]
  \newtheorem{eg}[defn]{Example}

   \newtheorem{rmk}[defn]{Remark}

  \theoremstyle{plain}  
  \newtheorem{thm}[defn]{Theorem}
  \newtheorem{lem}[defn]{Lemma}
  \newtheorem{prop}[defn]{Proposition}
  \newtheorem{cor}[defn]{Corollary}
  
  \newtheorem{que}[defn]{Question}

  \theoremstyle{remark}

 \renewcommand{\it}[1]{\textit{#1}}
 
 \renewcommand{\sf}[1]{\textsf{#1}}

 \newcommand{\mbb}[1]{\mathbb{#1}}
 \newcommand{\mcl}[1]{\mathcal{#1}}

 \newcommand{\mbf}[1]{\mathbf{#1}}
 \newcommand{\msc}[1]{\mathscr{#1}}

 \newcommand{\ol}[1]{\overline{#1}}

 \newcommand{\abs}[1]{\left\lvert#1\right\rvert}

 \newcommand{\norm}[1]{\left\lVert#1\right\rVert}

 \newcommand{\M}[1]{\mbb{M}_{#1}}

 \newcommand{\B}[1]{\msc{B}({#1})}

 \newcommand{\ranko}[2]{|{#1}\rangle\langle{#2}|}

 \newcommand{\ip}[1]{\left\langle#1\right\rangle}

 \newcommand{\Matrix}[1]{\begin{bmatrix}#1\end{bmatrix}}

 \DeclareMathOperator{\T}{\sf{T}}
 \DeclareMathOperator{\tr}{\sf{tr}}
 
 \DeclareMathOperator{\id}{\sf{id}}

 \DeclareMathOperator{\lspan}{\sf{span}}

 \numberwithin{equation}{section}
 \allowdisplaybreaks[4] 
 \setlist[enumerate]{font=\upshape,noitemsep, topsep=0pt} 
 \setlist[itemize]{noitemsep, topsep=0pt}

\makeatletter
\@namedef{subjclassname@2020}{\textup{2020} Mathematics Subject Classification}
\makeatother
\title{Noncommutative Marcus-Ree inequality and Erd\H{o}s channels}

\author{Repana Devendra}
\address{Department of Mathematics, Indian Institute of Technology, Bombay}
\email{r.deva1992@gmail.com, rdevamaths@gmail.com}

\author{Chaman Kumar Sahu}
\address{Department of Mathematics, Indian Institute of Technology, Bombay}
\email{sahuchaman9@gmail.com, chamanks@math.iitb.ac.in}

\begin{document}

\date{\today}
 \begin{abstract}
Unital quantum channels are the noncommutative analogues of doubly stochastic matrices. In
1959, Marcus and Ree showed that the squared Frobenius norm of a doubly stochastic matrix
is at most its maximal diagonal sum. We prove a quantum version of this inequality for
mixed unitary channels on $\mathbb{M}_d$ and give an example showing that it fails for
general unital channels. As in the classical case, we call the channels attaining equality
as \emph{Erd\H{o}s channels}. 

We show that the trace-preserving conditional expectation onto any unital $C^*$-subalgebra
of $\mathbb{M}_d$ is an Erd\H{o}s channel. It is also locally isolated among channels of
the same Choi rank. Next, we introduce quantum RCDS channels. These are the noncommutative
analogue of doubly stochastic matrices whose supported permutations all have the same
diagonal sum. We prove a structure theorem for them and use it to characterize the quantum
RCDS Erd\H{o}s channels. We give an explicit form, up to unitary equivalence, for the Erd\H{o}s channels of Choi
rank $2$, and further show that there are uncountably many unitary equivalence
classes of Erd\H{o}s channels unlike the classical case. We also characterize the Erd\H{o}s channels among mixed Weyl-unitary channels of
any Choi rank. 

For qubits, we prove that there are exactly four unitary equivalence classes of Erd\H{o}s
channels and they are represented by the identity channel and uniform averages of Pauli
channels. Finally, restricting to the diagonal channels, our results recover the Marcus--Ree inequality,
the structure of RCDS doubly stochastic matrices, and the characterization of Erd\H{o}s
matrices.
\end{abstract}

\keywords{Completely positive maps, quantum operations, mixed unitary channels.}

\subjclass[2020]{81P47, 47D06, 47D07, 47L90, 47L05 }

\maketitle
\section{Introduction} 
\subsection{Motivation.}
Marcus and Ree published a seminal paper \cite{MR59} aimed at understanding the van der Waerden permanent conjecture, which was eventually settled independently by Egorychev \cite{Eg81} and Falikman \cite{Fa81} in 1981. It was Erd\H{o}s who, while discussing the problem with Marcus and Ree, observed that if the van der Waerden conjecture is true, then every doubly stochastic matrix $A=[a_{jk}]_{j,k=1}^d \in\mathbb{M}_d$ (i.e., $a_{jk} \geq 0$ and $\sum_j a_{jk} = \sum_{k} a_{jk} = 1$ for all $j,k=1, \ldots, d$) must satisfy $\sum_{j=1}^d a_{j,\sigma(j)} \geq 1$, for some permutation  $\sigma \in \mathbf{S}_d$. Marcus and Ree considerably strengthened this inequality by showing that there exists a permutation $\sigma \in \mathbf{S}_d$, the set of all permutations of $\{1, 2, \ldots, d\}$, such that $ \sum_{j,k=1}^d a_{jk}^2 \leq \sum_{j=1}^d a_{j \sigma(j)}.$ This is equivalent to
\begin{equation}\label{eq-M-R-Ineq}
    \sum_{j,k=1}^d a_{jk}^2 \leq  \max_{\sigma \in \mathbf{S}_d} \sum_{j=1}^d a_{j \sigma(j)},
\end{equation}
which is now known as the Marcus--Ree inequality. This inequality prompted Erd\H{o}s to pose the following question: what are all the doubly stochastic matrices $[a_{jk}]_{j,k=1}^d \in \mathbb{M}_d$ for which equality holds in \eqref{eq-M-R-Ineq}? Note that the matrix algebra $\mathbb{M}_d$ can be viewed as a Hilbert space with the Hilbert--Schmidt inner product $\ip{A,B}=\tr(A^*B)$ for all $A,B\in\mathbb{M}_d$, and that the sum $\sum_{j=1}^d a_{j \sigma(j)}$ can be realized as $\tr(P_{\sigma}A)$, where $P_{\sigma}$ is the permutation matrix given by $P_{\sigma}(e_j)=e_{\sigma(j)}$ for all $j =1, \ldots, d$. With this viewpoint, the inequality \eqref{eq-M-R-Ineq} may be interpreted as $\ip{A,A}\leq \max_{\sigma \in \mathbf{S}_d}\ip{P_\sigma,A}$. Consequently, the question of Erd\H{o}s may be reformulated as follows: what are all the doubly stochastic matrices $A \in\mathbb{M}_d$ such that
\begin{equation*}
\ip{A,A}=\max_{P\in\mathcal{P}_d}\ip{P,A}\, ?
\end{equation*}
Here, $\mathcal{P}_d$ denotes the set of all $d\times d$ permutation matrices.

A doubly stochastic matrix $A=[a_{jk}]_{j,k=1}^d$ is called an \emph{Erd\H{o}s matrix} if it satisfies $\ip{A,A}=\max_{P\in\mathcal{P}_d}\ip{P,A}$. Observe that if $A$ is an Erd\H{o}s matrix, then so is $PAQ$ for any $P,Q\in\mathcal{P}_d$. Two doubly stochastic matrices $A$ and $B$ are said to be \emph{equivalent} if $A=PBQ$ for some permutation matrices $P, Q\in\mathcal{P}_d$. Thus, to address the question above, it suffices to classify all Erd\H{o}s doubly stochastic matrices up to equivalence. This question was answered in \ \cite{LuJa24} for the cases $d=2, 3$, where the authors proved that there are, respectively, only $3$ and $6$ doubly stochastic matrices up to equivalence.  The case $d=4$ was subsequently studied in 2025; see \cite{AmRa25}, and the cases $d=5,6$ were studied in 2026; see \cite{PHSK26}. They showed that there are exactly $41$ Erd\H{o}s doubly stochastic matrices up to equivalence for $d=4$, and at most $\frac{2^{d^2}}{2(d!)^2}$ equivalence classes for $d=5,6$. Despite this progress, a complete classification of Erd\H{o}s matrices for general $d$ remains elusive. The rapid growth in the number of equivalence classes, together with the combinatorial complexity already visible for $d=4,5,6$, suggests that a closed-form or algorithmic classification for arbitrary $d$ is unlikely to follow directly from these low-dimensional techniques. We refer the reader to \cite{PHSK26,Rag25} and the references therein for further developments and related open problems concerning Erd\H{o}s matrices.

\subsection{Noncommutative setup.}
Unital completely positive maps on finite von Neumann algebras $(\mathcal{M},\tau)$, where $\tau$ is a faithful normal state, are the natural quantum analogue of stochastic matrices: if $\mathcal{M}$ is a finite-dimensional commutative von Neumann algebra (i.e., $\mathcal{M} \cong \mathbb{C}^n$ for some $n \in \mathbb{N}$), a unital completely positive map $\Phi$ on $\mathcal{M}$ corresponds to a matrix with non-negative entries in which each row sum equals $1$. If, in addition, $\Phi$ is trace-preserving, i.e., $\sum_{i=1}^n \Phi(x)_i = \sum_{i=1}^n x_i$ for all $x = (x_i)_{i=1}^n \in \mathbb{C}^n$ (equivalently, the dual map $\Phi^*$ satisfies $\Phi^*(1) = 1$), then $\Phi$ corresponds to a matrix whose column sums are also equal to $1$. Thus, a unital, trace-preserving completely positive  map (also known as \emph{unital quantum channel})  on $\mathbb{C}^n$ corresponds to a doubly stochastic matrix of order $n$. This motivates the study of unital, trace-preserving completely positive maps on von Neumann algebras. In this article, we restrict ourselves to the von Neumann algebra $\mathbb{M}_d$.

The Marcus--Ree inequality \eqref{eq-M-R-Ineq} relies heavily on the classical Birkhoff--von Neumann theorem \cite{Bir46}, which states that the extreme points of the compact convex set of doubly stochastic matrices are exactly the permutation matrices, the \emph{reversible} ones. Consequently, by the Krein--Milman theorem, every doubly stochastic matrix can be expressed as a convex combination of permutation matrices. The reversible unital quantum channels on $\mathbb{M}_d$ are the unitary conjugations $\mathrm{Ad}_U(X):=U^*XU$, where $U$ is a unitary. However, the analogue of Birkhoff's theorem fails once $d\ge 3$, since there exist unital quantum channels that are not of this form \cite{KuMa87,Tre86}, although it holds for $d=2$ \cite{BPS93,LaSt93}.
A unital quantum channel lying in the convex hull of all unitary conjugations is called a \emph{mixed unitary} channel (see Definition~\ref{mixed-unitary-defn}).
They provide a rich class of examples that capture key properties of general quantum operations \cite{Ros08}. By virtue of their structural simplicity, they remain highly useful for theoretical analysis, often proving more tractable than arbitrary quantum channels. Such channels arise naturally in several contexts, including majorization frameworks based on Hermitian operators \cite{AU82}, as well as in quantum cryptographic protocols involving state encryption \cite{AMTB00, AS04, HDPW04}. For further details on mixed unitary channels, see \cite{BPS93, LaSt93, MeWo09, Wat18}.
\subsection{Our contribution in this paper}
We begin by establishing a quantum analogue of the Marcus--Ree inequality for mixed unitary channels on $\mathbb{M}_d$, namely
$$
\ip{\Phi, \Phi} \leq \max_{U \in \mathcal{U}(d)} \langle \mathrm{Ad}_U, \Phi \rangle \quad \text{for every mixed unitary channel } \Phi.
$$ Here, $\mathcal{U}(d)$ denotes the group of all unitary matrices of order $d$.
Example~\ref{failure-MR-ineq} shows that this inequality cannot be extended to the set of all unital quantum channels.
For a matrix $A=[a_{ij}]\in\M{d}$, let $\Phi_A$ be the \emph{diagonal} linear map on $\M{d}$ defined by
\begin{equation}\label{phiA}
\Phi_A(X): =  \sum_{i,j=1}^{d}x_{jj}a_{ij}E_{ii}, \quad X=[x_{ij}]\in \M{d}.
\end{equation}
Interestingly, it turns out that $A$ is a doubly stochastic matrix if and only if $\Phi_A$ is a mixed unitary channel (Proposition~\ref{prop-doubly-mixed}). Employing on this channel associated to a doubly stochastic matrix, we then recover the classical Marcus-Ree inequality from the quantum analog above. Furthermore, as in the classical case, we show that the difference
$$\Delta_d(\Phi) := \max_{U \in \mathcal{U}(d)} \langle \mathrm{Ad}_U, \Phi \rangle - \langle \Phi, \Phi \rangle, \qquad \Phi \in \{\text{mixed unitary channels}\},$$
cannot be arbitrarily large; more precisely, the range of $\Delta_d$ over the set of all mixed unitary channels is exactly $\left[0, \frac{d^2-1}{4}\right]$. A natural question then is to determine which mixed unitary channels  $\Delta_d$ maps to the boundary points. We characterize all the mixed unitary channels which maps to $\frac{d^2-1}{4}$ (Proposition~\ref{prop-Delta-range}). Now, it is left to classify all the mixed unitary channels $\Phi$ satisfying $\Delta_d(\Phi) = 0$. We call such channels as \emph{Erd\H{o}s channels} (Definition~\ref{def-Erdos-channel}).
Since there is a natural unitary equivalence relation between unital quantum channels (Definition~\ref{equivalence-reln}), we pose the following question.
\begin{que}
Classify all Erd\H{o}s channels on $\mathbb{M}_d$ up to unitary equivalence.
\end{que}
A prominent example of an Erd\H{o}s channel (see Proposition~\ref{prop-Delta-range}) is the well-known completely depolarizing channel , given by
$$\delta_d(X) = \frac{\tr(X)\,I_d}{d}, \quad X \in \mathbb{M}_d.$$
The channel $\delta_d$ can be viewed as the unique trace-preserving conditional expectation onto the unital $C^*$-subalgebra $\mathbb{C}I_d$ of $\mathbb{M}_d$ (see Definition~\ref{conditional-exp} for the notion of conditional expectation). Given a unital $C^*$-subalgebra $\mathcal{A}$ of $\mathbb{M}_d$, there is a unique trace-preserving conditional expectation onto $\mathcal{A}$, denoted by $\mathbf{E}_{\mathcal{A}}$. It is then natural to raise the following question.
\begin{que}
    Is the trace-preserving conditional expectation $\mathbf{E}_{\mathcal{A}}$ an Erd\H{o}s channel?
\end{que}
We obtain an affirmative answer to this question (Theorem~\ref{cond-exp-erdos}). Furthermore, we establish a refined version of the noncommutative Marcus--Ree inequality for any mixed unitary channel that fixes every element of some unital $C^*$-subalgebra $\mathcal{A}$ of $\mathbb{M}_d$, and we show that the Choi-rank of $\mathbf{E}_{\mathcal{A}}$ equals $\dim(\mathcal{A}')$, where $\mathcal{A}'$ denotes the commutant of $\mathcal{A}$ in $\mathbb{M}_d$. Building on these two facts, we prove that $\mathbf{E}_{\mathcal{A}}$ is locally isolated within the class of Erd\H{o}s channels of Choi-rank $\dim(\mathcal{A}')$ that fix every element of $\mathcal{A}$ (Theorem~\ref{nonErdos-suff-cond}).

Since the Choi-rank of $\mathbf{E}_{\mathcal{A}}$ is $\dim(\mathcal{A}')$, and not every integer $k$ with $1 \leq k \leq d^2$ arises as $\dim(\mathcal{A}')$ for some unital $C^*$-subalgebra $\mathcal{A}$ of $\mathbb{M}_d$ -- for instance, the possible dimensions of unital $C^*$-subalgebras of $\mathbb{M}_2$ are $\{1,2,4\}$ -- the following question arises.
\begin{que}\label{all rank -question}
    Given $d \in \mathbb{N}$ and an integer $k$ with $1 \leq k \leq d^2$, does there exist an Erd\H{o}s channel on $\mathbb{M}_d$ whose Choi-rank equals $k$?
\end{que}
To address this question, we introduce a quantum analog of RCDS  doubly stochastic matrices to the unital quantum channels on $\M{d}$, and obtain a structural characterization; see Definition~\ref{quantum-rcds-def} and Theorem~\ref{thm-Strong qRCDS-struc}. As a consequence, we classify Erd\H{o}s channels among all mixed Weyl-unitary channels (Theorem~\ref{Erdos-d}), and Question~\ref{all rank -question} is answered affirmatively in Corollary~\ref{Erdos-d-weyl}. We also prove that, $A$ is a RCDS matrix if and only if the channel $\Phi_A$, defined by \eqref{phiA}, is RCDS (Theorem~\ref{thm-RCDS-quaRCDS}).

Although a complete classification of all Erd\H{o}s channels up to unitary equivalence appears to be challenging, much as in the classical setting. We succeed in classifying all RCDS mixed unitary channels that are Erd\H{o}s (Theorem~\ref{thm-RCDS-Erdos}). As an application, we recover the classical characterization of RCDS Erd\H{o}s doubly stochastic matrices (Corollary~\ref{classical-Erdos}). Furthermore, we concretely classify all Erd\H{o}s channels of Choi-rank $2$ (Theorem~\ref{Rank2-classi}). In a strict contrast to the classical case, it turns out that there are uncountably many inequivalent classes of Erd\H{o}s channels of Choi-rank $2$ (Remark~\ref{inequivalent-channels-rank2}).
We also classify all Erd\H{o}s channels on $\mathbb{M}_2$ up to unitary equivalence, showing that there are exactly four such equivalence classes (Theorem~\ref{Erdos-2}).

\subsection{Organization of the paper.} The present paper is organized as follows. In the next section, we recall the notion of unital quantum channels, including mixed unitary channels, and review some known facts needed for our purposes. In Section~\ref{S3}, we establish a noncommutative Marcus-Ree inequality and introduce the notion of Erd\H{o}s channels. In Section~\ref{S4}, we prove that the trace-preserving conditional expectation is an Erd\H{o}s channel. In Section~\ref{S5}, quantum RCDS channels are characterised, and Erd\H{o}s channels among them are also found. Further, all Erd\H{o}s channels of Choi-rank $2$ are classified, and a complete classification of Erd\H{o}s channels on $\M{2}$ is also obtained.
\section{Basic theory of quantum channels}
\subsection{Notation}
Let $\mathbb{C}^d$ denote the $d$-dimensional complex Hilbert space, equipped with an inner product that is linear in the second variable and conjugate linear in the first, and let $\{e_i: 1\leq i\leq d\}$ be its standard orthonormal basis. We denote the space of $d_1\times d_2$ complex matrices by $\mathbb{M}_{d_1\times d_2}$, and abbreviate $\mathbb{M}_{d\times d}$ to $\mathbb{M}_d$. The identity matrix in $\mathbb{M}_d$ is denoted $I=I_d$, the trace by $\tr=\tr_{d}: \mathbb{M}_d\to \mathbb{C}$, and the identity map by $\id=\id_d:\mathbb{M}_d\to\mathbb{M}_d$. For a matrix $A = [a_{ij}] \in \mathbb{M}_{d_1 \times d_2}$, its adjoint $A^* \in \mathbb{M}_{d_2 \times d_1}$ is the matrix whose $(i,j)$-th entry is $\overline{a_{j,i}}$. We denote the set of positive semi-definite matrices in $\mathbb{M}_d$ by $\mathbb{M}_d^+$, writing $A\geq 0$ when $A\in\mathbb{M}_d^+$, and the group of $d\times d$ unitary matrices by $\mathcal{U}(d)$. For $x\in\mathbb{C}^{d_1}$ and $y\in\mathbb{C}^{d_2}$, the rank-one linear map $\ranko{x}{y}:\mathbb{C}^{d_2}\to\mathbb{C}^{d_1}$ is defined by $\ranko{x}{y}(z):=\ip{y,z}x$ for all $z\in\mathbb{C}^{d_2}$. For $A\in\mathbb{M}_{d_1}$ and $B\in\mathbb{M}_{d_2}$ one has $A\ranko{x}{y}B=\ranko{Ax}{B^*y}$, and $\ranko{x}{x}$ is positive for every $x\in\mathbb{C}^{d}$. For standard basis vectors $\{e_k\}$, let $E_{ij} := |e_i\rangle\langle e_j|$ denote the matrix unit, or equivalently the matrix with $1$ in the $(i,j)$ entry and $0$ elsewhere.

A linear map $\Phi:\mathbb{M}_{d_1}\to\mathbb{M}_{d_2}$ is \emph{completely positive} (a \emph{CP map}) if, for every $k\geq 1$, the map $\id_k\otimes\Phi:\mathbb{M}_{k}\otimes\mathbb{M}_{d_1}\to\mathbb{M}_{k}\otimes\mathbb{M}_{d_2}$ is positive, i.e., $(\id_k\otimes\Phi)\big((\mathbb{M}_{k}\otimes\mathbb{M}_{d_1})^+\big)\subseteq(\mathbb{M}_{k}\otimes\mathbb{M}_{d_2})^+$. By Choi's theorem \cite{Cho75}, $\Phi$ is completely positive if and only if its \emph{Choi matrix}
  \begin{align*}
        C_\Phi:=\sum_{i,j=1}^{d_1}E_{ij}\otimes\Phi(E_{ij})=\big[\Phi(E_{ij})\big]_{i,j=1}^{d_1}\in\mathbb{M}_{d_1}\otimes\mathbb{M}_{d_2}
   \end{align*}
is positive semi-definite, where $E_{ij}:=\ranko{e_i}{e_j}\in\mathbb{M}_{d_1}$ for $1\leq i,j\leq d_1$. Equivalently, $\Phi$ admits a \emph{Kraus decomposition} \cite{Kra71},
        $$\Phi=\sum_{i=1}^n\mathrm{Ad}_{V_i}, \qquad \mathrm{Ad}_V(X):=V^*XV,$$
for some \emph{Kraus operators} $V_i\in\mathbb{M}_{d_1\times d_2}$. A Kraus decomposition is not unique; however, if $\Phi=\sum_{i=1}^n\mathrm{Ad}_{V_i}=\sum_{i=1}^m\mathrm{Ad}_{W_i}$ are two such decompositions, then $\lspan\{V_i:1\leq i\leq n\}=\lspan\{W_i: 1\leq i\leq m\}$. The minimal number of Kraus operators needed to represent $\Phi$ is called the \emph{Choi-rank} of $\Phi$; it equals the rank of the Choi matrix $C_\Phi$.
\begin{defn}\label{mixed-unitary-defn}
 A unital quantum channel $\Phi:\mathbb{M}_d \to \mathbb{M}_d$ is said to be \emph{mixed unitary} if there exist $\{\lambda_j\}_{j=1}^{\ell}\subseteq [0,1]$ and unitaries $\{U_j\}_{j=1}^\ell\subseteq\mathcal{U}(d)$ with $\sum_{j=1}^\ell\lambda_j=1$ such that
 \begin{align*}
     \Phi(X)=\sum_{j=1}^{\ell}\lambda_j U_j^{*}X U_j, \quad X\in \mathbb{M}_d.
 \end{align*}
\end{defn}
Given $d \in \mathbb{N}$, let $\mathcal{MU}_d$ denote the set of all mixed unitary channels on $\mathbb{M}_d$; that is,
\begin{equation*}
    \mathcal{MU}_d := \Big\{ \sum_{j=1}^\ell \lambda_j \mathrm{Ad}_{U_j} : \ell \in \mathbb{N},~ \{U_j\}_{j=1}^\ell \subseteq \mathcal{U}(d),~ 0 \leq\lambda_j \leq 1,~\sum_{j=1}^\ell \lambda_j =1\Big\}.
\end{equation*}
The set $\mathcal{MU}_d$ is clearly convex, and since the unitary group $\mathcal{U}(d)$ is compact, $\mathcal{MU}_d$ is a compact convex subset of the set of all unital quantum channels.
The \emph{mixed unitary rank} of $\Phi$ is the minimal $\ell$ for which such a decomposition exists. It is immediate that the Choi-rank of $\Phi$ is at most its mixed unitary rank. In \cite{Bus06} it is proved that the mixed unitary rank of $\Phi$ is at most the square of its Choi-rank. For more details on the mixed unitary rank, see \cite{CKLi22}.
\section{noncommutative Marcus-Ree Inequality and Erd\H{o}s channels}\label{S3}
Motivated by the classical study of the Marcus-Ree inequality for doubly stochastic matrices, it is intriguing to investigate if there is a quantum or noncommutative analog of the Marcus-Ree inequality for unital quantum channels on $\M{d}$. 
Recall that by viewing $\M{d}$ as a Hilbert space with Hilbert-Schmidt inner product, the Marcus-Ree inequality \eqref{eq-M-R-Ineq} is equivalent to
\begin{align*}
    \ip{A,A}\leq\max_{P\in\mcl{P}_d}\ip{P,A}.
\end{align*}
This description of the Marcus--Ree inequality allows us to extend to the noncommutative setup. We equip $\B{\mathbb{M}_d,\mathbb{M}_d}$,  the space of all linear operators on $\M{d}$ with the inner product
\begin{equation*}
\ip{ \Phi, \Psi} := \tr(C_\Phi^* C_\Psi), \qquad \Phi, \Psi \in  \B{\mathbb{M}_d, \mathbb{M}_d},
\end{equation*}
where $C_\Phi$ denotes the Choi matrix of $\Phi$, turning $\B{\mathbb{M}_d,\mathbb{M}_d}$ into an inner product space. Using $E_{ij}E_{kl} = \delta_{j,k}E_{il}$ and $\tr\big(E_{ij}^*E_{kl}\big)=\delta_{i,k}\delta_{j,l}$, we obtain
\begin{equation}\label{inner-pdt-compute}
\ip{\Phi,\Psi} = \sum_{i,j,k,l=1}^d \tr\big(E_{ij}^*E_{kl}\big)\,\tr\big(\Phi(E_{ij})^*\Psi(E_{kl})\big) = \sum_{i,j=1}^d \tr\big(\Phi(E_{ij})^*\Psi(E_{ij})\big).
\end{equation}
The classical Marcus–Ree inequality relies on Birkhoff's theorem \cite{Bir46}; which says that permutation matrices are the only extreme points of the set of doubly stochastic matrices. This correspondence fails in the noncommutative setting, as there are extreme points of the set of unital quantum channels that are not reversible unital channels  (inner automorphisms). Consequently, an analog of classical Marcus–Ree inequality can not be expected to hold for all unital quantum channels. However, we prove that it remains valid for the class of mixed unitary channels. Note that $\langle \Phi, \Psi \rangle \geq 0$ for all CP maps on $\M{d}.$ Below is the main result of this section, which establishes the noncommutative version of the Marcus-Ree inequality.

\begin{thm}
\label{Marcus-Ree-noncom}
    Let $\Phi: \M{d} \rightarrow  \M{d}$ be a mixed unitary channel. Then, 
    \begin{equation}
    \label{Marcus-Ree-noncom-Ineq}
        \langle \Phi, \Phi \rangle \leq \max_{U \in \mcl U(d)} \langle \mathrm{Ad}_U, \Phi \rangle.
    \end{equation}
    
 We refer the inequality \eqref{Marcus-Ree-noncom-Ineq} as the \emph{noncommutative Marcus–Ree inequality}.
\end{thm}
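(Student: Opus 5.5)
The plan is to mimic the classical argument, with Birkhoff's theorem replaced by the very definition of a mixed unitary channel. Write $\Phi=\sum_{j=1}^{\ell}\lambda_j\,\mathrm{Ad}_{U_j}$ with $\lambda_j\ge 0$, $\sum_j\lambda_j=1$ and $U_j\in\mathcal U(d)$. The inner product $\ip{\cdot,\cdot}$ on $\B{\mathbb{M}_d,\mathbb{M}_d}$ is sesquilinear, conjugate linear in the first slot, and the coefficients $\lambda_j$ are real. Hence
\begin{equation*}
\ip{\Phi,\Phi}=\sum_{j=1}^{\ell}\lambda_j\,\ip{\mathrm{Ad}_{U_j},\Phi}.
\end{equation*}
So $\ip{\Phi,\Phi}$ is a convex combination of the numbers $\ip{\mathrm{Ad}_{U_j},\Phi}$, $j=1,\dots,\ell$.

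The next step is to check that each $\ip{\mathrm{Ad}_{U_j},\Phi}$ is a nonnegative real number; otherwise "max" is meaningless. Both $\mathrm{Ad}_{U_j}$ and $\Phi$ are CP, so their Choi matrices are positive semidefinite. Then $\tr(C_{\mathrm{Ad}_{U_j}}C_\Phi)\ge 0$, as the paper already notes. Explicitly, by \eqref{inner-pdt-compute}, $\ip{\mathrm{Ad}_U,\mathrm{Ad}_V}=\sum_{i,k}\tr(U^*E_{ik}^*U\,V^*E_{ik}V)=|\tr(UV^*)|^2$, which is real and nonnegative. By linearity in the second slot, $\ip{\mathrm{Ad}_U,\Phi}=\sum_j\lambda_j|\tr(UU_j^*)|^2\ge 0$.

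Once the values are real, a convex combination of reals is at most their largest value. This gives
\begin{equation*}
\ip{\Phi,\Phi}\le\max_{1\le j\le\ell}\ip{\mathrm{Ad}_{U_j},\Phi}\le\sup_{U\in\mathcal U(d)}\ip{\mathrm{Ad}_U,\Phi}.
\end{equation*}
The final point is that the supremum is attained. The map $U\mapsto\sum_j\lambda_j|\tr(UU_j^*)|^2$ is continuous on the compact group $\mathcal U(d)$, so the supremum is a maximum, and \eqref{Marcus-Ree-noncom-Ineq} follows.

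There is no real obstacle here; the only care needed is the bookkeeping of the sesquilinear form and the realness and positivity of $\ip{\mathrm{Ad}_U,\Phi}$. The nontrivial content, that the inequality fails for general unital channels, lies outside this statement. It arises because those channels need not be convex combinations of the reversible channels $\mathrm{Ad}_U$.
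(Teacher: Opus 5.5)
Your proposal is correct and follows essentially the same route as the paper: write $\ip{\Phi,\Phi}=\sum_j\lambda_j\ip{\mathrm{Ad}_{U_j},\Phi}$ and bound this convex combination by the maximum, using that the inner product of two CP maps is a nonnegative real number. Your two additions, that the values are real (via $\ip{\mathrm{Ad}_U,\Phi}=\sum_j\lambda_j|\tr(UU_j^*)|^2$) and that the maximum is attained by compactness of $\mathcal{U}(d)$, are details the paper leaves implicit.
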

 \begin{proof}
     Let $\Phi: \M{d} \rightarrow  \M{d}$ be a mixed unitary channel. Then, there exist $\{\lambda_j\}_{j=1}^m \subseteq [0,1]$ and $\{U_j\}_{j=1}^m \subseteq \mcl U(d)$ such that $\sum_{j=1}^m \lambda_j=1$ and $\Phi = \sum_{j=1}^m \lambda_j \mathrm{Ad}_{U_j}$. Since $\langle \Phi_1, \Phi_2 \rangle \geq 0$ for any CP maps $\Phi_1, \Phi_2$, it follows that  
     \begin{equation*}
         \langle \Phi, \Phi \rangle = \sum_{j=1}^m \lambda_j \langle \mathrm{Ad}_{U_j}, \Phi \rangle \leq \max_{U \in \mcl U(d)} \langle \mathrm{Ad}_U, \Phi \rangle \sum_{j=1}^m \lambda_j = \max_{U \in \mcl U(d)} \langle \mathrm{Ad}_U, \Phi \rangle.
     \end{equation*}
     This completes the proof.
 \end{proof}
By restricting to a special map, we recover the classical Marcus-Ree inequality.  To see this, we first need the following result.

\begin{prop}\label{prop-doubly-mixed}
    Let $A=[a_{ij}]\in\M{d}$ be a doubly stochastic matrix, and let $\Phi_{A}$ be given as in \eqref{phiA}. Then, $\Phi_A$ is a mixed unitary channel. 
\end{prop}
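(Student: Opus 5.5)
The plan is to reduce to permutation matrices via Birkhoff--von Neumann and then show that each $\Phi_{P}$, for $P$ a permutation matrix, is itself a mixed unitary channel. The map $A\mapsto \Phi_A$ in \eqref{phiA} is linear in $A$, and $\mathcal{MU}_d$ is convex. So if $A=\sum_k \mu_k P_{\sigma_k}$ is a convex combination of permutation matrices, then $\Phi_A=\sum_k\mu_k\Phi_{P_{\sigma_k}}$ lies in $\mathcal{MU}_d$ as soon as every $\Phi_{P_\sigma}$ does. It therefore suffices to treat $A=P_\sigma$.

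For a permutation matrix we have $a_{ij}=1$ exactly when $i=\sigma(j)$ (or the inverse convention, which is harmless). Hence
\[
\Phi_{P_\sigma}(X)=\sum_{j}x_{jj}E_{\sigma(j)\sigma(j)} = P_\sigma\,\mathcal{D}(X)\,P_\sigma^*,
\]
where $\mathcal{D}(X)=\sum_j x_{jj}E_{jj}$ is the diagonal-part map. The conjugation by $P_\sigma$ is $\mathrm{Ad}_{P_\sigma^*}$, a unitary conjugation. It thus remains to show that $\mathcal{D}$ is mixed unitary. Composing a mixed unitary channel with a unitary conjugation gives a mixed unitary channel, because $\mathrm{Ad}_{V}\circ\mathrm{Ad}_{W}=\mathrm{Ad}_{WV}$ is again a unitary conjugation.

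The key step, which is the only one requiring a construction, is to exhibit $\mathcal{D}$ as an average of unitary conjugations. I would use the clock (diagonal Weyl) unitaries $Z=\sum_j \omega^{j}E_{jj}$ with $\omega=e^{2\pi i/d}$. Then
\[
\frac1d\sum_{k=0}^{d-1} Z^{-k}XZ^{k}
\]
has $(i,j)$ entry $x_{ij}\frac1d\sum_k\omega^{k(j-i)}=\delta_{ij}x_{ij}$, by the standard root-of-unity orthogonality. This gives $\mathcal{D}=\frac1d\sum_k \mathrm{Ad}_{Z^k}$.

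Combining the steps,
\[
\Phi_A=\sum_k\frac{\mu_k}{d}\sum_{l=0}^{d-1}\mathrm{Ad}_{Z^lP_{\sigma_k}^*},
\]
which is an explicit mixed unitary decomposition. Unitality and trace preservation of $\Phi_A$ then follow automatically; they can also be checked directly from the row and column sums of $A$. The only point needing care is the index convention, namely whether $P_\sigma$ or $P_\sigma^{*}$ appears. This does not affect membership in $\mathcal{MU}_d$.
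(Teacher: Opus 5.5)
Your proposal is correct and follows essentially the same route as the paper: a Birkhoff--von Neumann decomposition $A=\sum_k\mu_k P_{\sigma_k}$, followed by the explicit mixed unitary decomposition $\Phi_A=\sum_k\frac{\mu_k}{d}\sum_{l=0}^{d-1}\mathrm{Ad}_{Z^lP_{\sigma_k}^*}$ using the clock unitary $Z$ (the paper's $V$, up to a phase). The only difference is organizational: you factor the argument through the diagonal-part map and the linearity of $A\mapsto\Phi_A$, whereas the paper verifies the same identity in a single direct computation.
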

\begin{proof}
By Birkhoff's theorem, $A=\sum_{\ell=1}^m\lambda_{\ell}P_{\ell}$, for some permutation matrices $P_\ell$ and positive scalars $\lambda_\ell$ with $\sum_{\ell=1}^m\lambda_\ell =1$. To obtain the required conclusion, it is enough to show that 
    \begin{align*}
\Phi_A=\sum_{j=0}^{d-1}\sum_{\ell=1}^m\frac{\lambda_{\ell}}{d}\mathrm{Ad}_{V^jP_{\ell}^*},
    \end{align*}
    where $V=diag(1,\omega,\cdots, \omega^{d-1})$ and $\omega=e^{\frac{2\pi i}{d}}$.
Suppose that the permutation matrix $P_\ell$ is given by the permutation $\sigma_\ell$, i.e., $P_\ell e_j = e_{\sigma_\ell(j)}$ for all $j=1, \ldots, d$. Then, 
    $P_\ell E_{kk} P_\ell^* = E_{\sigma_\ell(k)\sigma_\ell(k)}$, and hence, for any $X=[x_{ij}]\in \M{d}$,
\begin{align*}
      \sum_{j=0}^{d-1}\sum_{\ell=1}^m\frac{\lambda_{\ell}}{d}\mathrm{Ad}_{V^jP_{\ell}^*}(X) &= \sum_{\ell=1}^m \frac{\lambda_\ell}{d} P_\ell \Big( \sum_{j=0}^{d-1} (V^*)^j X V^j  \Big)P_\ell^*\\
      &= \sum_{\ell=1}^m  \lambda_\ell P_\ell \Big( \frac{1}{d} \sum_{m,n=1}^d \sum_{j=0}^{d-1} \overline{\omega}^{(m-1)j} \omega^{(n-1)j} x_{mn}E_{mn} \Big)P_\ell^*\\
      &= \sum_{\ell=1}^m \lambda_\ell  P_\ell \Big( \sum_{k=1}^{d} x_{kk}E_{kk}  \Big)P_\ell^* 
      = \sum_{k=1}^{d}\sum_{\ell=1}^m  x_{kk}\lambda_\ell P_\ell E_{kk} P_\ell^*\\
      &= \sum_{k=1}^{d}\sum_{\ell=1}^m \lambda_\ell  x_{kk}E_{\sigma_\ell(k)\sigma_\ell(k)}
      = \sum_{i,k=1}^d \sum_{\ell=1}^m  \lambda_\ell \ip{e_i, e_{\sigma_\ell(k)}} x_{kk} E_{ii}\\
      &= \sum_{i,k=1}^d \sum_{\ell=1}^m   \ip{e_i, \lambda_\ell P_\ell e_{k}} x_{kk} E_{ii}
      = \sum_{i,k=1}^d   \ip{e_i, Ae_{k}} x_{kk} E_{ii} = \Phi_A(X).
 \end{align*} 
 This completes the proof.
\end{proof}

Below we prove the classical Marcus-Ree inequality using Theorem~\ref{Marcus-Ree-noncom}.
  \begin{cor}
      Let $A$ be a doubly stochastic matrix. Then $\ip{A,A}\leq \max_{P\in\mcl{P}_d}\ip{P,A}$.
  \end{cor}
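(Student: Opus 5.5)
The plan is to apply Theorem~\ref{Marcus-Ree-noncom} to the diagonal channel $\Phi_A$, which is mixed unitary by Proposition~\ref{prop-doubly-mixed}. This gives
\[
\ip{\Phi_A,\Phi_A}\le \max_{U\in\mcl U(d)}\ip{\mathrm{Ad}_U,\Phi_A}.
\]
What remains is to compute both sides in terms of $A$, using the formula \eqref{inner-pdt-compute}.

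\textbf{Left-hand side.} We have $\Phi_A(E_{ij})=\delta_{ij}\sum_k a_{kj}E_{kk}$. Hence
\[
\ip{\Phi_A,\Phi_A}=\sum_j \tr\Big(\big(\textstyle\sum_k a_{kj}E_{kk}\big)^2\Big)=\sum_{j,k}a_{kj}^2=\ip{A,A}.
\]

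\textbf{Right-hand side.} For a unitary $U$,
\[
\ip{\mathrm{Ad}_U,\Phi_A}=\sum_{j}\tr\big(U^*E_{jj}U\,\Phi_A(E_{jj})\big)=\sum_{j,k}a_{kj}\,\ip{e_k,U^*E_{jj}Ue_k}=\sum_{j,k}a_{kj}\,|u_{jk}|^2 .
\]
Here only the diagonal $E_{jj}$ contribute, because $\Phi_A(E_{ij})=0$ for $i\ne j$.

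\textbf{Reduction to permutations.} Set $B=[|u_{jk}|^2]$. Since $U$ is unitary, $B$ is unistochastic, and in particular doubly stochastic. The quantity above is then the linear functional $\sum_{j,k}a_{kj}b_{jk}=\tr(AB)$, evaluated at $B$. By Birkhoff's theorem, $B$ is a convex combination of permutation matrices, so
\[
\tr(AB)\le \max_{P\in\mcl P_d}\tr(AP).
\]
For a permutation matrix, $\tr(AP)=\tr(PA)=\ip{P^{T},A}$, since $P$ is real. As $P$ ranges over $\mcl P_d$ so does $P^T$, so this maximum equals $\max_{P}\ip{P,A}$. Conversely, taking $U$ to be a permutation matrix shows that the maximum over unitaries is attained at permutations, so the two maxima actually coincide, although only the inequality is needed here.

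\textbf{Main obstacle.} The only delicate step is tracking the indices and transposes in $\ip{\mathrm{Ad}_U,\Phi_A}$, so that it is correctly identified as a pairing of $A$ with a doubly stochastic matrix. Once that is done, Birkhoff's theorem finishes the argument and gives $\ip{A,A}\le\max_{P\in\mcl P_d}\ip{P,A}$.
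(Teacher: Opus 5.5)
Your proposal is correct and follows essentially the same route as the paper. You apply Theorem~\ref{Marcus-Ree-noncom} to $\Phi_A$, identify $\ip{\Phi_A,\Phi_A}=\ip{A,A}$ and $\ip{\mathrm{Ad}_U,\Phi_A}$ as the pairing of $A$ with the doubly stochastic matrix $[\abs{u_{jk}}^2]$, and then bound this by Birkhoff's theorem. The paper also proves the reverse inequality to get equality of the two maxima, which, as you note, is not needed for this corollary.
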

  \begin{proof}
Suppose $A =[a_{ij}]$. Then, by \eqref{phiA}, $\Phi_A(E_{jj}) = \text{diag}(a_{1j}, \ldots, a_{dj})$ for all $j=1, \ldots, d$, and $\Phi_A(E_{ij}) = 0$ whenever $i \neq j$. Hence, 
      \begin{align}\label{eq-classical-quantum-ip-relation}
\ip{\Phi_A,\Phi_A}=\sum_{i,j=1}^d\tr\left(\Phi_A(E_{ij})^{*}\Phi_A(E_{ji})\right)=\sum_{j=1}^d\tr(\Phi_A(E_{jj})^2)=\sum_{i,j=1}^d a_{ij}^2=\ip{A,A},
      \end{align}
and for any unitary $U=[U_{ij}]$, we have
      \begin{align}\label{eq-unitary-doublystochastic}
        \ip{\mathrm{Ad}_{U},\Phi_A}=\sum_{i,j=1}^d\tr\left(\mathrm{Ad}_U(E_{ij}^*)\Phi_A(E_{ji}) \right)&=\sum_{j=1}^d\tr(U^*E_{jj}U\Phi_A(E_{jj})) =\sum_{i,j=1}^da_{ij}\tr(E_{jj}UE_{ii}U^*) \notag \\
          &=\sum_{i,j=1}^d a_{ij}\abs{\ip{e_j,Ue_i}}^2 =\ip{B(U),A},
      \end{align}
      where $B(U)$ is a doubly stochastic matrix defined by $B(U)_{ij}:=\abs{U_{ji}}^2$. By Birkhoff's theorem,  $B(U) = \sum_{k=1}^m \lambda_k P_k$, for some convex combination of permutation matrices $P_k$. Therefore,  $\ip{\mathrm{Ad}_U,\Phi_A}=\ip{B(U),A}=\ip{\sum_{k=1}^m\lambda_kP_k,A}\leq \max_{P\in\mcl{P}_d}\ip{P,A}$. Conversely, for any permutation matrix $P$, the associated doubly stochastic matrix $B(P)$ equals $P^{\T}$ and hence $\ip{P,A}=\ip{B(P^{\T}),A}=\ip{\mathrm{Ad}_{P^{\T}},\Phi_A}\leq \max_{U\in\mcl{U}(d)}\ip{\mathrm{Ad}_U,\Phi_A}$. This shows that 
      \begin{align}\label{eq-classical-quantum-ip-1}
         \max_{P\in\mcl{P}_d }\ip{P,A}=\max_{U\in\mcl{U}(d)}\ip{\mathrm{Ad}_U,\Phi_A}. 
      \end{align}
      By Proposition~\ref{prop-doubly-mixed}, $\Phi_A$ is a mixed unitary channel, and hence
      $$
      \ip{A, A} \overset{\eqref{eq-classical-quantum-ip-relation}}=\ip{\Phi_A, \Phi_A} \overset{\eqref{Marcus-Ree-noncom-Ineq}}\leq  \max_{U\in\mcl{U}(d)}\ip{\mathrm{Ad}_U,\Phi_A} \overset{\eqref{eq-classical-quantum-ip-1}}=\max_{P\in\mcl{P}_d }\ip{P,A}.
      $$
 This completes the proof.
  \end{proof}
Now, we provide an example of a unital quantum channel that fails to be mixed unitary and violates the noncommutative Marcus-Ree inequality \eqref{Marcus-Ree-noncom-Ineq}.
 
\begin{eg}\label{failure-MR-ineq}
   Let $V_1 = \frac{E_{23}- E_{32}}{\sqrt{2}}, V_2 = \frac{E_{13} - E_{31}}{\sqrt{2}}$, and $V_3 = \frac{E_{12}-E_{21}}{\sqrt{2}}.$ Consider the CP map $\Phi: \M{3} \rightarrow \M{3}$ given by
\begin{equation*}
    \Phi(X)= \sum_{j=1}^3 V_j^*XV_j =\frac{\tr(X)I_3 - X^{\T}}{2}, \quad X \in \M{3},
\end{equation*}
where $X^{\T}$ denotes the transpose of $X$.
Clearly, $\Phi$ is a unital channel but not mixed unitary \cite{Tre86}. 
Since $E_{ij}E_{kl} = \delta_{j,k}E_{il}$, $\Phi(E_{ij}) = -\frac{E_{ji}}{2} (i \neq j)$ and $\Phi(E_{ii}) = \frac{I_3 - E_{ii}}{2}$,~$i=1,2,3$, by \eqref{inner-pdt-compute},  
\begin{eqnarray*}
\ip{\Phi, \Phi} 
= \sum_{i,j=1}^3\tr\Big(\Phi(E_{ij})\Phi(E_{ji})\Big) 
&=& \frac{1}{4}\sum_{i=1}^3\tr\Big((I_3 - E_{ii})^2\Big) + \frac{1}{4}\sum_{i \neq j}\tr(E_{ji}E_{ij}) \\
&=& \frac{1}{4}\Bigg(\sum_{i=1}^3\tr(I_3 - E_{ii}) + \sum_{i \neq j} 1\Bigg)  = 3.  
\end{eqnarray*}
Also, for any $U \in \mcl U(d)$, we have
\begin{eqnarray*}
    \ip{\mathrm{Ad}_U, \Phi} 
&\overset{\eqref{inner-pdt-compute}}=& \sum_{i,j=1}^3\tr\big(U \Phi(E_{ji})U^*E_{ij}\big) 
= \frac{1}{2}\Bigg(\sum_{i=1}^3 \ip{U^*e_i, (I_3-E_{ii})U^*e_i} - \sum_{i\neq j} \ip{U^*e_j, E_{ij}U^*e_i}\Bigg)\\
&=& \frac{1}{2}\Big(3 - \sum_{i,j=1}^3 \ip{U^*e_j, E_{ij}U^*e_i}\Big) = \frac{1}{2}\Big(3 - \sum_{i,j=1}^3 \ip{e_j, Ue_i} \ip{e_j, U^*e_i}\Big)\\
&=& \frac{1}{2}(3 - \tr(\ol{U}U)) \leq \frac{1}{2} (3+1) = 2.
\end{eqnarray*}
Here, the second last inequality follows by taking $d=3$ and $(\mu_1,\mu_2,\mu_3)=(1,1,1)$ in the following fact \cite[Corollary~15]{MeWo09}:  
    given non-negative numbers $\mu_1\geq \mu_2\geq \cdots \geq \mu_d$, we have
    \begin{align*} 
        \min_{A}\{\tr(A\ol{A}): \mu(A)=\{\mu_j\}_{j=1}^d\}&= \begin{cases*}
      -2\Big(\sum_{j=1}^{\frac{d}{2}}\mu_{2j}\mu_{2j-1}\Big) & if $d$ is even,\\
     -2\Big(\sum_{j=1}^{\frac{d-1}{2}}\mu_{2j}\mu_{2j-1}\Big)+\mu_d^2 & if $d$ is odd,
    \end{cases*}
    \end{align*}
    where $\mu(A)$ denotes the set of all singular values of the matrix $A$. Thus, $$\max_{U\in\mcl{U}(3)}\ip{\mathrm{Ad}_U,\Phi}<\ip{\Phi,\Phi},$$ establishing the failure of \eqref{Marcus-Ree-noncom-Ineq}.
\end{eg}
The following is a simple observation that will be used several times later.
\begin{lem}\label{U-W-IP}
    For any $A, B \in \M{d}$, 
    \begin{equation}\label{Ad-Ip-relation}
         \ip{\mathrm{Ad}_A, \mathrm{Ad}_B} =\abs{\tr(A^*B)}^2=\abs{\ip{A,B}}^2.
    \end{equation}
\end{lem}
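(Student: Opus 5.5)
The identity follows from writing out the Choi matrices of the two conjugation maps and computing the inner product directly. For $V\in\M{d}$, the Choi matrix of $\mathrm{Ad}_V$ is
\[
C_{\mathrm{Ad}_V}=\sum_{i,j=1}^d E_{ij}\otimes V^*E_{ij}V=\sum_{i,j=1}^d \ranko{e_i}{e_j}\otimes\ranko{V^*e_i}{V^*e_j}=\ranko{v}{v},
\]
where $v:=\sum_{i=1}^d e_i\otimes V^*e_i\in\mathbb{C}^d\otimes\mathbb{C}^d$. This uses the rule $A\ranko{x}{y}B=\ranko{Ax}{B^*y}$ from the notation subsection. So $C_{\mathrm{Ad}_V}$ is a rank-one positive operator.

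Put $a:=\sum_i e_i\otimes A^*e_i$ and $b:=\sum_i e_i\otimes B^*e_i$. By the definition of the inner product,
\[
\ip{\mathrm{Ad}_A,\mathrm{Ad}_B}=\tr\bigl(\ranko{a}{a}\,\ranko{b}{b}\bigr)=\ip{a,b}\,\tr\bigl(\ranko{a}{b}\bigr)=\ip{a,b}\ip{b,a}=\abs{\ip{a,b}}^2 .
\]
It remains to compute $\ip{a,b}$:
\[
\ip{a,b}=\sum_{i,j}\ip{e_i,e_j}\ip{A^*e_i,B^*e_j}=\sum_i\ip{A^*e_i,B^*e_i}=\sum_i\ip{e_i,AB^*e_i}=\tr(AB^*).
\]
Since $\tr(AB^*)=\overline{\tr(BA^*)}=\overline{\tr(A^*B)}$, we get $\abs{\ip{a,b}}^2=\abs{\tr(A^*B)}^2=\abs{\ip{A,B}}^2$, which is the claimed identity \eqref{Ad-Ip-relation}.

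As a cross-check, one can instead expand via \eqref{inner-pdt-compute}:
\[
\ip{\mathrm{Ad}_A,\mathrm{Ad}_B}=\sum_{i,j}\tr\bigl(A^*E_{ji}AB^*E_{ij}B\bigr)=\sum_{i,j}\ip{e_i,AB^*e_i}\,\ip{e_j,BA^*e_j}=\tr(AB^*)\,\tr(BA^*),
\]
which again equals $\abs{\tr(A^*B)}^2$.

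The only point needing care is the sesquilinear bookkeeping: the inner product is conjugate linear in the first variable, and $\mathrm{Ad}_V$ is conjugation by $V^*$ on the left. This decides whether $\tr(AB^*)$ or $\tr(A^*B)$ appears, but because the final expression is a squared modulus, the answer is the same either way.
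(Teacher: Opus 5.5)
Your proof is correct. Your main argument takes a slightly different route from the paper, while your ``cross-check'' is essentially the paper's proof. The paper expands $\ip{\mathrm{Ad}_A,\mathrm{Ad}_B}$ directly through \eqref{inner-pdt-compute}. It then uses cyclicity of the trace and the identity $AB^*E_{ji}BA^*=\ranko{AB^*e_j}{AB^*e_i}$, and sums over matrix units to reach $\abs{\tr(AB^*)}^2$.

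Your primary argument instead identifies the Choi matrix as rank one, $C_{\mathrm{Ad}_V}=\ranko{v}{v}$ with $v=(I\otimes V^*)e$ and $e=\sum_i e_i\otimes e_i$. The lemma then reduces to two vector facts: $\tr\bigl(\ranko{a}{a}\,\ranko{b}{b}\bigr)=\abs{\ip{a,b}}^2$, and $\ip{(I\otimes A^*)e,(I\otimes B^*)e}=\tr(AB^*)$.

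Your route explains structurally why a squared modulus appears. The map $V\mapsto (I\otimes V^*)e$ is a conjugate-linear Hilbert--Schmidt isometry, so the conjugation it introduces disappears under $\abs{\cdot}^2$. It is also the same rank-one computation the paper itself carries out later in Example~\ref{quantumRCDS_example}. There $C_{\mathrm{Ad}_W}$ is written as $\ranko{(I\otimes W^*)e}{(I\otimes W^*)e}$ and the inner products $\tr(VW^*)$ are computed, so your approach unifies the two calculations. The paper's index computation is more direct and needs no identification of the Choi matrix, but it does not expose this rank-one structure.
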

\begin{proof}
Let $A,B\in\M{d}$. Then, by \eqref{inner-pdt-compute},
    \begin{align*}
        \ip{\mathrm{Ad}_A,\mathrm{Ad}_B}&=\sum_{i,j=1}^d\tr\big(\mathrm{Ad}_A(E_{ij})\mathrm{Ad}_{B}(E_{ji})\big)
    =\sum_{i,j=1}^d\tr\big(A^*E_{ij}AB^*E_{ji}B)\big)\\
    &=\sum_{i,j=1}^d\tr\big(AB^*E_{ji}BA^*E_{ij}\big)
    =\sum_{i,j=1}^d\tr\big(\ranko{AB^*e_j}{AB^*e_i}E_{ij}\big)\\
    &=\sum_{i,j=1}^d\ip{AB^*e_i,e_i}\ip{e_j,AB^*e_j}
    =\abs{\tr(AB^*)}^2
    =\abs{\tr(A^*B)}^2=\abs{\ip{A,B}}^2,
    \end{align*}
 proving \eqref{Ad-Ip-relation}.
 \end{proof}

Recall that $\mathcal{MU}_d$ denotes the set of all mixed unitary channels on $\M{d}$. Define the map $\Delta_d \colon \mathcal{MU}_d \to \mathbb{R}$ by
\begin{equation*}
    \Delta_d(\Phi) := \max_{U \in \mathcal{U}(d)} \langle \mathrm{Ad}_U, \Phi \rangle - \langle \Phi, \Phi \rangle.
\end{equation*}
This map is clearly continuous. Furthermore, since the set $\mathcal{MU}_d$ is compact and convex, by Theorem~\ref{Marcus-Ree-noncom}, the range of $\Delta_d$ is a closed interval of the form $[a, b] \subset [0, \infty)$. In the next result, we determine the exact values of $ a$ and $ b$. Before we state it, recall  the completely depolarizing channel $\delta_d$ defined on $\mathbb{M}_d$  by
$$\delta_d(X) := \frac{1}{d}\tr(X)I_d, \quad X \in \M{d}.$$ Observe that for any unital or quantum channel $\Phi$, we have $\ip{\Phi,\delta_d}=1$. Indeed,
\begin{align*}
     \ip{\Phi, \delta_d} 
    \overset{\eqref{inner-pdt-compute}}= \frac{1}{d}\sum_{i,j=1}^d \tr(\Phi(E_{ij}) \tr(E_{ji}) I_d) 
    = \frac{1}{d}\sum_{i=1}^d \tr(\Phi(E_{ii})) = 1.
\end{align*}

\begin{prop}\label{prop-Delta-range}
    For each $d \geq 1$, $\Delta_d(\mathcal {MU}_d) = \left[0, \frac{d^2-1}{4}\right]$, and $\Delta_d^{-1}(\frac{d^2-1}{4})=\{\frac{\mathrm{Ad}_U+\delta_d}{2}: U\in\mathcal{U}(d)\}$.
\end{prop}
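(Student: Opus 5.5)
The plan is to bound $\Delta_d$ via a quadratic identity and then exhibit channels attaining both endpoints; continuity and convexity then give the whole interval. For the lower endpoint, Theorem~\ref{Marcus-Ree-noncom} already gives $\Delta_d\geq 0$. Equality is attained by $\delta_d$, which is mixed unitary (average over the Weyl unitaries, or any unitary $1$-design). Indeed $\ip{\delta_d,\delta_d}=1$ and $\ip{\mathrm{Ad}_U,\delta_d}=1$ by the observation preceding the proposition, so $\Delta_d(\delta_d)=0$. Since $\mathcal{MU}_d$ is connected and $\Delta_d$ is continuous, it remains to show that $\max\Delta_d=\frac{d^2-1}{4}$ and to identify the maximizers.

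For the upper bound, fix $\Phi\in\mathcal{MU}_d$ and write $M:=\max_U\ip{\mathrm{Ad}_U,\Phi}$. I would prove $\ip{\Phi,\Phi}\geq \frac{(M-1)^2}{d^2-1}+1$. Set $\Psi=\Phi-\delta_d$. Using $\ip{\Phi,\delta_d}=1=\ip{\delta_d,\delta_d}$, we get $\ip{\Phi,\Phi}=1+\norm{\Psi}^2$. Choose $U_0$ attaining $M$. Then $\ip{\mathrm{Ad}_{U_0}-\delta_d,\Psi}=M-1-1+1=M-1$, and by Lemma~\ref{U-W-IP}, $\norm{\mathrm{Ad}_{U_0}-\delta_d}^2=d^2-2+1=d^2-1$. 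Cauchy--Schwarz gives $(M-1)^2\leq (d^2-1)\norm{\Psi}^2$, which is the claimed bound. Hence
\[
\Delta_d(\Phi)\leq M-1-\frac{(M-1)^2}{d^2-1}\leq \frac{d^2-1}{4},
\]
where the last step maximizes the quadratic $t-t^2/(d^2-1)$ at $t=\frac{d^2-1}{2}$. Equality forces two things: $M-1=\frac{d^2-1}{2}$, and equality in Cauchy--Schwarz, i.e. $\Psi=c(\mathrm{Ad}_{U_0}-\delta_d)$. Pairing with $\mathrm{Ad}_{U_0}-\delta_d$ gives $c(d^2-1)=M-1$, so $c=\tfrac12$ and $\Phi=\frac{\mathrm{Ad}_{U_0}+\delta_d}{2}$.

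For the converse, $\Phi=\frac{\mathrm{Ad}_U+\delta_d}{2}$ is mixed unitary. Computing with Lemma~\ref{U-W-IP}, $\ip{\Phi,\Phi}=\frac{d^2+3}{4}$ and $\ip{\mathrm{Ad}_V,\Phi}=\frac{|\tr(V^*U)|^2+1}{2}$. The latter is maximized at $V=U$ with value $\frac{d^2+1}{2}$, since $|\tr(V^*U)|\leq d$. The difference is $\frac{d^2-1}{4}$, so the upper endpoint is attained and the maximizers are exactly as stated.

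The main subtlety is the equality case. Cauchy--Schwarz equality must be applied with the particular maximizing $U_0$, and we must check that $M$ equals the maximum for this $\Phi$. The converse computation confirms that it does, which closes the argument. The case $d=1$ is trivial, with interval $\{0\}$.
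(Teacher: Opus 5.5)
Your proof is correct, but it uses Cauchy--Schwarz in a different direction from the paper. The paper completes the square around $\tfrac12\mathrm{Ad}_{U_0}$, writing $\Delta_d(\Phi)=\frac{d^2}{4}-\norm{\Phi-\tfrac12\mathrm{Ad}_{U_0}}^2$. It then applies Cauchy--Schwarz once against $\delta_d$, using $\ip{\Phi-\tfrac12\mathrm{Ad}_{U_0},\delta_d}=\tfrac12$ and $\norm{\delta_d}=1$. The extremal case follows at once from $\Phi-\tfrac12\mathrm{Ad}_{U_0}\in\mathbb{C}\delta_d$, and no separate treatment of $d=1$ is needed.

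You instead center at $\delta_d$ and project $\Phi-\delta_d$ onto $\mathrm{Ad}_{U_0}-\delta_d$. Your inequality $\ip{\Phi,\Phi}\geq 1+\frac{(M-1)^2}{d^2-1}$ is exactly the minimum of $\norm{\Phi}^2$ over all maps satisfying the two linear constraints $\ip{\Phi,\delta_d}=1$ and $\ip{\mathrm{Ad}_{U_0},\Phi}=M$; the minimum is attained on the span of $\delta_d$ and $\mathrm{Ad}_{U_0}$. The paper's intermediate bound $\ip{\Phi,\Phi}\geq M-\frac{d^2-1}{4}$ is only the tangent line of your parabola at $M-1=\frac{d^2-1}{2}$.

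Your route costs an extra one-variable optimization, two equality conditions to track, and the separate $d=1$ case, since you divide by $d^2-1$. In return it gives the sharper, $M$-dependent estimate $\Delta_d(\Phi)\leq (M-1)-\frac{(M-1)^2}{d^2-1}$. This shows, for example, that $\Delta_d(\Phi)$ can be close to $\frac{d^2-1}{4}$ only if $M$ is close to $\frac{d^2+1}{2}$.

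The subtlety raised in your last paragraph is not a real issue. The forward inclusion uses only that $U_0$ attains the maximum for the given $\Phi$, and the reverse inclusion is your direct computation for $\frac{\mathrm{Ad}_U+\delta_d}{2}$.
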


\begin{proof}
Fix $\Phi \in \mathcal{MU}_d$, and let $U_0$ be an unitary matrix such that $\max_{U \in \mathcal{U}(d)} \langle \mathrm{Ad}_U, \Phi \rangle = \langle \mathrm{Ad}_{U_0}, \Phi \rangle$. Then,
\begin{align}
\label{delta-d-phi-exp}
    \Delta_d(\Phi) &= \langle \mathrm{Ad}_{U_0}, \Phi \rangle - \langle \Phi, \Phi \rangle  
    = \langle \mathrm{Ad}_{U_0}, \Phi \rangle - \langle \Phi, \Phi \rangle + \frac{1}{4}\langle \mathrm{Ad}_{U_0}, \mathrm{Ad}_{U_0} \rangle - \frac{1}{4}\langle \mathrm{Ad}_{U_0}, \mathrm{Ad}_{U_0} \rangle \notag\\
    &= \frac{1}{4}\langle \mathrm{Ad}_{U_0}, \mathrm{Ad}_{U_0} \rangle - \left\langle \Phi - \frac{\mathrm{Ad}_{U_0}}{2}, \Phi - \frac{\mathrm{Ad}_{U_0}}{2} \right\rangle 
    \overset{\eqref{Ad-Ip-relation}}= \frac{d^2}{4} - \left\langle \Phi - \frac{\mathrm{Ad}_{U_0}}{2}, \Phi - \frac{\mathrm{Ad}_{U_0}}{2} \right\rangle.
\end{align}
Since $\langle \Psi, \delta_d \rangle = 1$ for any unital quantum channel $\Psi$, by the Cauchy-Schwarz inequality, we get
\begin{align}
\label{Cauchy-schwarz}
    \frac{1}{4} = \left| \left\langle \Phi - \frac{\mathrm{Ad}_{U_0}}{2}, \delta_d \right\rangle \right|^2 
    \leq \left\| \Phi - \frac{\mathrm{Ad}_{U_0}}{2} \right\|^2 \langle \delta_d, \delta_d \rangle
    = \left\langle \Phi - \frac{\mathrm{Ad}_{U_0}}{2}, \Phi - \frac{\mathrm{Ad}_{U_0}}{2} \right\rangle.
\end{align}
Consequently, by \eqref{delta-d-phi-exp}, we have
\begin{equation*}
    \Delta_d(\Phi) \leq  \frac{d^2 - 1}{4},
\end{equation*}
for all $\Phi \in \mathcal{MU}_d$, which implies that $\Delta_d(\mathcal{MU}_d) \subseteq \left[0, \frac{d^2-1}{4}\right]$. Furthermore, for any $U\in\mcl{U}(d)$,
\begin{align*}
    \Delta_d\left(\frac{\delta_d + \mathrm{Ad}_U}{2}\right) 
    &= \max_{W \in \mathcal{U}(d)} \left\langle \mathrm{Ad}_W, \frac{\delta_d + \mathrm{Ad}_U}{2} \right\rangle - \left\langle \frac{\delta_d + \mathrm{Ad}_U}{2}, \frac{\delta_d + \mathrm{Ad}_U}{2} \right\rangle \\
    &= \max_{W \in \mathcal{U}(d)} \left( \frac{1}{2}\langle \mathrm{Ad}_W, \delta_d \rangle + \frac{1}{2}\langle \mathrm{Ad}_W, \mathrm{Ad}_U \rangle \right) - \frac{1}{4}\big( \langle \delta_d, \delta_d \rangle + 2\langle \delta_d, \mathrm{Ad}_U \rangle + \langle \mathrm{Ad}_U, \mathrm{Ad}_U \rangle \big) \\
    & \overset{\eqref{Ad-Ip-relation}}=\Big(\frac{1}{2} + \frac{1}{2} \max_{W \in \mathcal{U}(d)}|\tr(W^*U)|^2\Big)  - \frac{1}{4}( 1 + 2 + \abs{\tr(U^*U)}^2)\\
    &=\frac{1}{2}(1 + d^2) - \frac{1}{4}(3 + d^2) = \frac{d^2 - 1}{4},
\end{align*}
and also $\Delta_d(\delta_d)=\max_{U \in \mathcal{U}(d)} \langle \mathrm{Ad}_U, \Phi \rangle - \langle \delta_d, \delta_d \rangle=1-1=0$. This shows that $0,\frac{d^2-1}{4}\in \Delta_d(\mathcal{MU}_d)$. Since $\Delta_d(\mathcal{MU}_d)$ is a compact convex set in $\left[0,\frac{d^2-1}{4}\right]$ containing $0$ and $\frac{d^2-1}{4}$, we get that  $\Delta_d(\mathcal{MU}_d)=\left[0, \frac{d^2-1}{4}\right]$. Now, suppose $\Phi\in\Delta_d^{-1}(\frac{d^2-1}{4})$. Then by \eqref{delta-d-phi-exp}, $\left\langle \Phi - \frac{\mathrm{Ad}_{U_0}}{2}, \Phi - \frac{\mathrm{Ad}_{U_0}}{2} \right\rangle=\frac{1}{4}$, which means equality holds in the Cauchy-Schwarz inequality \eqref{Cauchy-schwarz}.  Therefore, $ \Phi - \frac{\mathrm{Ad}_{U_0}}{2} =c\delta_d$, for some $c\in\mbb{C}$. Since $\Phi$ is unital, we get $c=\frac{1}{2}$ and hence $\Phi=\frac{\mathrm{Ad}_{U_0}+\delta_d}{2}$. This completes the proof.
\end{proof}

Motivated by the classical definition of Erd\H{o}s matrices (see, \cite[pg.~2]{Rag25}), we define Erd\H{o}s channels as below.
\begin{defn}\label{def-Erdos-channel}
    A mixed unitary channel $\Phi: \M{d} \rightarrow  \M{d}$ is said to be an \it{ Erd\H{o}s channel} if we have the equality in the equation \eqref{Marcus-Ree-noncom-Ineq}. That is, $\ip{\Phi, \Phi} = \max_{U \in \mcl{U}(d)} \ip{\mathrm{Ad}_U, \Phi}.$ 
    Equivalently, $\Delta_d(\Phi)=0$.
\end{defn}
\begin{rmk}\label{rmk-Erdos-Erdos channel}
    Let $A$ be a doubly stochastic matrix. By Proposition~\ref{prop-doubly-mixed}, $\Phi_{A}$ is a mixed unitary channel. Then, by \eqref{eq-classical-quantum-ip-relation} and \eqref{eq-classical-quantum-ip-1}, $\ip{\Phi_A,\Phi_A}=\ip{A,A}$ and $\max_{U\in\mcl{U}(d)}\ip{\mathrm{Ad}_U,\Phi_A}=\max_{P\in\mcl{P}(d)}\ip{P,A}$. Thus, $\Phi_A$ is an Erd\H{o}s channel if and only $A$ is an Erd\H{o}s matrix.
\end{rmk}
The well-known completely depolarizing channel $\delta_d$ is an example of an Erd\H{o}s channel. We have also seen that $\Delta_d(\frac{\mathrm{Ad}_U+\delta_d}{2})=\frac{d^2-1}{4}$, for every unitary $U$. Thus, we have examples of non-Erd\H{o}s channels, namely $\frac{\mathrm{Ad}_U+\delta_d}{2}$ for every unitary $U$.
Before presenting more examples of mixed unitary channels that are not Erd\H{o}s, we first establish a necessary condition that every Erd\H{o}s channel must satisfy.
\begin{lem}\label{erdos-kraus-attain}
Let $\Phi$ be a mixed unitary channel given by $\Phi = \sum_{j=1}^n \lambda_j \mathrm{Ad}_{U_j}$. Suppose $\Phi$ is Erd\H{o}s. Then 
\begin{align*}
    \ip{\Phi,\Phi} = \langle \Phi, \mathrm{Ad}_{U_j} \rangle, \quad j = 1, 2, \ldots, n.
\end{align*}
\end{lem}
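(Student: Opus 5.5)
The plan is to rerun the chain of inequalities from the proof of Theorem~\ref{Marcus-Ree-noncom} and observe that the Erd\H{o}s condition forces every step in it to be an equality. We may assume $\lambda_j>0$ for all $j$. Any term with $\lambda_j=0$ contributes nothing to the decomposition, so the lemma is only meaningful, and presumably intended, for indices with $\lambda_j>0$; we restrict to those.

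First, we use bilinearity of the inner product (the Choi matrices are linear in the map, and the inner product is real on CP maps). Writing $M:=\max_{U\in\mcl{U}(d)}\ip{\mathrm{Ad}_U,\Phi}$, we get
\begin{equation*}
\ip{\Phi,\Phi}=\sum_{j=1}^n\lambda_j\ip{\mathrm{Ad}_{U_j},\Phi}\leq \sum_{j=1}^n\lambda_j M = M.
\end{equation*}
Each $\ip{\mathrm{Ad}_{U_j},\Phi}$ is a nonnegative real number bounded by $M$. This holds because $\ip{\mathrm{Ad}_{U_j},\Phi}=\sum_k\lambda_k\abs{\tr(U_j^*U_k)}^2$ by Lemma~\ref{U-W-IP}.

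Next, we apply the Erd\H{o}s hypothesis $\ip{\Phi,\Phi}=M$. It gives $\sum_j\lambda_j\big(M-\ip{\mathrm{Ad}_{U_j},\Phi}\big)=0$. This is a sum of nonnegative terms with strictly positive weights, so each term vanishes: $\ip{\mathrm{Ad}_{U_j},\Phi}=M=\ip{\Phi,\Phi}$ for every $j$. Finally, by symmetry of the real inner product, $\ip{\Phi,\mathrm{Ad}_{U_j}}=\ip{\mathrm{Ad}_{U_j},\Phi}$, which is the claimed identity.

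There is no real obstacle here. The only points to check are that the inner products are real, so that $\ip{\Phi,\mathrm{Ad}_{U_j}}=\ip{\mathrm{Ad}_{U_j},\Phi}$ and the inequalities make sense, and that the positivity of the weights $\lambda_j$ is handled as described above.
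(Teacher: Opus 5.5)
Your proof is correct and is the same argument as the paper's: write $M-\ip{\Phi,\Phi}=\sum_j\lambda_j\bigl(M-\ip{\mathrm{Ad}_{U_j},\Phi}\bigr)$, observe that it vanishes by the Erd\H{o}s hypothesis although every term is nonnegative, and conclude term by term. Your restriction to indices with $\lambda_j>0$ is a genuine and necessary sharpening: the paper goes from ``each summand is zero'' to $\ip{\mathrm{Ad}_{U_j},\Phi}=M$ without comment, and for $d\geq 2$ the decomposition $\id=1\cdot\id+0\cdot\mathrm{Ad}_U$ with $\tr(U)=0$ shows that the statement fails for zero weights.
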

\begin{proof}
Set $M : = \max_{U \in \mcl{U}(d)} \ip{\mathrm{Ad}_U, \Phi}.$ Then, it follows that
\begin{align*}
0\leq\sum_{j=1}^n \lambda_j (M - \langle \mathrm{Ad}_{U_j}, \Phi \rangle) = M - \sum_{j=1}^n \lambda_j \langle\mathrm{Ad}_{U_j}, \Phi \rangle = M -  \ip{\Phi, \Phi} = 0,
\end{align*}
where the last equality follows from the the fact that $\Phi$ is Erd\H{o}s. This shows that each summand is zero. That is, $\langle \mathrm{Ad}_{U_j},\Phi \rangle=M=\ip{\Phi,\Phi}$ for all $j$.
\end{proof}

Using the lemma above, we  provide a family of mixed unitary channels that are not Erd\H{o}s.
\begin{eg}
Let $d \geq 2$, and let $U \in \mathcal{U}(d)$ be such that $\tr(U) = 0$. For any $\lambda\in(0,\frac{1}{2})\cup (\frac{1}{2},1)$, consider the unital channel $\Phi_\lambda:=\lambda \id + (1-\lambda)\mathrm{Ad}_U$ on $\M{d}$.
By definition, $\Phi_\lambda$ is a mixed unitary channel. Now, we will show that $\Phi_\lambda$ is not an Erd\H{o}s channel. To see this, since $\tr(U) = 0,$ it follows that  
\begin{align*}
\ip{\Phi_\lambda, \Phi_\lambda}&= \lambda^2\ip{\id, \id} +   2\lambda (1-\lambda) \ip{\id, \mathrm{Ad}_{U}} + (1-\lambda)^2 \ip{\mathrm{Ad}_{U}, \mathrm{Ad}_{U}}\\
&\overset{\eqref{Ad-Ip-relation}}= \lambda^2 d^2 + 2\lambda(1-\lambda) |\tr(U)|^2 + (1-\lambda)^2 |\tr(U^*U)|^2 = d^2(\lambda^2+(1-\lambda)^2).
\end{align*}
On the other hand, 
\begin{align*}
\ip{\id, \Phi_\lambda}= \lambda \ip{\id, \id} +  (1-\lambda) \ip{\id, \mathrm{Ad}_{U}}
\overset{\eqref{Ad-Ip-relation}}= \lambda d^2 + (1-\lambda) |\tr(U)|^2 = \lambda d^2.
\end{align*}
Therefore, if $\Phi_\lambda$ is Erd\H{o}s, then by Lemma~\ref{erdos-kraus-attain}, $\lambda d^2 = d^2(\lambda^2+(1-\lambda)^2)$. This forces that $\lambda = \frac{1}{2}$, which contradicts the fact that $\lambda\in(0,\frac{1}{2})\cup (\frac{1}{2},1)$.  Thus, we conclude that $\Phi_\lambda$ is not Erd\H{o}s.
\end{eg}

    For any unital quantum channel $\Phi$ and $U,W\in\M{d}$, by using \cite[Lemma 1.1]{ChLi23}, $
C_{\mathrm{Ad}_U\circ\Phi\circ\mathrm{Ad}_W}=(W^{\T}\otimes U)^*C_{\Phi}(W^{\T}\otimes U)$.
    Therefore, if $U, W \in \mcl U(d)$, then
    \begin{align*}
\ip{\mathrm{Ad}_U\circ\Phi\circ\mathrm{Ad}_W,\mathrm{Ad}_U\circ\Phi\circ\mathrm{Ad}_W}=\tr\big(C_{\mathrm{Ad}_U\circ\Phi\circ\mathrm{Ad}_W}C_{\mathrm{Ad}_U\circ\Phi\circ\mathrm{Ad}_W}\big)
       =\tr\big(C_{\Phi}C_{\Phi}\big)=\ip{\Phi,\Phi},
    \end{align*}
    and
    \begin{align*}
\max_{V\in\mcl{U}(d)}\ip{\mathrm{Ad}_V,\mathrm{Ad}_U\circ\Phi\circ\mathrm{Ad}_W}=\max_{V\in\mcl{U}(d)}\ip{\mathrm{Ad}_{U^*}\circ\mathrm{Ad}_V\circ\mathrm{Ad}_{W^*},\Phi}
        &=\max_{V\in\mcl{U}(d)}\ip{\mathrm{Ad}_{W^*VU^*},\Phi}\\
        &=\max_{Z\in\mcl{U}(d)}\ip{\mathrm{Ad}_Z,\Phi}.
    \end{align*}
This shows that 
\begin{equation} \label{Erdos-preserve-equivalence}
\text{$\Phi$ is Erd\H{o}s if and only if $\mathrm{Ad}_U \circ \Phi \circ \mathrm{Ad}_W$ is Erd\H{o}s for any unitaries $U,W$.}  
\end{equation}
This observation leads to a natural equivalence relation on unital channels as follows. 
\begin{defn}\label{equivalence-reln}
    Let $\Phi, \Psi$ be two unital quantum channels on $\M{d}$. We say that $\Phi$ and $\Psi$ are unitarily equivalent if there exists $U, W \in \mcl U(d)$ such that $\Phi = \mathrm{Ad}_U \circ \Psi \circ \mathrm{Ad}_W$.
\end{defn}
It is worth noting from Proposition~\ref{prop-Delta-range} that mixed unitary channels $\Phi$ satisfying $\Delta_d(\Phi) = \frac{d^2-1}{4}$ admit only a single equivalence class. We next investigate mixed unitary channels $\Phi$ satisfying $\Delta_d(\Phi) = 0$. Equivalently, classify all the Erd\H{o}s channels up to unitary equivalence on $\M{d}$.
        \section{Erd\H{o}s property of conditional expectations}\label{S4}

In this section, we investigate the Erd\H{o}s property of trace-preserving conditional expectation associated with a unital $C^*$-subalgebra of $\mathbb{M}_d$. We begin by recalling the formal definition of a conditional expectation.

\begin{defn}\label{conditional-exp}
    Let $\mcl{A}$ be a unital $C^*$-subalgebra of $\M{d}$. A linear map $\Psi$ on $\M{d}$ is said to be a \emph{conditional expectation} onto $\mcl{A}$ if \begin{enumerate}[label=(\roman*)]
        \item $\Psi$ is unital, CP map;
        \item $\Psi(A)=A \quad \text{for all}~A\in\mcl{A}$;
        \item $\Psi(\M{d})=\mcl{A}$.
    \end{enumerate}
\end{defn}
It is well known that any conditional expectation map $\Psi$ is an idempotent ($\Psi^2=\Psi$), self-dual ($\Psi^*=\Psi$) map which has a bi-module property:
    $\Psi(XYZ)=X\Psi(Y)Z$ for $X,Z\in\mcl{A}, Y\in\M{d}$.
Note that a conditional expectation need not be unique. However, if $\Psi$ is also trace-preserving, then $\Psi$ must be unique and equal to the orthogonal projection of $\M{d}$ onto $\mcl{A}$ with respect to the Hilbert-Schmidt inner product, denoted as $\mbf{E}_{\mcl{A}}$. It is known  \cite{KJRM23} that $\mbf{E}_{\mcl{A}}$ is a mixed unitary channel.
In particular, for the unital $ C^*$-subalgebra $\mcl{A}=\mbb{C}I_d$, the unique trace-preserving conditional expectation onto $\mbb{C}I_d$ is given by the completely depolarizing channel  $\delta_d$. Since $\delta_d$ is an Erd\H{o}s channel, it is then natural to ask whether, for a given unital $C^*$-subalgebra $\mcl{A}$ of $\M{d}$, is it true that the unique trace-preserving conditional expectation map $\mbf{E}_{\mcl{A}}$  Erd\H{o}s?  
We establish an affirmative answer to this question in Theorem~\ref{cond-exp-erdos}. Before we state and prove it, we need some preparatory results.

Let $\mcl{A}$ be a unital $C^*$-subalgebra of $\M{d}$. Without loss of generality, assume that $$\mcl{A}=\oplus_{k=1}^\ell (I_{m_k}\otimes \M{n_k}),$$ where 
   $I_{m_k}\otimes \M{n_k}:=\{I_{m_k}\otimes X\in\M{m_kn_k}: X\in\M{n_k}\},$ and $n_k,m_k\in\mbb{N}$ such that $\sum_{k=1}^\ell n_k m_k=d$. Let $\mcl{A}'$ denote the commutant of $\mcl{A}$ in $\M{d}$. Then, $\mcl{A}' = \oplus_{k=1}^\ell (\M{m_k}\otimes I_{n_k})$. Let $\mathcal{U}(\mathcal{A}')$ denote the group of unitaries in $\mathcal{A}'$. Then, it is compact and equal to $\oplus_{k=1}^\ell (\mcl{U}(m_k) \otimes I_{n_k})$. Furthermore, $\mathcal{U}(\mathcal{A}')$ is naturally isomorphic to the compact group $\prod_{k=1}^\ell \mcl{U}(m_k)$. Let $\mu$ be the normalized Haar measure on $\mcl{U}(\mcl{A}')$. Under the above identification, $\mu$ is equal to the product measure $\mu_1\otimes\mu_2\otimes\cdots\mu_\ell$, where $\mu_k$ is the normalized Haar measure on $\mcl{U}(m_k),1\leq k\leq \ell$. Consequently, for every continuous function $f:\mcl{U}(\mcl{A}')\to \mbb{C}$ we have
   \begin{align*}
       \int\limits_{U\in\mcl{U}(\mcl{A}')}f(U) d\mu(U)=\int\limits_{U_1\in\mcl{U}(m_1)}\int\limits_{U_2\in\mcl{U}(m_2)}\cdots\int\limits_{U_\ell\in\mcl{U}(m_\ell)}f(U)d\mu_1(U_1)d\mu_2(U_2)\cdots d\mu_\ell(U_\ell)
   \end{align*}
 In particular, if $f(U)=f_1(U_1)f_2(U_2)\cdots f_\ell(U_\ell)$, where $f_i:\mcl{U}(m_i)\to\mbb{C}$ is continuous. Then,
 \begin{align*}
    \int\limits_{U\in\mcl{U}(\mcl{A}^{'})}f(U) d\mu(U)= \prod_{k=1}^\ell\int\limits_{U_k\in\mcl{U}(m_k)} f_k(U_k)d\mu_k(U_k).
 \end{align*}

    Given a unital quantum channel $\Phi$,  we associate a linear space $S_{\Phi}$, called the \emph{Kraus space} of $\Phi$;  defined as $S_{\Phi}:=\lspan\{K_j: 1\leq j\leq m\}$, where $\{K_1,K_2,\cdots, K_m\}$ is  any set of Kraus operators of $\Phi$. This definition is well defined. Indeed, if $\{K_1,K_2,\cdots, K_m\}$ and $\{L_1,L_2,\cdots,L_n\}$ be two set of Kraus operators for $\Phi$ then $\lspan\{K_1,K_2,\cdots,K_m\}=\lspan\{L_1,L_2,\cdots,L_n\}$. Therefore, $S_{\Phi}$ is independent of a given Kraus representation of $\Phi$.
    
\begin{lem}\label{lem-mix-unit-int-rep}
    Let $\Phi$ be a mixed unitary channel and let $S_{\Phi}$ denote its Kraus space. Then there exists a probability measure $\mu$ on $S_{\Phi}\cap\mcl{U}(d)$ such that 
    \begin{align*}
        \Phi(X)=\int_{U\in S_{\Phi}\cap\mcl{U}(d)}U^*XU d\mu(U), \quad  X\in\M{d}.
    \end{align*}
\end{lem}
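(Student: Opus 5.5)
Since $\Phi$ is mixed unitary, we may write $\Phi=\sum_{j=1}^n\lambda_j\mathrm{Ad}_{U_j}$ with $\lambda_j>0$ and $\sum_j\lambda_j=1$, discarding any zero weights. The plan is to show that every unitary $U_j$ lies in $S_{\Phi}$. We then take $\mu:=\sum_{j=1}^n\lambda_j\delta_{U_j}$, a finitely supported probability measure on the compact set $S_{\Phi}\cap\mcl{U}(d)$. With this choice, the integral $\int U^*XU\,d\mu(U)$ is exactly $\sum_j\lambda_jU_j^*XU_j=\Phi(X)$ for every $X\in\M{d}$, which is the required representation.

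The inclusion $U_j\in S_\Phi$ follows from the well-definedness of the Kraus space noted just before the lemma. The family $\{\sqrt{\lambda_j}\,U_j\}_{j=1}^n$ is a set of Kraus operators for $\Phi$, because $\mathrm{Ad}_{\sqrt{\lambda_j}U_j}=\lambda_j\mathrm{Ad}_{U_j}$. Since $S_\Phi$ does not depend on which Kraus family is chosen, we get
$$S_\Phi=\lspan\{\sqrt{\lambda_j}U_j:1\le j\le n\}=\lspan\{U_j:1\le j\le n\},$$
where the second equality uses $\lambda_j>0$. Therefore each $U_j$ lies in $S_\Phi\cap\mcl{U}(d)$.

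The only point needing care is that the weights be strictly positive, since a unitary attached to a zero weight need not lie in $S_\Phi$; this is why such terms are discarded at the start. The measurability and compactness issues are trivial because $\mu$ is a finite sum of point masses, so there is no real obstacle.
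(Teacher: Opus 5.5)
Your proposal is correct and follows the paper's proof: write $\Phi=\sum_j\lambda_j\mathrm{Ad}_{U_j}$, take $\mu=\sum_j\lambda_j\delta_{U_j}$, and note that each $U_j$ lies in $S_\Phi$. Your justification of $U_j\in S_\Phi$, namely that $\{\sqrt{\lambda_j}U_j\}$ is a Kraus family and the zero weights are discarded, makes explicit a step the paper only asserts.
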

\begin{proof}
    Let $\Phi$ be a mixed unitary channel given by $\Phi = \sum_{j=1}^m \lambda_j \mathrm{Ad}_{U_j}$. Define a  measure $\mu := \sum_{j=1}^m \lambda_j \delta_j$, where $\delta_j$ is the point mass measure supported at $U_j$. Then, $\mu(\mathcal U(d)) = \sum_{j=1}^m \lambda_j=1$, so that $\mu$ is a probability measure on $\mcl{U}(d)$. Since each $U_j \in S_\Phi \cap \mathcal U(d)$ for all $j =1, \ldots, m$, we have that $\mu$ is a probability measure on $S_{\Phi}\cap\mcl{U}(d)$. Furthermore, 
  $$
  \int_{S_\Phi \cap \mathcal U(d)} U^*XU d\mu(U) = \sum_{j=1}^m \lambda_j \int_{S_\Phi \cap \mathcal U(d)} U^*XU d\delta_j(U) = \sum_{j=1}^m \lambda_j U_j^*XU_j=\Phi(X).
  $$
  This completes the proof.
\end{proof}

 \begin{rmk}\cite{KJRM23}
     Let $\mu$ be the normalized (left invariant) Haar measure on the unitary group $\mcl{U}(d)$. Then
     \begin{align}\label{eq-deltad-int-form}
         \int_{U \in \mcl{U}(d)}\mathrm{Ad}_{U} d\mu(U)=\delta_d.
     \end{align}
     This also follows from Lemma~\ref{lem-mix-unit-int-rep} along with the fact that $S_{\delta_d} = \M{d}$. 
 \end{rmk}
 
\begin{lem}\label{eq-Haar-int-zero}
   Let $\mu$ be the normalized (left invariant) Haar measure on the unitary group $\mcl{U}(d)$ and let $A\in\M{d}$. Then,
 $$
  \int\limits_{U\in\mcl{U}(d)} \tr(AU)d\mu(U) =0 \quad \text{and} \quad   \int\limits_{U\in\mcl{U}(d)} \abs{\tr(AU)}^2d\mu(U)=\frac{1}{d}\tr(A^*A).
  $$
\end{lem}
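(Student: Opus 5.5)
Both identities come from the invariance of the normalized Haar measure $\mu$ on $\mcl{U}(d)$; the second one reduces to computing a single second moment.

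\emph{First identity.} Set $T:=\int_{U\in\mcl{U}(d)} U\,d\mu(U)\in\M{d}$. Left invariance gives, for every $W\in\mcl{U}(d)$,
$$WT=\int W U\,d\mu(U)=\int U\,d\mu(U)=T.$$
Take $W=-I_d$; then $-T=T$, so $T=0$. Since the trace and multiplication by $A$ are linear and continuous, they commute with the integral. Hence
$$\int \tr(AU)\,d\mu(U)=\tr(AT)=0.$$

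\emph{Second identity.} Write $\abs{\tr(AU)}^2=\tr(AU)\overline{\tr(AU)}=\tr(AU)\,\tr(U^*A^*)$. Expanding in entries gives
$$\abs{\tr(AU)}^2=\sum_{i,j,k,l} a_{ij}\,u_{ji}\,\overline{a_{kl}}\,\overline{u_{lk}}.$$
So everything reduces to the moments $\int u_{ji}\overline{u_{lk}}\,d\mu(U)$. I would compute these with the depolarizing identity \eqref{eq-deltad-int-form}, $\int U^*XU\,d\mu(U)=\tfrac1d\tr(X)I_d$. Applying it with $X=E_{lj}$ and reading off the $(k,i)$ entry gives
$$\int \overline{u_{lk}}\,u_{ji}\,d\mu(U)=\frac1d\,\delta_{lj}\,\delta_{ki}.$$
This uses $(U^*E_{lj}U)_{ki}=\overline{u_{lk}}\,u_{ji}$.

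Substituting back forces $l=j$ and $k=i$. Therefore
$$\int\abs{\tr(AU)}^2\,d\mu(U)=\frac1d\sum_{i,j}\abs{a_{ij}}^2=\frac1d\tr(A^*A).$$
An alternative route is through the Choi/Kraus picture. By Lemma~\ref{U-W-IP}, $\abs{\tr(AU)}^2=\ip{\mathrm{Ad}_{A^*},\mathrm{Ad}_U}$, and integrating gives $\ip{\mathrm{Ad}_{A^*},\delta_d}$, which one computes directly from \eqref{inner-pdt-compute} to be $\tfrac1d\tr(AA^*)$. I would mention this as a cross-check.

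\emph{Main obstacle.} The only delicate point is the index bookkeeping in identifying the entry of $U^*E_{lj}U$ with the right product of $\overline{u}$ and $u$, and justifying \eqref{eq-deltad-int-form}, which is cited from \cite{KJRM23}. If needed, \eqref{eq-deltad-int-form} can itself be proved by invariance. The map $X\mapsto\int U^*XU\,d\mu$ has image commuting with every unitary, so its image is $\mbb{C}I_d$. It also preserves the trace, which fixes the scalar to be $\tfrac1d\tr(X)$.
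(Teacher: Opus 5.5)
Your proof is correct and matches the paper: the first identity is the same invariance argument with $W=-I_d$, and for the second the paper uses exactly your ``cross-check'' route, writing $\abs{\tr(AU)}^2=\ip{\mathrm{Ad}_{A^*},\mathrm{Ad}_U}$ via Lemma~\ref{U-W-IP} and integrating against \eqref{eq-deltad-int-form} to get $\ip{\mathrm{Ad}_{A^*},\delta_d}=\frac1d\tr(A^*A)$. Your primary entrywise computation is just the coordinate form of the same use of \eqref{eq-deltad-int-form}, and your index identity $(U^*E_{lj}U)_{ki}=\overline{u_{lk}}\,u_{ji}$ is right.
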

 \begin{proof}
     By invariance of Haar measure, it follows for any $W\in\mcl{U}(d)$ that 
     $$\int\limits_{U\in\mcl{U}(d)} \tr(AU)d\mu(U)=\int\limits_{U\in\mcl{U}(d)} \tr(AWU)d\mu(U).$$ In particular, by taking $W=-I_d$, we get $\int\limits_{U\in\mcl{U}(d)} \tr(AU)d\mu(U)=-\int\limits_{U\in\mcl{U}(d)} \tr(AU)d\mu(U)$, which proves the first identity. To see the remaining part, note that
     \begin{align*}
         \ip{\mathrm{Ad}_{A^*},\delta_d}=\sum_{i,j=1}^d\tr\big(\mathrm{Ad}_{A^*}(E_{ij})\delta_d(E_{ji})\big)=\frac{1}{d}\sum_{i=1}^d\tr(AE_{ii}A^*)
         =\frac{1}{d}\sum_{i=1}^d\ip{Ae_i,Ae_i}=\frac{1}{d}\tr(A^*A).
     \end{align*} 
Therefore,
\begin{align*}
       \int\limits_{U\in\mcl{U}(d)} \abs{\tr(AU)}^2d\mu(U) \overset{\eqref{Ad-Ip-relation}} =  \int\limits_{U\in\mcl{U}(d)} \ip{\mathrm{Ad}_{A^*},\mathrm{Ad}_U}d\mu(U) &= \ip{\mathrm{Ad}_{A^*},\int\limits_{U\in\mcl{U}(d)} \mathrm{Ad}_U d\mu(U)}\\ &\overset{\eqref{eq-deltad-int-form}}=  \ip{\mathrm{Ad}_{A^*},\delta_d} =  \frac{1}{d}\tr(A^*A).
     \end{align*}
This completes the proof.
 \end{proof}

 \begin{lem}
    Let $\Phi:\M{d_1}\to\M{d_2}$ be a unital completely positive  map. Then $\norm{\Phi(X)}_2^2\leq \tr(\Phi(X^*X))$ for all $X \in \M{d_1}$. In particular, 
    \begin{align}\label{eq-partial-trace-contr}
        \norm{(\id_m \otimes \tr_n)(Y)}_2^2\leq n\norm{Y}_2^2, \quad Y\in\M{m}\otimes \M{n}.
    \end{align}
\end{lem}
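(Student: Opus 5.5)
The natural route is Kadison--Schwarz combined with trace preservation, with a caveat about the hypotheses. A unital completely positive map satisfies the Kadison--Schwarz inequality
\[
\Phi(X)^*\Phi(X)\leq \Phi(X^*X), \qquad X\in\M{d_1}.
\]
Taking traces gives $\norm{\Phi(X)}_2^2=\tr(\Phi(X)^*\Phi(X))\leq \tr(\Phi(X^*X))$, since the trace is a positive functional. So the general inequality follows directly once Kadison--Schwarz is available. To make this self-contained, I would derive Kadison--Schwarz from Stinespring: write $\Phi(X)=V^*\pi(X)V$ with $V^*V=I$, or equivalently use the Kraus form $\Phi=\sum_i\mathrm{Ad}_{V_i}$ with $\sum_i V_i^*V_i=I$. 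Then $\Phi(X^*X)-\Phi(X)^*\Phi(X)=V^*\pi(X)^*(I-VV^*)\pi(X)V\geq 0$, because $VV^*$ is a projection.

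For the partial trace inequality \eqref{eq-partial-trace-contr}, the map $\id_m\otimes\tr_n$ is completely positive but not unital; it sends $I_{mn}$ to $nI_m$. The fix is to normalize and set $\Psi:=\frac{1}{n}(\id_m\otimes\tr_n)$, which is a unital CP map from $\M{m}\otimes\M{n}$ to $\M{m}$. Applying the first part to $\Psi$ gives
\[
\frac{1}{n^2}\norm{(\id_m\otimes\tr_n)(Y)}_2^2\leq \frac{1}{n}\tr\big((\id_m\otimes\tr_n)(Y^*Y)\big)=\frac1n\tr(Y^*Y)=\frac1n\norm{Y}_2^2.
\]
Here I use that $\tr_m\circ(\id_m\otimes\tr_n)=\tr_{mn}$. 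Multiplying by $n^2$ yields $\norm{(\id_m\otimes\tr_n)(Y)}_2^2\leq n\norm{Y}_2^2$, as claimed.

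The only delicate point is bookkeeping of the normalization in the partial trace. In particular, one must track that the trace of $\Psi(Y^*Y)$ picks up exactly one factor $1/n$, which is what leaves the constant $n$. A direct check by Cauchy--Schwarz on the blocks is also possible and serves as a sanity test. Write $Y=\sum_{k}Y_{k}\otimes E$-type blocks, so that $(\id\otimes\tr_n)(Y)=\sum_{j=1}^n Y_{jj}$ in block notation. Then $\norm{\sum_{j=1}^n Y_{jj}}_2^2\leq n\sum_j\norm{Y_{jj}}_2^2\leq n\norm{Y}_2^2$.
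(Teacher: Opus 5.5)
Your proposal is correct and follows the paper's proof. The paper applies Kadison--Schwarz, takes traces, and then applies the general inequality to the normalized unital map $\frac{1}{n}(\id_m\otimes\tr_n)$, using $\tr_m\circ(\id_m\otimes\tr_n)=\tr_{mn}$. Your Stinespring derivation of Kadison--Schwarz and your blockwise Cauchy--Schwarz check, $\norm{\sum_j Y_{jj}}_2^2\leq n\sum_j\norm{Y_{jj}}_2^2\leq n\norm{Y}_2^2$, are valid additions; the latter is in fact a self-contained elementary proof of the partial-trace bound.
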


\begin{proof}
    By the Kadison–Schwarz inequality, we have $\Phi(X)^*\Phi(X)\leq \Phi(X^*X)$ for all $X \in \M{d_1}$. Taking the trace on both sides yields
    \begin{align*}
        \norm{\Phi(X)}_2^2=\tr(\Phi(X)^*\Phi(X))\leq \tr(\Phi(X^*X)).
    \end{align*}
    For the remaining part, consider the unital completely positive map $\Phi: \M{mn} \to \M{m}$ given by $\Phi = \frac{1}{n}(\id_m \otimes \tr_n)$. Then
    \begin{align*}
        \norm{\frac{1}{n}(\id_m \otimes \tr_n)(Y)}_2^2 \leq \tr\left(\frac{1}{n}(\id_m \otimes \tr_n)(Y^*Y)\right) = \frac{1}{n}\tr(Y^*Y) = \frac{1}{n}\norm{Y}_2^2, \quad Y \in \M{mn}.
    \end{align*}
    This completes the proof.
\end{proof}
\begin{thm}
\label{cond-exp-erdos}
    Let $\mcl{A}\subseteq \M{d}$ be a unital $C^*$-algebra and let $\mbf{E}_{\mcl{A}}$ be the trace-preserving conditional expectation onto $\mcl{A}$. Then $\mbf{E}_{\mcl{A}}$ is an Erd\H{o}s channel.
\end{thm}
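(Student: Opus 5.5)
The plan is to write $\mathbf{E}_{\mathcal{A}}$ as a Haar average of unitary conjugations over $\mathcal{U}(\mathcal{A}')$. Both sides of the Erd\H{o}s equality then become second moments $\int|\tr(W^*U)|^2\,d\mu(U)$. Lemma~\ref{eq-Haar-int-zero} and \eqref{eq-partial-trace-contr} will compute and bound these moments.

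\emph{Step 1: integral representation.} Let $\mu$ be the normalized Haar measure on $\mathcal{U}(\mathcal{A}')$ and put $\Psi:=\int_{\mathcal{U}(\mathcal{A}')}\mathrm{Ad}_U\,d\mu(U)$. Then $\Psi$ is a unital, trace-preserving, completely positive map, and it fixes every element of $\mathcal{A}$, since $\mathcal{A}$ commutes with $\mathcal{U}(\mathcal{A}')$. By invariance of $\mu$, every $\Psi(X)$ commutes with every unitary in $\mathcal{A}'$. The unitaries span $\mathcal{A}'$, so $\Psi(X)\in\mathcal{A}''=\mathcal{A}$. Hence $\Psi$ is a trace-preserving conditional expectation onto $\mathcal{A}$, and by uniqueness $\Psi=\mathbf{E}_{\mathcal{A}}$. (Alternatively one can use \cite{KJRM23}.)

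\emph{Step 2: reduce both sides to moments.} By Lemma~\ref{U-W-IP} and linearity, for any $W\in\mathcal{U}(d)$,
\[
\ip{\mathrm{Ad}_W,\mathbf{E}_{\mathcal{A}}}=\int_{\mathcal{U}(\mathcal{A}')}\abs{\tr(W^*U)}^2\,d\mu(U).
\]
If $V\in\mathcal{U}(\mathcal{A}')$, invariance of $\mu$ shows this equals $c:=\int\abs{\tr(U)}^2\,d\mu(U)$. Integrating over $V$ gives $\ip{\mathbf{E}_{\mathcal{A}},\mathbf{E}_{\mathcal{A}}}=c$ as well. It therefore remains to prove that $\int\abs{\tr(W^*U)}^2\,d\mu\le c$ for every unitary $W$.

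\emph{Step 3: the bound (the main computation).} Write $U=\oplus_k (U_k\otimes I_{n_k})$ with $U_k\in\mathcal{U}(m_k)$, and let $W_{kk}$ be the $k$-th diagonal block of $W^*$. Then
\[
\tr(W^*U)=\sum_k \tr(A_kU_k), \qquad A_k:=(\id_{m_k}\otimes\tr_{n_k})(W_{kk}).
\]
Expand $\abs{\sum_k\tr(A_kU_k)}^2$. Because $\mu$ is the product measure $\mu_1\otimes\cdots\otimes\mu_\ell$, each cross term factorizes, and it vanishes by the first identity of Lemma~\ref{eq-Haar-int-zero}. The second identity then gives
\[
\int\abs{\tr(W^*U)}^2\,d\mu=\sum_k\frac{1}{m_k}\tr(A_k^*A_k).
\]
By \eqref{eq-partial-trace-contr}, $\tr(A_k^*A_k)\le n_k\norm{W_{kk}}_2^2$. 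Since $W_{kk}$ is a compression of a unitary, $\norm{W_{kk}}_2^2\le m_kn_k$. The total is therefore at most $\sum_k n_k^2$. Taking $W=I$ gives $A_k=n_kI_{m_k}$ and the value $\sum_k n_k^2$ exactly, so $c=\sum_kn_k^2$ and the maximum is attained at $W=I$. This yields $\ip{\mathbf{E}_{\mathcal{A}},\mathbf{E}_{\mathcal{A}}}=\max_W\ip{\mathrm{Ad}_W,\mathbf{E}_{\mathcal{A}}}$.

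The step needing most care is Step 3: one must check that the cross terms genuinely vanish and keep track of the block bookkeeping. Step 1 is standard but should be stated cleanly.
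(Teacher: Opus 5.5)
Your proposal is correct and is essentially the paper's proof: you use the same Haar-integral representation over $\mathcal{U}(\mathcal{A}')$, the same block decomposition $\tr(W^*U)=\sum_k\tr(A_kU_k)$ with cross terms killed by Lemma~\ref{eq-Haar-int-zero}, and the same bound $\frac{1}{m_k}\tr(A_k^*A_k)\le \frac{n_k}{m_k}\|W_{kk}\|_2^2\le n_k^2$ via \eqref{eq-partial-trace-contr}. The only differences are minor: you derive the integral formula from the double commutant theorem instead of citing \cite{KJRM23}, and you get $\langle\mathbf{E}_{\mathcal{A}},\mathbf{E}_{\mathcal{A}}\rangle=c$ from Haar invariance rather than from $\mathbf{E}_{\mathcal{A}}^*=\mathbf{E}_{\mathcal{A}}=\mathbf{E}_{\mathcal{A}}^2$; if you derive the formula yourself, add one line that the integral lies in the compact convex set $\mathcal{MU}_d$, since Erd\H{o}s channels are by definition mixed unitary.
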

\begin{proof}
     Without loss of generality, assume that $\mcl{A}=\oplus_{k=1}^\ell (I_{m_k}\otimes \M{n_k})$, where $n_k,m_k\in\mbb{N}$ such that $\sum_{k=1}^\ell n_k m_k=d$. Then $\mbb{C}^d=\bigoplus_{j=1}^\ell (\mbb{C}^{n_j}\otimes \mbb{C}^{m_j})$, and any $W\in\M{d}$ is given by $W=[W_{ij}]_{i,j=1}^\ell \in\M{d}$, where $W_{ij}\in\M{m_in_i,m_jn_j}=\B{\mbb{C}^{n_j}\otimes \mbb{C}^{m_j},\mbb{C}^{n_i}\otimes \mbb{C}^{m_i}}$. Since any $U\in\mcl{U}(\mcl{A}^{'})$ is of the form $\oplus_{k=1}^\ell(U_k\otimes I_{n_k})$ for some $U_k\in\mcl{U}(m_k),~k=1, \ldots, \ell$, we have
  \begin{align*}
      \tr(WU)=\tr\Big(\sum_{k=1}^\ell W_{kk}U_{k}\Big)
             =\sum_{k=1}^\ell \tr(W_kU_k),
  \end{align*}
  where $W_k=(\id_{m_k}\otimes \tr_{n_k})(W_{kk})\in\M{m_k}$, and
\begin{align}\label{eq-tr-WU}
  \abs{\tr(WU)}^2=\sum_{j,k=1}^\ell \tr(W_jU_j)\tr(W_k^*U_k^*)=\sum_{k=1}^\ell \abs{\tr(W_kU_k)}^2+ \sum_{\substack{j, i=1 \\j\neq i}}^\ell \tr(W_jU_j)\tr(W_i^*U_i^*).
\end{align}
 Now, from \cite[Lemma 12]{KJRM23}, the map $\mbf{E}_{\mcl{A}}$ is a mixed unitary channel, and is given by
    \begin{align*}
        \mbf{E}_{\mcl{A}}(X)=\int\limits_{U\in\mcl{U}(\mcl{A}^{'})} U^*XU d\mu(U), \quad X\in\M{d}.
    \end{align*}
Therefore, for any unitary $W=[W_{ij}]_{i,j=1}^\ell \in\M{d}$, by Lemma~\ref{eq-Haar-int-zero},
    \begin{eqnarray*}
        \ip{\mathrm{Ad}_{W^*},\mbf{E}_{\mcl{A}}}&=&\int\limits_{U\in\mcl{U}(\mcl{A}^{'})}\ip{\mathrm{Ad}_{W^*},\mathrm{Ad}_U} d\mu(U)
         = \int\limits_{U\in\mcl{U}(\mcl{A}^{'})} \abs{\tr(WU)}^2 d\mu(U)\\
        &\overset{\eqref{eq-tr-WU}}=&\sum_{k=1}^\ell \int\limits_{U\in\mcl{U}(\mcl{A}^{'})}\abs{\tr(W_kU_k)}^2 d\mu(U)+ \sum_{\substack{j, i=1 \\j\neq i}}^\ell \int\limits_{U\in\mcl{U}(\mcl{A}^{'})}\tr(W_jU_j)\tr(W_i^*U_i^*) d\mu(U)\\
        &=&\sum_{k=1}^\ell \int\limits_{U\in\mcl{U}(m_k)}\abs{\tr(W_kU_k)}^2 d\mu(U_k)+\sum_{\substack{j, i=1 \\j\neq i}}^\ell \int\limits_{U\in\mcl{U}(\mcl{A}^{'})}\tr(W_jU_j)\tr(W_i^*U_i^*) d\mu(U)\\
       &=& \sum_{k=1}^{\ell}\frac{\tr(W_k^*W_k)}{m_k} +\sum_{\substack{j, i=1 \\j\neq i}}^\ell \Big(\int\limits_{U\in\mcl{U}(m_j)} \tr(W_jU_j) d\mu(U_j)\Big) \Big(\int\limits_{U\in\mcl{U}(m_i)} \tr(W_i^*U_i^*) d\mu(U_i)\Big)\\
        &=& \sum_{k=1}^{\ell}\frac{1}{m_k} \tr(W_k^*W_k).
    \end{eqnarray*}
 In particular, if $W=I_d$, then $\sum_{k=1}^{\ell}\frac{1}{m_k} \tr(W_k^*W_k)=\sum_{k=1}^{\ell}\frac{1}{m_k} \tr(n_k^2I_{m_k})=\sum_{k=1}^{\ell}n_k^2$. This together with the fact $\mbf{E}_{\mcl{A}}^*=\mbf{E}_{\mcl{A}}=\mbf{E}_{\mcl{A}}^2$ gives that
    \begin{eqnarray}\label{eq-ip of E A-dim A}
 \ip{\mbf{E}_{\mcl{A}},\mbf{E}_{\mcl{A}}}=\ip{\id,\mbf{E}_{\mcl{A}}}=\sum_{k=1}^{\ell}n_k^2=\dim(\mcl{A}).
    \end{eqnarray}
   On the other hand, if $W$ is any unitary, that is  $W^*W=I_d$. In this identity, by comparing $(k,k)^{th}$ entry, we get that
$\sum_{j=1}^\ell W_{jk}^*W_{jk}=I_{m_k}\otimes I_{n_k}$. Therefore, 
\begin{align*}
  \tr(W_{kk}^*W_{kk})\leq \tr\Big(\sum_{j=1}^\ell W_{jk}^*W_{jk}\Big)=\tr(I_{m_k}\otimes I_{n_k})=n_km_k.
\end{align*}
Consequently,
\begin{align*}
    \ip{\mathrm{Ad}_{W^*},\mbf{E}_{\mcl{A}}}=\sum_{k=1}^{\ell}\frac{1}{m_k} \tr(W_k^*W_k)
    \overset{\eqref{eq-partial-trace-contr}}\leq \sum_{k=1}^{\ell}\frac{n_k}{m_k} \tr(W_{kk}^*W_{kk})\leq \sum_{k=1}^{\ell}\frac{n_k}{m_k} n_km_k=\sum_{k=1}^{\ell}n_k^2.
\end{align*}
Thus, $\max_{W \in \mcl{U}(d)}\ip{\mathrm{Ad}_{W},\mbf{E}_{\mcl{A}}}=\sum_{k=1}^{\ell}n_k^2=\ip{\mbf{E}_{\mcl{A}},\mbf{E}_{\mcl{A}}}$, completing the proof.
\end{proof}

\begin{eg}(Pinching channel)
Let $\Phi$ be a unital channel on $\M{d}$ given by
\begin{equation*}
    \Phi(X) = \sum_{j=1}^n P_j X P_j,
\end{equation*}
where $\{P_j\}_{j=1}^n$ is a collection of mutually orthogonal projections satisfying $\sum_{j=1}^n P_j = I_d$. It is known that $\Phi$ is a mixed unitary channel (see \cite{Wat18}). We claim that $\Phi$ is an Erd\H{o}s channel. To see this, consider the algebra 
\begin{equation*}
    \mcl{A} := \{X \in \M{d} : X P_j = P_j X \text{ for all } j=1, \ldots, n\}.
\end{equation*}
Clearly, $\Phi$ fixes elements of $\mcl{A}$ and the range of $\Phi$ is equal to $\mcl{A}$. Thus, $\Phi$ is a trace-preserving conditional expectation onto $\mcl{A}$ and hence, by the uniqueness of the trace-preserving conditional expectation, $\mbf{E}_{\mcl{A}} = \Phi$. Now, by Theorem~\ref{cond-exp-erdos}, the map $\Phi$ is an Erd\H{o}s channel. 
\end{eg}

Our next objective in this section is to show that the unique trace-preserving conditional expectation $\mbf{E}_{\mcl{A}}$ is locally isolated within its Choi-rank class. Before stating and proving this result, we first establish a couple of lemmas and propositions.

Given a unital $C^*$-subalgebra $\mcl{A}$ of $\M{d}$, we let $F(\mcl{A})$ denote the set of all unital quantum channels that fixes elements of $\mcl{A}$. When $\Phi\in F(\mcl{A})$, the noncommutative Marcus–Ree  inequality \eqref{Marcus-Ree-noncom-Ineq} can be sharpened as follows.

\begin{prop} \label{Refined-marcus-ree}
Let $\mcl{A}$ be a unital $C^*$-subalgebra of $\M{d}$. If $\Phi \in F(\mcl{A})$ is a mixed unitary channel, then 
\begin{align*}
\ip{\Phi, \Phi} \leq \max_{U \in \mcl{U}(\mcl{A}')} \ip{\mathrm{Ad}_U, \Phi}.
\end{align*}
Here $\mcl{U}(\mcl{A}'):=\mcl{U}(d)\cap \mcl{A}'$. Furthermore, if $\Phi$ is an Erd\H{o}s channel, then $\ip{\Phi, \Phi} = \max_{U \in \mcl{U}(\mcl{A}')} \ip{\mathrm{Ad}_U, \Phi}.$
\end{prop}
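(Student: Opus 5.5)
The plan is to reduce the statement to the proof of Theorem~\ref{Marcus-Ree-noncom}. The extra ingredient is that every unitary in a mixed unitary decomposition of $\Phi\in F(\mcl{A})$ must lie in $\mcl{A}'$. Once that is known, the averaging argument of Theorem~\ref{Marcus-Ree-noncom} runs over $\mcl{U}(\mcl{A}')$ instead of $\mcl{U}(d)$.

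\textbf{Step 1 (Kraus operators commute with $\mcl{A}$).} Let $\Phi=\sum_{j=1}^m\mathrm{Ad}_{V_j}$ be any Kraus decomposition of a unital, trace-preserving CP map. Then $\sum_j V_j^*V_j=I$ (unitality) and $\sum_j V_jV_j^*=I$ (trace preservation). Fix $X\in\mcl{A}$. Since $\mcl{A}$ is a $C^*$-algebra, $X^*$ and $X^*X$ also lie in $\mcl{A}$, so $\Phi$ fixes $X$, $X^*$ and $X^*X$. Expand
\begin{align*}
\sum_{j=1}^m [V_j,X]^*[V_j,X]=\sum_{j=1}^m\big(X^*V_j^*V_jX - X^*V_j^*XV_j - V_j^*X^*V_jX + V_j^*X^*XV_j\big).
\end{align*}
Evaluate the four sums separately:
\begin{itemize}
\item the first is $X^*X$, by unitality;
\item the second is $X^*\Phi(X)=X^*X$;
\item the third is $\Phi(X^*)X=X^*X$;
\item the fourth is $\Phi(X^*X)=X^*X$.
\end{itemize}
Hence the sum vanishes. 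A sum of positive operators is zero only if each term is zero, so $[V_j,X]=0$ for every $j$. Since $X\in\mcl{A}$ was arbitrary, every Kraus operator lies in $\mcl{A}'$, and therefore so does the whole Kraus space $S_\Phi$.

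\textbf{Step 2 (the inequality).} Write $\Phi=\sum_{j}\lambda_j\mathrm{Ad}_{U_j}$ with $\lambda_j>0$. The operators $\sqrt{\lambda_j}U_j$ form a set of Kraus operators for $\Phi$, so Step~1 gives $U_j\in\mcl{U}(\mcl{A}')$ for all $j$. Exactly as in the proof of Theorem~\ref{Marcus-Ree-noncom},
\begin{align*}
\ip{\Phi,\Phi}=\sum_j\lambda_j\ip{\mathrm{Ad}_{U_j},\Phi}\le \max_{U\in\mcl{U}(\mcl{A}')}\ip{\mathrm{Ad}_U,\Phi}.
\end{align*}
The maximum is attained because $\mcl{U}(\mcl{A}')$ is compact and $U\mapsto\ip{\mathrm{Ad}_U,\Phi}$ is continuous.

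\textbf{Step 3 (the Erd\H{o}s case).} Since $\mcl{U}(\mcl{A}')\subseteq\mcl{U}(d)$, we get
\begin{align*}
\max_{U\in\mcl{U}(\mcl{A}')}\ip{\mathrm{Ad}_U,\Phi}\le\max_{U\in\mcl{U}(d)}\ip{\mathrm{Ad}_U,\Phi}.
\end{align*}
If $\Phi$ is Erd\H{o}s, the right-hand side equals $\ip{\Phi,\Phi}$. By Step~2, $\ip{\Phi,\Phi}$ is in turn bounded above by the left-hand side, so all three quantities are equal.

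The only nontrivial point is Step~1, the commutation of the Kraus operators with the fixed algebra. It uses both unitality and trace preservation; this is where the hypothesis that $\Phi$ is a channel, and not merely a unital CP map, enters.
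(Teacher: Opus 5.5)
Your proof is correct and is essentially the paper's own argument: the same identity $\sum_j [V_j,X]^*[V_j,X]=0$ forces each $U_j$ into $\mcl{A}'$, followed by the same averaging step and the same comparison of the two maxima for the Erd\H{o}s case. One correction to your closing remark: Step~1 uses only unitality, $\sum_j V_j^*V_j=I$, together with the fact that $X^*$ and $X^*X$ lie in $\mcl{A}$ and are therefore fixed. The relation $\sum_j V_jV_j^*=I$ is never used, so trace preservation plays no role there; it would matter only if closure of a fixed-point set under $X\mapsto X^*X$ had to be proved, whereas here it comes from $\mcl{A}$ being a $C^*$-algebra.
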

\begin{proof}
Let $\Phi$ be a mixed unitary channel given by $\Phi = \sum_{j=1}^\ell \lambda_j \mathrm{Ad}_{U_j}$, where $\lambda_j > 0$, $\sum_{j=1}^\ell \lambda_j = 1$, and $U_j \in \mcl{U}(d)$. By setting $V_j := \sqrt{\lambda_j} U_j$ we write $\Phi=\sum_{j=1}^\ell \mathrm{Ad}_{V_j}$ with $\sum_{j=1}^\ell V_j^* V_j = I$. Since $\Phi \in F(\mcl{A})$, for any $A \in \mcl{A}$, we have
\begin{align*}
\sum_{j=1}^\ell (V_j A - A V_j)^* (V_j A - A V_j) &= \sum_{j=1}^\ell \left( A^* V_j^* V_j A - A^* V_j^* A V_j - V_j^* A^* V_j A + V_j^* A^* A V_j \right) \\
&= A^* \big( \sum_{j=1}^\ell V_j^* V_j \big) A - A^* \Phi(A) - \Phi(A^*) A + \Phi(A^* A) =0.
\end{align*}
Because each individual term $(V_j A - A V_j)^*(V_j A - A V_j)$ is positive semi-definite, their sum is zero if and only if each term is identically zero. Thus, $V_j A - A V_j = 0$ for each $j$ and for all $A\in\mcl{A}$. This implies that each $V_j$ and, consequently, each unitary $U_j$ must be in $\mcl{A}'$.  Thus, we obtain
\begin{align*}
\ip{\Phi, \Phi} = \sum_{j=1}^\ell \lambda_j \ip{\mathrm{Ad}_{U_j}, \Phi} \leq \sum_{j=1}^\ell \lambda_j \max_{U \in \mcl{U}(\mcl{A}')} \ip{\mathrm{Ad}_U, \Phi} = \max_{U \in \mcl{U}(\mcl{A}')} \ip{\mathrm{Ad}_U, \Phi}.
\end{align*}
Furthermore, if $\Phi$ is an Erd\H{o}s channel, then  
\begin{align*}
    \max_{U \in \mcl{U}(d)} \ip{\mathrm{Ad}_U, \Phi}&=\ip{\Phi,\Phi}
    \leq  \max_{U \in \mcl{U}(\mcl{A}')} \ip{\mathrm{Ad}_U, \Phi}
    \leq \max_{U \in \mcl{U}(d)} \ip{\mathrm{Ad}_U, \Phi}.
\end{align*}
This concludes the proof.
\end{proof}

\begin{lem}\label{lem-ip-grea}
    Let $\mcl{A}$ be a unital $C^*$-subalgebra of $\M{d}$. Suppose $\Phi\in F(\mcl{A})$. Then, 
    \begin{eqnarray}
    \label{exp-ip-channel}
        \ip{\Phi, \mbf{E}_{\mcl{A}}} = \dim(\mcl{A}).
    \end{eqnarray}
    Consequently, $\dim(\mcl{A}) \leq \ip{\Phi,\Phi}$, and if $\Phi\neq \mbf{E}_{\mcl{A}}$, then $\dim(\mcl{A})< \ip{\Phi,\Phi}$.
\end{lem}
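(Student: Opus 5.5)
The plan is to first compute $\ip{\Phi,\mbf{E}_{\mcl{A}}}$ using the self-duality and idempotence of $\mbf{E}_{\mcl{A}}$, and then read off the two consequences from the Cauchy--Schwarz inequality in the Hilbert space $\B{\M{d},\M{d}}$ with the inner product $\ip{\Phi,\Psi}=\tr(C_\Phi^*C_\Psi)$.

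\emph{Step 1: rewrite the inner product as a trace.} By \eqref{inner-pdt-compute}, $\ip{\Phi,\Psi}=\sum_{i,j}\tr(\Phi(E_{ij})^*\Psi(E_{ij}))$. If $\{F_\alpha\}$ is any orthonormal basis of $\M{d}$ for the Hilbert--Schmidt inner product, this equals $\sum_\alpha \ip{\Phi(F_\alpha),\Psi(F_\alpha)}_{HS}$. Indeed, for real-linear maps this is the Hilbert--Schmidt pairing of operators on $\M{d}$, and it is independent of the choice of orthonormal basis. Hence
\[
\ip{\Phi,\Psi}=\mathrm{Tr}(\Phi^\dagger\Psi),
\]
where $\Phi^\dagger$ is the Hilbert--Schmidt adjoint and $\mathrm{Tr}$ is the trace of operators on $\M{d}$.

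\emph{Step 2: evaluate at $\Psi=\mbf{E}_{\mcl{A}}$.} The map $\mbf{E}_{\mcl{A}}$ is the orthogonal projection $P$ onto $\mcl{A}$. Choose an orthonormal basis $\{F_\alpha\}_{\alpha=1}^{\dim\mcl{A}}$ of $\mcl{A}$ and extend it to an orthonormal basis of $\M{d}$. Then
\[
\ip{\Phi,\mbf{E}_{\mcl{A}}}=\sum_\alpha \ip{\Phi(F_\alpha),P(F_\alpha)}=\sum_{\alpha\le\dim\mcl{A}}\ip{\Phi(F_\alpha),F_\alpha}.
\]
Since $\Phi$ fixes $\mcl{A}$, we have $\Phi(F_\alpha)=F_\alpha$, so each of these terms equals $1$. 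The sum is therefore $\dim(\mcl{A})$, which proves \eqref{exp-ip-channel}. The only care needed is that $\ip{\Phi(F),P(F)}$ should be read as $\tr(\Phi(F)^*P(F))$, consistent with \eqref{inner-pdt-compute}.

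\emph{Step 3: the inequalities.} Taking $\Phi=\mbf{E}_{\mcl{A}}$ in Step 2, which is legitimate because $\mbf{E}_{\mcl{A}}\in F(\mcl{A})$, gives $\ip{\mbf{E}_{\mcl{A}},\mbf{E}_{\mcl{A}}}=\dim\mcl{A}$. This agrees with \eqref{eq-ip of E A-dim A}. Cauchy--Schwarz then gives
\[
\dim\mcl{A}=\ip{\Phi,\mbf{E}_{\mcl{A}}}\le \norm{\Phi}\,\sqrt{\dim\mcl{A}},
\]
so $\ip{\Phi,\Phi}\ge \dim\mcl{A}$. Alternatively, expand
\[
0\le \ip{\Phi-\mbf{E}_{\mcl{A}},\Phi-\mbf{E}_{\mcl{A}}}=\ip{\Phi,\Phi}-2\dim\mcl{A}+\dim\mcl{A}=\ip{\Phi,\Phi}-\dim\mcl{A}.
\]
This second form is cleaner and also yields the strict inequality: equality holds iff $\Phi=\mbf{E}_{\mcl{A}}$, because $\ip{\cdot,\cdot}$ is positive definite (it is the Frobenius inner product of the Choi matrices, and the Choi matrix determines the map).

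\emph{Main obstacle.} The only delicate point is the basis-independence in Step 1, that is, identifying $\ip{\cdot,\cdot}$ with $\mathrm{Tr}(\Phi^\dagger\Psi)$. To check it, write $\Phi(E_{ij})$ in a basis and note that $\sum_{ij}\ip{\Phi(E_{ij}),\Psi(E_{ij})}$ is the Hilbert--Schmidt inner product of $\Phi$ and $\Psi$ as operators on $\M{d}$, which is invariant under unitary change of basis. Everything else is routine.
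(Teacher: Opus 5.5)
Your proof is correct, but it reaches \eqref{exp-ip-channel} by a different and more elementary route than the paper.

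The paper first observes that $\mathbf{E}_{\mathcal{A}}\circ\Phi$ is a trace-preserving conditional expectation onto $\mathcal{A}$, so it equals $\mathbf{E}_{\mathcal{A}}$ by uniqueness. It then uses self-duality to move $\mathbf{E}_{\mathcal{A}}$ across the inner product, giving $\ip{\Phi,\mathbf{E}_{\mathcal{A}}}=\ip{\mathbf{E}_{\mathcal{A}},\id}$. The value $\dim(\mathcal{A})$ of the latter is imported from the Haar-integral computation in the proof of Theorem~\ref{cond-exp-erdos}. The inequalities then come from Cauchy--Schwarz, and its equality case together with unitality forces $\Phi=\mathbf{E}_{\mathcal{A}}$.

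You instead evaluate $\mathrm{Tr}(\Phi^\dagger\mathbf{E}_{\mathcal{A}})$ in an orthonormal basis adapted to $\mathcal{A}$. This uses only that $\mathbf{E}_{\mathcal{A}}$ is the Hilbert--Schmidt projection onto $\mathcal{A}$ and that $\Phi$ is the identity on $\mathcal{A}$. Your route buys three things:
\begin{enumerate}
\item The identity, and the Pythagorean relation $\ip{\Phi,\Phi}-\dim(\mathcal{A})=\norm{\Phi-\mathbf{E}_{\mathcal{A}}}^2$ that gives both inequalities and strictness at once, hold for every linear map fixing $\mathcal{A}$ pointwise.
\item It uses no complete positivity, unitality or trace preservation. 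It also needs no uniqueness theorem for conditional expectations and no block decomposition $\mathcal{A}=\oplus_k(I_{m_k}\otimes\M{n_k})$.
\item It makes explicit the identification $\ip{\Phi,\Psi}=\mathrm{Tr}(\Phi^\dagger\Psi)$. The paper uses this tacitly when it transfers $\mathbf{E}_{\mathcal{A}}$ to the other slot.
\end{enumerate}
The paper's route, for its part, reuses machinery already built in that section and records $\mathbf{E}_{\mathcal{A}}\circ\Phi=\mathbf{E}_{\mathcal{A}}$ along the way.

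One small wording fix: in Step~1 the maps are complex-linear. It is for complex-linear maps that $\sum_\alpha\tr(\Phi(F_\alpha)^*\Psi(F_\alpha))$ is independent of the orthonormal basis, so ``real-linear'' should be replaced by ``linear''.
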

\begin{proof}
      Since $\Phi$ fixes the elements of $\mcl{A}$, the composition $\mbf{E}_{\mcl{A}} \circ \Phi$ is a conditional expectation onto $\mcl{A}$. Furthermore, because $\mbf{E}_{\mcl{A}} \circ \Phi$ is trace-preserving, the uniqueness of the trace-preserving conditional expectation implies that $\mbf{E}_{\mcl{A}} \circ \Phi = \mbf{E}_{\mcl{A}}$. Using the self-adjointness property $\mbf{E}_{\mcl{A}} = \mbf{E}_{\mcl{A}}^*$, we obtain
\begin{equation*} 
\ip{\Phi, \mbf{E}_{\mcl{A}}} = \ip{\mbf{E}_{\mcl{A}} \circ \Phi, \id} = \ip{\mbf{E}_{\mcl{A}}, \id}\overset{\eqref{eq-ip of E A-dim A}} = \dim(\mcl{A}),
\end{equation*}
which establishes \eqref{exp-ip-channel}. Combining the identity \eqref{exp-ip-channel} with the Cauchy–Schwarz inequality yields
\begin{equation} \label{dim-cauchy-ineq}
\dim(\mcl{A})^2 = \abs{\ip{\Phi, \mbf{E}_{\mcl{A}}}}^2 \leq \ip{\Phi, \Phi} \ip{\mbf{E}_{\mcl{A}}, \mbf{E}_{\mcl{A}}} = \ip{\Phi, \Phi} \dim(\mcl{A}),
\end{equation}
which simplifies to $\dim(\mcl{A}) \leq \ip{\Phi, \Phi}$. Finally, equality holds in \eqref{dim-cauchy-ineq} if and only if $\Phi$ is a scalar multiple of $\mbf{E}_{\mcl{A}}$. Since both $\Phi$ and $\mbf{E}_{\mcl{A}}$ are unital, it follows immediately that $\Phi = \mbf{E}_{\mcl{A}}$.
\end{proof}

\begin{prop}
     Let $\mcl{A}$ be a unital $C^*$ subalgebra of $\M{d}$ and let $\mbf{E}_{\mcl{A}}$ be the unique trace-preserving conditional expectation onto $\mcl{A}$. Then, the Choi-rank of $\mbf{E}_{\mcl{A}}$ is $\dim(\mcl{A}')$.
\end{prop}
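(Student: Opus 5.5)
The plan is to use the Haar integral representation from \cite[Lemma 12]{KJRM23}, already used in the proof of Theorem~\ref{cond-exp-erdos}:
\begin{align*}
\mbf{E}_{\mcl{A}}(X)=\int_{U\in\mcl{U}(\mcl{A}')}U^*XU\,d\mu(U),
\end{align*}
and to identify the Kraus space $S_{\mbf{E}_{\mcl{A}}}$ with $\mcl{A}'$. The Choi-rank equals $\dim S_\Phi$, since a minimal Kraus family is linearly independent and every Kraus family spans $S_\Phi$. So it suffices to prove $S_{\mbf{E}_{\mcl{A}}}=\mcl{A}'$.

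\textbf{Inclusion $S_{\mbf{E}_{\mcl{A}}}\subseteq\mcl{A}'$.} Since $\mbf{E}_{\mcl{A}}$ fixes $\mcl{A}$, it lies in $F(\mcl{A})$. Take any Kraus decomposition $\mbf{E}_{\mcl{A}}=\sum_j\mathrm{Ad}_{V_j}$. Because $\mbf{E}_{\mcl{A}}$ is unital, $\sum_j V_j^*V_j=I$. The computation in the proof of Proposition~\ref{Refined-marcus-ree} then shows
\begin{align*}
\sum_j (V_jA-AV_j)^*(V_jA-AV_j)=0, \qquad A\in\mcl{A}.
\end{align*}
That computation never used unitarity of the $V_j$, only $\sum_j V_j^*V_j=I$ and that $\Phi$ fixes $A$, $A^*$ and $A^*A$. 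Hence every $V_j$ lies in $\mcl{A}'$.

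\textbf{Inclusion $\mcl{A}'\subseteq S_{\mbf{E}_{\mcl{A}}}$.} I would compute the Choi matrix directly:
\begin{align*}
C_{\mbf{E}_{\mcl{A}}}=\int (I\otimes U)^*\,|\Omega\rangle\langle\Omega|\,(I\otimes U)\,d\mu(U),
\end{align*}
up to a transpose convention, where $\Omega=\sum_i e_i\otimes e_i$. The range of this matrix is the closed span of the vectors $\mathrm{vec}(U)$ for $U\in\mcl{U}(\mcl{A}')$. This holds because the integral of rank-one positives $|v_U\rangle\langle v_U|$ against a measure with full support has range equal to the span of the $v_U$ over the support. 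The Haar measure has full support on $\mcl{U}(\mcl{A}')$, so $S_{\mbf{E}_{\mcl{A}}}=\lspan\,\mcl{U}(\mcl{A}')$. Finally, every element of a finite-dimensional unital $C^*$-algebra is a linear combination of its unitaries: a self-adjoint contraction $H$ equals $\tfrac12\big((H+i\sqrt{1-H^2})+(H-i\sqrt{1-H^2})\big)$, and both summands are unitaries in $\mcl{A}'$. Hence $\lspan\,\mcl{U}(\mcl{A}')=\mcl{A}'$, and the Choi-rank is $\dim\mcl{A}'=\sum_k m_k^2$.

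\textbf{Expected obstacle.} The main difficulty is the support-of-measure step, namely that the range of $\int |v_U\rangle\langle v_U|\,d\mu$ is exactly $\lspan\{v_U\}$. I would argue via orthogonal complements. If $w$ is orthogonal to the range, then $\int|\langle w,v_U\rangle|^2\,d\mu=0$. Continuity of $U\mapsto\langle w,v_U\rangle$ together with full support of Haar measure forces $\langle w,v_U\rangle=0$ for all $U$. The reverse inclusion is immediate, since the integrand's range always lies in that span.

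An alternative route avoids measures altogether. One can explicitly write $\mbf{E}_{\mcl{A}}=\sum_k\sum_{a,b}\tfrac{1}{m_k}\mathrm{Ad}_{E^{(k)}_{ab}\otimes I_{n_k}}$ and check that this is the trace-preserving conditional expectation. Its Kraus operators are a basis of matrix units of $\mcl{A}'$, which gives the count directly.
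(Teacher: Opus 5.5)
Your proof is correct, and it follows a genuinely different route from the paper's.

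The paper argues purely in coordinates. With $\mcl{A}=\oplus_k(I_{m_k}\otimes\M{n_k})$, it writes $\mbf{E}_{\mcl{A}}$ blockwise as $A\mapsto\oplus_k\frac{I_{m_k}}{m_k}\otimes(\tr\otimes\id)(A_{kk})$ and expands $C_{\mbf{E}_{\mcl{A}}}$ in matrix units. After permuting tensor factors it obtains, up to permutation equivalence, $\oplus_k\frac{1}{m_k}I_{m_k}\otimes I_{m_k}\otimes\ranko{f^{(k)}}{f^{(k)}}$, whose rank is $\sum_k m_k^2$. The paper only writes this out for $\ell=2$.

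You instead prove the structural identity $S_{\mbf{E}_{\mcl{A}}}=\mcl{A}'$. You use the Haar-integral formula of \cite{KJRM23}, full support of Haar measure, and the fact that $\mcl{A}'$ is spanned by its unitaries. This has several advantages:
\begin{itemize}
\item It is coordinate-free and treats all $\ell$ at once.
\item It gives more than a rank count. Your first inclusion shows $S_\Phi\subseteq\mcl{A}'$ for every $\Phi\in F(\mcl{A})$. So $\dim(\mcl{A}')$ is the largest possible Choi-rank in $F(\mcl{A})$, which is exactly the extremal case treated in Theorem~\ref{nonErdos-suff-cond}.
\end{itemize}
The cost is reliance on the external integral representation and a measure-support argument. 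The paper's computation, by contrast, is elementary and exhibits the Choi matrix explicitly.

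Two small remarks on your write-up:
\begin{itemize}
\item Your range computation alone already gives $\mathrm{Range}(C_{\mbf{E}_{\mcl{A}}})=\lspan\{(I\otimes U^*)\Omega: U\in\mcl{U}(\mcl{A}')\}$. So the first inclusion is not needed for the rank itself.
\item These vectors depend linearly on $U^*$, and $U\mapsto U^*$ permutes $\mcl{U}(\mcl{A}')$. So the transpose convention you left vague does not affect the count.
\end{itemize}

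Your alternative with Kraus operators $m_k^{-1/2}E^{(k)}_{ab}\otimes I_{n_k}$ is the Kraus-side counterpart of the paper's Choi-side calculation. It is perhaps the shortest proof of all, since it hands you a linearly independent Kraus family of size $\sum_k m_k^2$ directly.
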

\begin{proof}
Without loss of generality, assume that $\mcl{A}=\bigoplus_{k=1}^\ell (I_{m_k}\otimes \M{n_k})$, where the natural numbers $m_k, n_k$ satisfy $\sum_{k=1}^\ell m_kn_k=d$. Then $\mcl{A}'=\bigoplus_{k=1}^\ell (\M{m_k}\otimes I_{n_k})$. For computational simplicity, we restrict our proof to the case $\ell=2$, the general case follows by a similar argument. 

Let $\mathbb C^d=\mcl{H}_1\oplus \mcl{H}_2$, where $\mcl{H}_k=\mbb{C}^{m_k}\otimes \mbb{C}^{n_k}$ for $k=1,2$, and let $$\bigcup_{k=1}^2\{e_r^{(k)}\otimes f_s^{(k)}: 1\leq r\leq m_k, 1\leq s\leq n_k\}$$ be the standard orthonormal basis of $\mbb C^d$. The associated matrix units of $\B{\mbb C^d} = \M{d}$ are then given by $$\bigcup_{j,k=1}^2\Big\{E_{rr'}^{(j,k)}\otimes F_{ss'}^{(j,k)}: 1\leq r\leq m_j,1\leq r'\leq m_k,1\leq s\leq n_j, 1\leq s'\leq n_k\Big\},$$ where $E_{rr'}^{(j,k)}=\ranko{e_r^{(j)}}{e_{r'}^{(k)}}$ and $F_{ss'}^{(j,k)}=\ranko{f_s^{(j)}}{f_{s'}^{(k)}}$. 

Any operator $A \in \B{\mbb C^d}$ can be expressed in the block form 
$A=\Matrix{A_{11}&A_{12}\\A_{21}&A_{22}}$, where $A_{jk}\in\B{\mcl{H}_k,\mcl{H}_j}$ for $j,k=1,2$. Under this decomposition, the map $\mbf{E}_{\mcl{A}}$ is given by 
\begin{align*}    \mbf{E}_{\mcl{A}}\Bigg(\Matrix{A_{11}&A_{12}\\A_{21}&A_{22}}\Bigg)=\Matrix{\frac{I_{m_1}}{m_1}\otimes (\tr\otimes \id)(A_{11})&0\\0&\frac{I_{m_2}}{m_2}\otimes (\tr\otimes \id)(A_{22})}.
\end{align*}
Then the Choi-matrix of $\mbf{E}_{\mcl{A}}$ is given by
\begin{align*}
    C_{\mbf{E}_{\mcl{A}}} &=  \sum_{j,k=1}^2\sum_{r,r'=1}^{m_j,m_k}\sum_{s,s'=1}^{n_j,n_k}E_{rr'}^{(j,k)}\otimes F_{ss'}^{(j,k)}\otimes \mbf{E}_{\mcl{A}}\big(E_{rr'}^{(j,k)}\otimes F_{ss'}^{(j,k)}\big).
\end{align*}
Because the conditional expectation $\mbf{E}_{\mcl{A}}$ completely annihilates the off-diagonal blocks, we have $\mbf{E}_{\mcl{A}}\big(E_{rr'}^{(j,k)}\otimes F_{ss'}^{(j,k)}\big) = 0$ whenever $j \neq k$. For the diagonal blocks, a direct application of the map yields
\begin{align*}
    \mbf{E}_{\mcl{A}}\big(E_{rr'}^{(1,1)}\otimes F_{ss'}^{(1,1)}\big) = \delta_{r,r'} \frac{I_{m_1}}{m_1} \otimes f_{ss'}^{(1,1)} ~~\text{and}~~
    \mbf{E}_{\mcl{A}}\big(E_{rr'}^{(2,2)}\otimes F_{ss'}^{(2,2)}\big) = \delta_{r,r'} \frac{I_{m_2}}{m_2} \otimes f_{ss'}^{(2,2)},
\end{align*}
where $\delta_{r,r'}$ denotes the Kronecker delta and $f_{ss'}^{(j,j)}$ are the standard matrix units of $\M{n_j}$. Substituting these expressions into the formula for $C_{\mbf{E}_{\mcl{A}}}$, the matrix evaluates to
\begin{align*}
    C_{\mbf{E}_{\mcl{A}}} = \Matrix{\sum\limits_{r=1}^{m_1}\sum\limits_{s,s'=1}^{n_1}e_{rr}^{(1,1)}\otimes f_{ss'}^{(1,1)}\otimes \frac{I_{m_1}}{m_1}\otimes f_{ss'}^{(1,1)} & 0 \\ 0 & \sum\limits_{r=1}^{m_2}\sum\limits_{s,s'=1}^{n_2}e_{rr}^{(2,2)}\otimes f_{ss'}^{(2,2)}\otimes \frac{I_{m_2}}{m_2}\otimes f_{ss'}^{(2,2)}}.
\end{align*}
By swapping the second and third tensor factors within each non-zero block entry, we obtain
\begin{align*}
    \Matrix{\Big(\sum\limits_{r=1}^{m_1} e_{rr}^{(1,1)} \otimes \frac{I_{m_1}}{m_1}\Big) \otimes \Big(\sum\limits_{s,s'=1}^{n_1} f_{ss'}^{(1,1)} \otimes f_{ss'}^{(1,1)}\Big) & 0 \\ 0 & \Big(\sum\limits_{r=1}^{m_2} e_{rr}^{(2,2)} \otimes \frac{I_{m_2}}{m_2}\Big) \otimes \Big(\sum\limits_{s,s'=1}^{n_2} f_{ss'}^{(2,2)} \otimes f_{ss'}^{(2,2)}\Big)}.
\end{align*}
Using the identity $\sum_{r} e_{rr}^{(j,j)} = I_{m_j}$ and defining $f^{(j)}:=\sum_{s}f_s^{(j)}\otimes f_s^{(j)}$ for $j=1,2$, this block matrix simplifies to
\begin{align*}
    \Matrix{\frac{I_{m_1}\otimes I_{m_1}}{m_1} \otimes \ranko{f^{(1)}}{f^{(1)}} & 0 \\ 0 & \frac{I_{m_2}\otimes I_{m_2}}{m_2} \otimes \ranko{f^{(2)}}{f^{(2)}}}.
\end{align*}
Thus, the Choi-matrix of $\mbf{E}_{\mcl{A}}$ is permutation equivalent to the matrix above. Consequently, the Choi rank of $\mbf{E}_{\mcl{A}}$ is equal to the rank of this matrix, which is precisely  $\sum_{j=1}^2 m_j^2 = \dim(\mcl{A}')$.
\end{proof}

\begin{lem}\cite{BhDe26} \label{lem-mix uni-ball}
  Let $\mcl{A}$ be a unital $C^*$-subalgebra of $\M{d}$ and let $\mbf{E}_{\mcl{A}}$ be the unique trace-preserving conditional expectation onto $\mcl{A}$. Then, there exists $r_{\mcl{A}} >0$ such that every $\Phi \in F(\mcl{A})$ satisfying $\|\Phi -\mbf{E}_{\mcl{A}}\| \leq r_{\mcl{A}}$ is a mixed unitary channel.
\end{lem}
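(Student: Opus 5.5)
The plan is to realise $\mbf{E}_{\mcl{A}}$ as a point in the \emph{relative interior} of the compact convex set $K_{\mcl{A}}:=\operatorname{conv}\{\mathrm{Ad}_U: U\in\mcl{U}(\mcl{A}')\}$. We then show that every $\Phi\in F(\mcl{A})$ lies in the affine hull of $K_{\mcl{A}}$. Since $K_{\mcl{A}}$ consists of mixed unitary channels, a small ball around $\mbf{E}_{\mcl{A}}$ inside $\operatorname{aff}(K_{\mcl{A}})$ is contained in $K_{\mcl{A}}$. Its radius then serves as $r_{\mcl{A}}$. All norms on the finite-dimensional space $\B{\M{d},\M{d}}$ are equivalent, so the choice of norm only affects the constant.

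\emph{Step 1 (Kraus operators in $\mcl{A}'$).} Let $\Phi\in F(\mcl{A})$ and write $\Phi=\sum_j \mathrm{Ad}_{V_j}$. The computation in the proof of Proposition~\ref{Refined-marcus-ree} uses only $\sum_j V_j^*V_j=I$ and $\Phi|_{\mcl{A}}=\id$, not mixed unitarity. It therefore gives $V_j\in\mcl{A}'$ for all $j$. Since $\Phi$ is also trace preserving, $\sum_jV_jV_j^*=I$. Thus $F(\mcl{A})$ is contained in the affine set $L$ of maps $X\mapsto \sum_{a,b}c_{ab}W_a^*XW_b$, where $\{W_a\}$ is a basis of $\mcl{A}'$, $[c_{ab}]$ is Hermitian, and the two normalisations (unital and trace preserving) hold.

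\emph{Step 2 ($\mbf{E}_{\mcl{A}}$ is in the relative interior).} By \cite[Lemma 12]{KJRM23}, $\mbf{E}_{\mcl{A}}=\int_{\mcl{U}(\mcl{A}')}\mathrm{Ad}_U\,d\mu(U)$ with Haar measure $\mu$ of full support. Suppose $\mbf{E}_{\mcl{A}}$ were on the relative boundary of $K_{\mcl{A}}$. Then some real linear functional $f$, nonconstant on $\operatorname{aff}(K_{\mcl{A}})$, would satisfy $f(\mathrm{Ad}_U)\le f(\mbf{E}_{\mcl{A}})$ for all $U\in\mcl{U}(\mcl{A}')$. Averaging against $\mu$ forces equality $\mu$-a.e. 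By continuity and full support, equality then holds everywhere. This makes $f$ constant on $K_{\mcl{A}}$, a contradiction. Hence some ball $B(\mbf{E}_{\mcl{A}},r)\cap\operatorname{aff}(K_{\mcl{A}})$ lies in $K_{\mcl{A}}$.

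\emph{Step 3 (main obstacle: $F(\mcl{A})\subseteq\operatorname{aff}(K_{\mcl{A}})$).} By duality it suffices to prove the following. Let $\Psi$ be such that $U\mapsto\ip{\Psi,\mathrm{Ad}_U}$ is constant, equal to $c$, on $\mcl{U}(\mcl{A}')$. Then $\ip{\Psi,\Phi}=c$ for every $\Phi\in F(\mcl{A})$. In Choi language, $q(V):=\ip{\Psi,\mathrm{Ad}_V}$ is a Hermitian quadratic form on $\mcl{A}'$. We need that if $q\equiv c$ on unitaries, then $\sum_j q(V_j)=c$ whenever $\sum V_j^*V_j=\sum V_jV_j^*=I$.

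My first approach is to decompose the real-polynomial identity. Consider the two-sided action $V\mapsto UVW$ of $\mcl{U}(\mcl{A}')\times\mcl{U}(\mcl{A}')$ on $\mcl{A}'\cong\oplus_k\M{m_k}$. Averaging $q$ over this action, together with the constancy on unitaries, should show that $q(V)-c\,\tr(V^*V)/d$ is a combination of forms $\tr\big(V^*V\,K\big)-\tr\big(VV^*K'\big)$ with $K,K'$ in the centre of $\mcl{A}'$ and suitable trace-normalisation relations. Each such form sums to a constant under the two normalisation conditions, with total exactly $c$. To identify the forms vanishing on $\mcl{U}(\mcl{A}')$, I would complexify, using that $\mcl{U}(m)$ is Zariski dense in $GL_m$. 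This reduces the question to the polynomial identities $q(U)=c$ valid for $U^*U=I$, which can be analysed blockwise via the irreducible decomposition $U\otimes\ol{U}=\text{trivial}\oplus\text{adjoint}$ and its cross-block analogue $U_j\otimes\ol{U_k}$.

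Once Step 3 is established, Steps 2 and 1 combine to give the lemma with $r_{\mcl{A}}:=r$.
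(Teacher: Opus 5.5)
The paper does not prove this lemma; it is quoted from \cite{BhDe26}. Your argument therefore has to stand on its own, and it has a genuine gap at Step~3.

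Steps~1 and~2 and the duality reformulation are fine. The commutator computation uses only $\Phi(I)=I$ and $\Phi|_{\mcl{A}}=\id$, and the barycentre of a fully supported measure does lie in the relative interior of the convex hull. But in your framework Step~3 is the entire content of the lemma. You only outline it, and the outline rests on a structural claim that is false.

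Take $\mcl{A}=\mbb{C}I_2$ and $\Psi(X)=\tr(X)\sigma_3$ with $\sigma_3=\mathrm{diag}(1,-1)$. Then $q(V)=\ip{\Psi,\mathrm{Ad}_V}=\tr(V^*V\sigma_3)$ vanishes on $\mcl{U}(2)$, so $c=0$. For central $K=kI$, $K'=k'I$, the form $\tr(V^*VK)-\tr(VV^*K')$ equals $(k-k')\tr(V^*V)$. That cannot give both $q(E_{11})=1$ and $q(E_{22})=-1$, and under $\tr K=\tr K'$ it is identically zero.

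The averaging step is what fails. Replacing $q$ by its average $\bar q$ over $V\mapsto UVW$ keeps $\bar q\equiv c$ on unitaries, but in general $\sum_j\bar q(V_j)\neq\sum_j q(V_j)$. The non-central part of $q$, which averaging kills, is exactly what must be controlled on Kraus tuples.

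What is true is this: the Hermitian forms vanishing on $\mcl{U}(\mcl{A}')$ are exactly $\tr(V^*VK)-\tr(VV^*K')$ with $K,K'\in\mcl{A}'$ arbitrary and $\tr K=\tr K'$. With that description your last step goes through. Such a form sums to $\tr K-\tr K'=0$ over any Kraus tuple, while $\frac{c}{d}\tr(V^*V)$ sums to $c$.

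The efficient way to prove this, or to prove $F(\mcl{A})\subseteq\operatorname{aff}(K_{\mcl{A}})$ directly, is the double commutant theorem.
\begin{itemize}
\item The span of $\{U\otimes\ol U: U\in\mcl{U}(\mcl{A}')\}$ is a unital $*$-algebra. Hence it equals $\pi(G)''$, where $\pi(U)=\oplus_{j,k}U_j\otimes\ol{U_k}$ and $G=\prod_k\mcl{U}(m_k)$ acts on $\oplus_{j,k}\mbb{C}^{m_j}\otimes\mbb{C}^{m_k}$.
\item The central torus acts on block $(j,k)$ by $z_j\ol{z_k}$, so the off-diagonal blocks are pairwise inequivalent irreducibles.
\item Each diagonal block is the trivial representation on $\Omega_j=\sum_r e_r\otimes e_r$ plus the adjoint representation on $\Omega_j^\perp$. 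The only equivalences are among the $\ell$ trivial copies.
\item Consequently $\pi(G)''$ consists of the block operators that are arbitrary on the off-diagonal blocks and on each $\Omega_j^\perp$, and that have every $\Omega_j$ as a left and right eigenvector with one common eigenvalue.
\end{itemize}

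Now let $\Phi\in F(\mcl{A})$ have reduced Kraus operators $\widetilde V_i=\oplus_j V_{ij}$. The relations $\sum_iV_{ij}V_{ij}^*=\sum_iV_{ij}^*V_{ij}=I_{m_j}$ say precisely that $T=\sum_i\widetilde V_i\otimes\ol{\widetilde V_i}$ satisfies $T\Omega_j=T^*\Omega_j=\Omega_j$ for all $j$. Hence $T\in\pi(G)''$.

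Up to an adjoint and the injective embedding $\oplus_j\M{m_j}\hookrightarrow\mcl{A}'$, $T$ is the natural representation of $\Phi$. So $\Phi$ is a complex linear combination of the maps $\mathrm{Ad}_U$ with $U\in\mcl{U}(\mcl{A}')$. Hermiticity preservation lets you take the coefficients real, and unitality forces them to sum to $1$. With this in place, your Steps~1 and~2 complete the proof.
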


\begin{thm}
\label{nonErdos-suff-cond}
   Let $\mcl{A}$ be a unital $C^*$-subalgebra of $\M{d}$. Suppose $\Phi\in F(\mcl{A})$ is a mixed unitary channel with Choi-rank equal to $\dim(\mcl{A}')$ and  $0<\norm{\Phi-\mbf{E}_{\mcl{A}}}\leq r_{\mcl{A}}$, where $r_{\mcl{A}}$ is as in Lemma \ref{lem-mix uni-ball}. Then $\Phi$ is not an Erd\H{o}s channel.
\end{thm}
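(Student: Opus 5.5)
The plan is to argue by contradiction: assuming $\Phi$ is Erd\H{o}s, I will write $\Phi$ as a convex combination $\varepsilon\,\mbf{E}_{\mcl{A}}+(1-\varepsilon)\Psi$ with $\Psi$ mixed unitary. This contradicts the strict inequality $\dim(\mcl{A})<\ip{\Phi,\Phi}$ from Lemma~\ref{lem-ip-grea}, which applies because $\Phi\neq\mbf{E}_{\mcl{A}}$ (as $\norm{\Phi-\mbf{E}_{\mcl{A}}}>0$).

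\emph{Step 1 (Kraus spaces coincide).} In the proof of Proposition~\ref{Refined-marcus-ree}, every Kraus operator of a channel in $F(\mcl{A})$ commutes with $\mcl{A}$. That argument uses only $\Phi\in F(\mcl{A})$ and $\sum_j V_j^*V_j=I$, so it applies to any Kraus family. Hence $S_\Phi\subseteq\mcl{A}'$. Since $\dim S_\Phi$ equals the Choi-rank, which is $\dim(\mcl{A}')$ by hypothesis, we get $S_\Phi=\mcl{A}'$.

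The same reasoning applied to $\mbf{E}_{\mcl{A}}$, whose Choi-rank is $\dim(\mcl{A}')$ by the preceding proposition, gives $S_{\mbf{E}_{\mcl{A}}}=\mcl{A}'$. The range of a Choi matrix $C_\Phi$ is spanned by the vectorizations of the Kraus operators. Therefore $C_\Phi$ and $C_{\mbf{E}_{\mcl{A}}}$ are positive semidefinite with the same support. It follows that $C_\Phi-\varepsilon C_{\mbf{E}_{\mcl{A}}}\ge 0$ for all sufficiently small $\varepsilon>0$; explicitly, any $\varepsilon$ up to the smallest nonzero eigenvalue of $C_\Phi$ divided by $\norm{C_{\mbf{E}_{\mcl{A}}}}$ works.

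\emph{Step 2 (decomposition).} For such $\varepsilon\in(0,1)$ put
\[
\Psi:=\frac{\Phi-\varepsilon\mbf{E}_{\mcl{A}}}{1-\varepsilon}.
\]
Then $\Psi$ is CP by Step 1. It is unital and trace-preserving, being an affine combination of such maps. It fixes $\mcl{A}$, since both $\Phi$ and $\mbf{E}_{\mcl{A}}$ do, so $\Psi\in F(\mcl{A})$. Moreover
\[
\norm{\Psi-\mbf{E}_{\mcl{A}}}=\frac{\norm{\Phi-\mbf{E}_{\mcl{A}}}}{1-\varepsilon}.
\]

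The main obstacle is to ensure this quantity is at most $r_{\mcl{A}}$, so that Lemma~\ref{lem-mix uni-ball} makes $\Psi$ mixed unitary. If $\norm{\Phi-\mbf{E}_{\mcl{A}}}<r_{\mcl{A}}$, a small $\varepsilon$ does it. At the boundary value $\norm{\Phi-\mbf{E}_{\mcl{A}}}=r_{\mcl{A}}$, I would first try to use that the set of mixed unitary elements of $F(\mcl{A})$ near $\mbf{E}_{\mcl{A}}$ contains a slightly larger ball; the proof in \cite{BhDe26} presumably gives an open neighbourhood of the closed ball. Failing that, I would replace $r_{\mcl{A}}$ by any smaller admissible radius, which is harmless for the local-isolation statement.

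\emph{Step 3 (contradiction).} Suppose $\Phi$ is Erd\H{o}s, and set $M:=\max_{U\in\mcl{U}(d)}\ip{\mathrm{Ad}_U,\Phi}=\ip{\Phi,\Phi}$. Since $\Psi=\sum_j\mu_j\mathrm{Ad}_{W_j}$ is mixed unitary, $\ip{\Psi,\Phi}=\sum_j\mu_j\ip{\mathrm{Ad}_{W_j},\Phi}\le M$. Lemma~\ref{lem-ip-grea} gives $\ip{\mbf{E}_{\mcl{A}},\Phi}=\dim(\mcl{A})$. Hence
\[
\ip{\Phi,\Phi}=\varepsilon\,\ip{\mbf{E}_{\mcl{A}},\Phi}+(1-\varepsilon)\ip{\Psi,\Phi}\le\varepsilon\dim(\mcl{A})+(1-\varepsilon)\ip{\Phi,\Phi}.
\]
This yields $\ip{\Phi,\Phi}\le\dim(\mcl{A})$, contradicting the strict inequality $\dim(\mcl{A})<\ip{\Phi,\Phi}$ of Lemma~\ref{lem-ip-grea}. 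Therefore $\Phi$ is not Erd\H{o}s.
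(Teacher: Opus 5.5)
Your argument is essentially the paper's proof: it uses the same auxiliary map $\frac{1}{1-\epsilon}(\Phi-\epsilon\mathbf{E}_{\mathcal{A}})$, made mixed unitary by Lemma~\ref{lem-mix uni-ball} and played against $\langle\Phi,\mathbf{E}_{\mathcal{A}}\rangle=\dim(\mathcal{A})<\langle\Phi,\Phi\rangle$. Your support argument $S_\Phi=S_{\mathbf{E}_{\mathcal{A}}}=\mathcal{A}'$ is a cleaner, coordinate-free replacement for the paper's appeal to the block form from \cite{BhDe26} and tensorial Schur products, and bounding $\langle\Psi,\Phi\rangle\le\max_U\langle\mathrm{Ad}_U,\Phi\rangle$ directly makes the refined inequality of Proposition~\ref{Refined-marcus-ree} unnecessary.

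The boundary case $\|\Phi-\mathbf{E}_{\mathcal{A}}\|=r_{\mathcal{A}}$ is not handled by the paper either, since its choice $0<\epsilon\le 1-\|\Phi-\mathbf{E}_{\mathcal{A}}\|/r_{\mathcal{A}}$ is then impossible. Of your two remedies only shrinking the radius is sound in general, because a strictly larger admissible ball need not exist: the largest admissible radius, when finite, is attained by closedness of $\mathcal{MU}_d$.
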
  
\begin{proof}
    Fix $\epsilon \in (0,1)$ (the suitable choice will be made later), consider the map $\eta = \frac{1}{1-\epsilon}(\Phi - \epsilon \mbf{E}_{\mcl{A}})$ on $\M{d}$. Without loss of generality, assume that $\mcl{A}=\oplus_{k=1}^\ell (I_{m_k}\otimes \M{n_k})$. Since $\Phi,\mbf{E}_{\mcl{A}}\in F(\mcl{A})$, the map $\eta$ fixes the elements of $\mcl{A}$. Now, from the proof of \cite[Theorem 3.4]{BhDe26}, there exists $V_{jk}\in\M{m_k}, (1\leq j\leq\dim(\mcl{A}'), 1\leq k\leq \ell)$ such that  
    \begin{align*}
        \Phi([X_{ij}]_{i,j=1}^\ell)=\sum_{j=1}^{\dim(\mcl{A}')}W_j^*[X_{ij}] W_j,
    \end{align*}
  where  $W_j:= diag(V_{j_1}\otimes I_{n_1},\cdots,V_{j_\ell}\otimes I_{n_\ell})$. Let $m=\sum_{k=1}^\ell m_k$, and define a unital quantum channel $\Psi$ on $\M{m}$, as
  \begin{align*}
      \Psi(Y):=\sum_{j=1}^{\dim(\mcl{A}')} K_j^*YK_j,
  \end{align*}
  where $K_j:=diag(V_{j_1} V_{j_2},\cdots,V_{j_\ell})$.
  Therefore, the Choi-matrix of  $\eta$ (up to permutation equivalence) is equal  to the tensorial Schur product (see \cite{VsSu15}) of $\frac{1}{1-\epsilon} (C_{\Psi}-\epsilon Z)$ and $\ranko{f}{f}$. That is,
    \begin{align*}
        \frac{1}{1-\epsilon} (C_{\Psi}-\epsilon Z)\circ^\otimes \ranko{f}{f},
    \end{align*}
    where $Z=\oplus_{k=1}^\ell I_{m_k^{2}}$ and $f=\sum_{k=1}^\ell \ranko{\Omega_{n_k}}{\Omega_{n_k}}$. Since $C_{\Psi}$ is invertible  and positive, we get $C_{\Psi}+\Delta\geq 0$ for all Hermitian $\Delta$ with $\norm{\Delta}\leq \lambda_{\min}(C_{\Psi})$. Here $\lambda_{\min}(C_{\Psi})$ denotes the smallest eigenvalue of $C_{\Psi}$. Now, by choosing $0<\epsilon_1<\lambda_{\min}(C_{\Psi})$, $C_{\Psi}-\epsilon_1 Z$ is positive and hence $\frac{1}{1-\epsilon_1}(C_{\Psi}-\epsilon_1 Z)\circ^\otimes \ranko{f}{f}$ is positive for all $\epsilon_1\leq \lambda_{\min}(C_{\Psi})$.
    Take $0<\epsilon \leq \min\{\lambda_{\min}(C_{\Psi}),1-\frac{\norm{\Phi-\mbf{E}_{\mcl{A}}}}{r_{\mcl{A}}}\}$, then
    \begin{align*}
        \norm{\eta-\mbf{E}_{\mcl{A}}}=\frac{1}{1-\epsilon}\norm{\Phi-\epsilon\mbf{E}_{\mcl{A}}-\mbf{E}_{\mcl{A}}+\epsilon\mbf{E}_{\mcl{A}}}=
        \frac{1}{1-\epsilon}\norm{\Phi-\mbf{E}_{\mcl{A}}}\leq r_{\mcl{A}}.
    \end{align*}
From Lemma \ref{lem-mix uni-ball}, it follows that $\eta$ is a mixed unitary channel.
Now, by Proposition~\ref{Refined-marcus-ree}, there exists $U \in \mcl{U}(d) \cap \mcl{A}^{'}$ such that $\ip{\eta, \eta} \leq \ip{\mathrm{Ad}_U, \eta}$. Equivalently, we have
    $$
    \frac{1}{1-\epsilon} (\ip{\Phi, \Phi} - 2\epsilon \ip{\Phi, \mbf{E}_{\mcl{A}}} + \epsilon^2 \ip{\mbf{E}_{\mcl{A}},\mbf{E}_{\mcl{A}}}) \leq \ip{\mathrm{Ad}_U, \Phi} - \epsilon \ip{\mathrm{Ad}_U, \mbf{E}_{\mcl{A}}}.
    $$
Using \eqref{exp-ip-channel}, the above equation reduces to  
    $$
    \ip{\mathrm{Ad}_U, \Phi} \geq  \frac{\ip{\Phi, \Phi}  -\epsilon\dim(\mcl{A})}{1-\epsilon} = \ip{\Phi, \Phi} + \frac{\epsilon}{1-\epsilon} (\ip{\Phi, \Phi} -\dim(\mcl{A})).
    $$
Since $\Phi \neq \mbf{E}_{\mcl{A}}$, Lemma \ref{lem-ip-grea} guarantees that $\ip{\Phi,\Phi}>\dim(\mcl{A})$. Consequently, $\ip{\mathrm{Ad}_U, \Phi}  > \ip{\Phi, \Phi}$. This completes the proof.
\end{proof}

\begin{cor}
There exists a neighborhood of $\delta_d$ in which every unital channel $\Phi$, different from $\delta_d$ with Choi-rank $d^2$, is not Erd\H{o}s.
\end{cor}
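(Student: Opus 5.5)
The plan is to specialize Theorem~\ref{nonErdos-suff-cond} to the trivial subalgebra $\mcl{A}=\mbb{C}I_d$. For this choice, $\mbf{E}_{\mcl{A}}=\delta_d$ and $\mcl{A}'=\M{d}$, so $\dim(\mcl{A}')=d^2$. Also, $F(\mcl{A})$ is exactly the set of all unital quantum channels, since every unital map fixes the scalars. The candidate neighborhood is the closed ball $\{\Phi : \norm{\Phi-\delta_d}\leq r_{\mbb{C}I_d}\}$, where $r_{\mbb{C}I_d}>0$ is the radius supplied by Lemma~\ref{lem-mix uni-ball}; any open ball of smaller radius works equally well.

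Fix a unital channel $\Phi\neq\delta_d$ in this ball with Choi-rank $d^2$. The first step is to check that $\Phi$ is admissible for the Erd\H{o}s notion at all. Since $\Phi\in F(\mbb{C}I_d)$ and $\norm{\Phi-\delta_d}\leq r_{\mbb{C}I_d}$, Lemma~\ref{lem-mix uni-ball} gives that $\Phi$ is mixed unitary. This matters because Erd\H{o}s channels are defined only within $\mathcal{MU}_d$.

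Next, verify the remaining hypotheses of Theorem~\ref{nonErdos-suff-cond}:
\begin{enumerate}
\item the Choi-rank of $\Phi$ is $d^2=\dim(\mcl{A}')$;
\item $0<\norm{\Phi-\delta_d}$, because $\Phi\neq\delta_d$;
\item $\norm{\Phi-\delta_d}\leq r_{\mcl{A}}$, by the choice of the ball.
\end{enumerate}
Theorem~\ref{nonErdos-suff-cond} then yields that $\Phi$ is not Erd\H{o}s.

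As a consistency check, one can confirm that in this case the inner step of that proof behaves as expected. The reduced channel $\Psi$ lives on $\M{1}$-blocks with $m=d$, so $C_\Psi=C_\Phi$ is invertible precisely because the Choi-rank is $d^2$. Hence the perturbation $\eta=\frac{1}{1-\epsilon}(\Phi-\epsilon\delta_d)$ stays completely positive for small $\epsilon$.

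There is no real obstacle here. The only point needing care is that the corollary speaks of an arbitrary unital channel, so mixed-unitarity has to be extracted from Lemma~\ref{lem-mix uni-ball} before Theorem~\ref{nonErdos-suff-cond} can be invoked.
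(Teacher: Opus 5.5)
Your proposal is correct and matches the paper's proof: both specialize Theorem~\ref{nonErdos-suff-cond} to $\mcl{A}=\mbb{C}I_d$, where $\mbf{E}_{\mcl{A}}=\delta_d$, $F(\mcl{A})$ is the set of all unital channels, and $\dim(\mcl{A}')=d^2$. Your explicit use of Lemma~\ref{lem-mix uni-ball} to get mixed unitarity fills in a step the paper leaves implicit. Your fallback to an open ball is also the safer choice, since the choice $0<\epsilon\leq 1-\norm{\Phi-\mbf{E}_{\mcl{A}}}/r_{\mcl{A}}$ in the proof of Theorem~\ref{nonErdos-suff-cond} is impossible when $\norm{\Phi-\delta_d}=r_{\mcl{A}}$.
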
 
\begin{proof}
    Consider the unital $C^*$-subalgebra  $\mcl{A}:=\mbb{C}I_d$ of $\mbb M_d$. Then, $\mbf{E}_{\mcl{A}}=\delta_d$ and $\dim(\mcl{A}') =\dim(\M{d})= d^2$. Thus, if we take any unital channel $\Phi$ that satisfies $0 < \|\Phi - \delta_d \| \leq r_{\mcl{A}}$ with Choi-rank $d^2$, then by Theorem~\ref{nonErdos-suff-cond}, the desired conclusion follows. 
\end{proof}

\section{Erd\H{o}s channels on $\M{d}$}\label{S5}

Inspired by the classical theory of RCDS matrices~\cite{BrDa21}, which is a subclass of doubly stochastic matrices, we introduce an quantum analog of it for unital quantum channels on $\M{d}$, and study its structure. As an application, we classify all the \emph{RCDS} Erd\H{o}s channels on $\M{d}$, and answer Question~\ref{all rank -question}. We further explicitly classify all Erd\H{o}s channels of Choi-rank $2$ and on $\M{2}$.

\subsection{Quantum RCDS channels}
 We generalize the concept of restricted common diagonal sum doubly stochastic matrix to the quantum setup. Recall that, for a matrix $A$, the \emph{skeleton} of $A$, denoted by  $Skel(A)$ is the matrix defined as 
\begin{align*}
    Skel(A)_{ij}=\begin{cases}
        1& \text{ if } A_{ij}\neq 0,\\
        0& \text{ otherwise}.
    \end{cases}
\end{align*} 
A doubly stochastic matrix $A$ is said to be a restricted common diagonal sum (RCDS) matrix if $\min_{P\in \mcl{P}_d(A)}\ip{P,A}=\max_{P\in\mcl{P}_d(A)}\ip{P,A}$, where $\mcl{P}_d(A):=\{P\in\mcl{P}_d: Skel(A)\geq P ~ \text{(entrywise)}\}$. Equivalently, $\ip{P,A}$ is constant, for all $P\in\mcl{P}_d(A)$.
Note that $Skel(A)-P$ is entrywise nonnegative is equivalent to $A-\lambda P$ is entrywise nonnegative, for some $\lambda>0$. Therefore, $A$ is RCDS if and only if $\ip{P,A}$ is constant for all permutation matrix $P$ with $A-\lambda P$ is entrywise nonnegative for some $\lambda>0$ if and only if $\ip{B,A}$ is constant, for all doubly stochastic matrix $B$ such that $A-\lambda B$ is entrywise nonnegative for some $\lambda>0$.  This  observation leads us to extend the notion of RCDS to a non commutative setup as follows.
\begin{defn}\label{quantum-rcds-def}
A unital quantum channel $\Phi$ is said to be \emph{quantum RCDS} if there exists a constant $c>0$ such $\ip{\Psi,\Phi}=c$, for all unital quantum channels $\Psi$ such that $\Phi-\lambda \Psi$ is CP.
\end{defn}

A quantum RCDS mixed unitary channel is uniquely determined by the range of its Choi matrix.
  \begin{prop}
      Let $\Phi_1,\Phi_2$ be two mixed unitary channels with $P_{\Phi_1}=P_{\Phi_2}$. Suppose $\Phi_1,\Phi_2$ are quantum RCDS. Then $\Phi_1=\Phi_2$. 
  \end{prop}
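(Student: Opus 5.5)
The notation $P_{\Phi}$ has not been defined at this point; I read it as the orthogonal projection onto the range of the Choi matrix $C_\Phi$. The plan is to show that $\Phi_1-\Phi_2$ is orthogonal to both $\Phi_1$ and $\Phi_2$ in the inner product $\ip{\cdot,\cdot}$. Then
\begin{align*}
\ip{\Phi_1-\Phi_2,\Phi_1-\Phi_2}=0,
\end{align*}
and hence $\Phi_1=\Phi_2$.

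First I would note that a CP map $\Psi$ satisfies $\Phi-\lambda\Psi$ CP for some $\lambda>0$ if and only if $\operatorname{ran} C_\Psi\subseteq\operatorname{ran} C_\Phi$. This holds because $C_{\Phi-\lambda\Psi}=C_\Phi-\lambda C_\Psi$, and $C_\Phi$ is invertible on its range. Since $P_{\Phi_1}=P_{\Phi_2}$, the two channels have the same range, so $\Phi_1-\lambda\Phi_2$ and $\Phi_2-\lambda\Phi_1$ are CP for small $\lambda>0$. Each $\Phi_i$ is also admissible for itself.

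Now let $c_i$ be the RCDS constant of $\Phi_i$. Applying Definition~\ref{quantum-rcds-def} to $\Phi_1$ with the test channels $\Psi=\Phi_1$ and $\Psi=\Phi_2$ gives $\ip{\Phi_1,\Phi_1}=c_1=\ip{\Phi_2,\Phi_1}$. Doing the same for $\Phi_2$ gives $\ip{\Phi_2,\Phi_2}=c_2=\ip{\Phi_1,\Phi_2}$. The inner product is real on CP maps, since Choi matrices are Hermitian, so $\ip{\Phi_1,\Phi_2}=\ip{\Phi_2,\Phi_1}$ and therefore $c_1=c_2$. Expanding,
\begin{align*}
\ip{\Phi_1-\Phi_2,\Phi_1-\Phi_2}=c_1-2c_1+c_1=0,
\end{align*}
which forces $C_{\Phi_1}=C_{\Phi_2}$ and so $\Phi_1=\Phi_2$.

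The mixed unitary hypothesis does not seem to be needed beyond the fact that the $\Phi_i$ are unital channels. The only real obstacle is fixing the meaning of $P_\Phi$ and checking the range criterion for $\Phi-\lambda\Psi\ge 0$. If $P_\Phi$ instead means the Kraus space $S_\Phi$, the same argument applies, because $S_\Phi$ and $\operatorname{ran} C_\Phi$ determine each other through the vectorization of Kraus operators.
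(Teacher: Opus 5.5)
Your proof is correct, and it takes a shorter route than the paper's. Your reading of $P_\Phi$ as the projection onto the range of $C_\Phi$ is the one the paper uses; it is defined in Lemma~\ref{prop-perp-rep}.

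The paper never uses $\Phi_2$ itself as a test channel for $\Phi_1$. Instead it works as follows:
\begin{enumerate}
\item It takes unitaries $U\in(S_{\Phi_1}\cup S_{\Phi_2})\cap\mcl{U}(d)$ and uses the Douglas lemma to get that $\Phi_j-\lambda_j^{-1}\mathrm{Ad}_U$ is CP.
\item It deduces $\ip{\mathrm{Ad}_U,\Phi_j}=c_j$ for all such $U$.
\item It integrates this against the probability measures $\mu_j$ of Lemma~\ref{lem-mix-unit-int-rep}. These measures exist precisely because $\Phi_j$ is mixed unitary, and they give $\ip{\Phi_1,\Phi_2}=\ip{\Phi_2,\Phi_2}=c_2$ and $\ip{\Phi_2,\Phi_1}=\ip{\Phi_1,\Phi_1}=c_1$.
\item It finishes with the same expansion of $\ip{\Phi_1-\Phi_2,\Phi_1-\Phi_2}$ that you use.
\end{enumerate}

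Your choice of $\Psi=\Phi_1$ and $\Psi=\Phi_2$ as test channels is legitimate. Definition~\ref{quantum-rcds-def} quantifies over all unital channels $\Psi$ with $\Phi-\lambda\Psi$ CP, and equal ranges of the Choi matrices give $C_{\Phi_1}-\lambda C_{\Phi_2}\ge 0$ for small $\lambda>0$. So, as you observe, the mixed unitary hypothesis is superfluous. Your step showing $c_1=c_2$ via reality of the inner product is also unnecessary: the expansion gives $c_1-c_2-c_1+c_2=0$ directly.

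In exchange, the paper's argument only ever applies the RCDS condition to the reversible test channels $\mathrm{Ad}_U$ with $U$ a unitary in the Kraus space. For mixed unitary channels it therefore proves the conclusion under a formally weaker hypothesis: that $\ip{\mathrm{Ad}_U,\Phi_j}$ is constant over such unitaries. Your argument, by contrast, needs the full RCDS condition tested against a possibly non-reversible channel, but works for arbitrary unital channels.
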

  \begin{proof}
 Since $P_{\Phi_1}=P_{\Phi_2}$, it follows that $\text{Range}(C_{\Phi_1}) = \text{Range}(C_{\Phi_2})$. Suppose $U\in (S_{\Phi_1}\cup S_{\Phi_2})\cap \mcl{U}(d)$. Then, $\text{Range}(C_{\mathrm{Ad}_U})\subseteq \text{Range}(C_{\Phi_1}) = \text{Range}(C_{\Phi_2})$. By Douglas Lemma, for each $j=1,2$, there exists a $\lambda_j>0$ such that $ \lambda_j C_{\Phi_j}-C_{\mathrm{Ad}_U}$ is positive semidefinite. Equivalently, $\Phi_j-\lambda_j^{-1} \mathrm{Ad}_U$ is CP. Since $\Phi_1,\Phi_2$ are quantum RCDS, there exist $c_j>0$ such that $\ip{\mathrm{Ad}_U,\Phi_j}=c_j$ for all unitary $U\in S_{\Phi_1}\cup S_{\Phi_2}$. By Lemma~\ref{lem-mix-unit-int-rep},  we have  the probability measures $\mu_j$ on $S_{\Phi_j}\cap\mcl{U}(d)$ such that $\Phi_j= \int_{S_{\Phi_j} \cap \mathcal U(d)} \mathrm{Ad}_U d\mu_j(U)$, for $j=1,2$. Thus, 
 
 $$
 \ip{\Phi_1, \Phi_2} =  \int_{S_{\Phi_1} \cap \mathcal U(d)} \ip{\mathrm{Ad}_U,  \Phi_2} d\mu_1(U) = c_2  \int_{S_{\Phi_1} \cap \mathcal U(d)} d\mu_1(U) = c_2,
 $$
 and
 $$
 \ip{\Phi_2, \Phi_2} =  \int_{S_{\Phi_2} \cap \mathcal U(d)} \ip{\mathrm{Ad}_U,  \Phi_2} d\mu_2(U) = c_2 \int_{S_{\Phi_2} \cap \mathcal U(d)} d\mu_2(U) = c_2.
 $$
 Similarly, we also obtain $\ip{\Phi_2, \Phi_1} = \ip{\Phi_1, \Phi_1} = c_1$. Consequently,  
 $$
 \ip{\Phi_1-\Phi_2,\Phi_1-\Phi_2} = \ip{\Phi_1, \Phi_1} + \ip{\Phi_2, \Phi_2} - \ip{\Phi_1, \Phi_2}  - \ip{\Phi_2, \Phi_1} = 0. 
 $$
 This shows that $\Phi_1=\Phi_2$.
 \end{proof}

Interestingly, we obtain a structural characterization of quantum RCDS channels on $\M{d}$. Before we state and prove it, we need the following lemma. 
\begin{lem}\label{prop-perp-rep}
 Let $\Phi$ be a unital quantum channel on $\M{d}$. Let $P_{\Phi}$ denote the orthogonal projection onto the range of $C_{\Phi}$. Define
 \begin{itemize}
     \item [(i)] $\msc{H}_{\Phi} = \{W \in \M{d} \otimes \M{d} : W = W^*, ~ P_\Phi W P_\Phi = W \}$,
     \item [(ii)] $V_\Phi = \{ W \in \msc{H}_{\Phi} : (\id \otimes \tr_d)(W) =  (\tr_d \otimes \id)(W) = 0  \}$,
     \item [(iii)] $L_\Phi = \{ P_\Phi(I\otimes H +K \otimes I) P_\Phi :~~ H =H^*, K=K^* \}$.
 \end{itemize}
Then, $V_\Phi^\perp \cap \msc{H}_{\Phi} = L_\Phi$.
\end{lem}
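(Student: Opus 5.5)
We work in the real Hilbert space $\msc{H}_{\Phi}$, equipped with the real inner product $\ip{A,B}=\tr(AB)$ for Hermitian $A,B$. Since $\msc{H}_\Phi$ is finite dimensional, the identity $V_\Phi^\perp\cap\msc{H}_\Phi=L_\Phi$ is equivalent to $V_\Phi=L_\Phi^\perp\cap\msc{H}_\Phi$. Here both perps are taken inside $\msc{H}_\Phi$, and $L_\Phi\subseteq\msc{H}_\Phi$ is a real linear subspace. Indeed, once the second identity is proved, taking orthocomplements in $\msc{H}_\Phi$ gives $V_\Phi^\perp\cap\msc{H}_\Phi=(L_\Phi^{\perp})^{\perp}=L_\Phi$, because subspaces of a finite-dimensional inner product space are closed. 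So the plan is to prove $V_\Phi=L_\Phi^\perp\cap\msc{H}_\Phi$.

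First note that $L_\Phi\subseteq\msc{H}_\Phi$. For Hermitian $H,K$ the element $P_\Phi(I\otimes H+K\otimes I)P_\Phi$ is Hermitian and is fixed under compression by $P_\Phi$, since $P_\Phi^2=P_\Phi$. Now take $W\in\msc{H}_\Phi$. Using $P_\Phi WP_\Phi=W$ and cyclicity of the trace, we get
\[
\tr\big(W\,P_\Phi(I\otimes H+K\otimes I)P_\Phi\big)=\tr\big(W(I\otimes H)\big)+\tr\big(W(K\otimes I)\big).
\]
By the defining property of partial traces, the right-hand side equals
\[
\tr\big((\tr_d\otimes\id)(W)\,H\big)+\tr\big((\id\otimes\tr_d)(W)\,K\big).
\]

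With this formula both inclusions follow. If $W\in V_\Phi$, both partial traces vanish, so $W\perp L_\Phi$. Conversely, suppose $W\in\msc{H}_\Phi$ is orthogonal to $L_\Phi$. Taking $K=0$ and letting $H$ range over all Hermitian matrices gives $\tr\big((\tr_d\otimes\id)(W)H\big)=0$ for all such $H$. Since $W$ is Hermitian, $(\tr_d\otimes\id)(W)$ is Hermitian, and we may choose $H=(\tr_d\otimes\id)(W)$. This yields $\norm{(\tr_d\otimes\id)(W)}_2^2=0$, so this partial trace vanishes. The same argument with $H=0$ and $K=(\id\otimes\tr_d)(W)$ shows that the other partial trace vanishes. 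Hence $W\in V_\Phi$.

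The only delicate point is bookkeeping rather than mathematics. One has to insist that the orthocomplement is taken inside $\msc{H}_\Phi$ with the real trace inner product, so that the double-perp step is legitimate. One also has to check that the partial-trace duality $\tr(W(I\otimes H))=\tr((\tr_d\otimes\id)(W)H)$ is applied with the tensor factors in the correct order. Nothing from the earlier sections is needed beyond basic linear algebra.
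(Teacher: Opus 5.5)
Your proposal is correct and follows the paper's proof step for step. It uses the same pairing identity $\tr\big(W P_\Phi(I\otimes H+K\otimes I)P_\Phi\big)=\tr\big((\tr_d\otimes\id)(W)H\big)+\tr\big((\id\otimes\tr_d)(W)K\big)$ for $W\in\msc{H}_\Phi$, the same test choices $H=(\tr_d\otimes\id)(W)$ and $K=(\id\otimes\tr_d)(W)$ to get $L_\Phi^\perp\cap\msc{H}_\Phi=V_\Phi$, and the same decomposition $\msc{H}_\Phi=L_\Phi\oplus V_\Phi$ (your double-perp step) to conclude; your remark that all orthocomplements are taken inside the real space $\msc{H}_\Phi$ with the trace inner product makes explicit what the paper leaves implicit.
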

\begin{proof}
We first show that $L_\Phi^\perp  \cap \msc{H}_{\Phi} = V_\Phi$.
To this end, let $W =\sum_{i,j=1}^d E_{ij} \otimes W_{ij} \in \msc{H}_{\Phi}$  and $H, K\in\M{d}$ be Hermitian matrices. Then, 
\begin{align}\label{eq-V-Phi-L-Phi-rela}
\ip{(\tr_d \otimes \id)(W), H} +  \ip{(\id \otimes \tr_d)(W), K} &= \ip{\sum_{i=1}^d W_{ii}, H} + \ip{\sum_{i,j=1}^d E_{ij } \otimes \tr_d(W_{ij}), K}\notag\\
&= \ip{W, I \otimes H}+ \ip{W, K \otimes I}\notag\\
&= \ip{P_\Phi W P_\Phi, I \otimes H + K \otimes I} \notag\\
&= \ip{W, P_\Phi(I \otimes H + K \otimes I)P_\Phi}.
\end{align}
Clearly, if $W\in V_{\Phi}$ then $W\in\msc{H}_{\Phi}\cap L_{\Phi}^{\perp}$. For the other inclusion, by taking   $K=0$ and $H  = (\tr_d \otimes \id)(W)$ in~\eqref{eq-V-Phi-L-Phi-rela} we get  $(\tr_d \otimes \id)(W) = 0$. Similarly, by taking $H=0$ and $K=(\id \otimes \tr_d)(W)$ we also get $(\id \otimes \tr_d)(W) = 0$.

Finally, $\msc{H}_{\Phi} = L_\Phi \oplus V_\Phi$ follows from $L_{\Phi}^\perp \cap\msc{H}_{\Phi}=V_{\Phi}$ together with fact $L_\Phi \subseteq \msc{H}_{\Phi}$. So, if $W \in V_\Phi^\perp \cap \msc{H}_{\Phi}$,  then $W \in L_\Phi.$ Conversely, since  $L_\Phi \subseteq V_\Phi^\perp$, we conclude that $L_\Phi \subseteq V_\Phi^\perp \cap \msc{H}_{\Phi}$. This completes the proof.
\end{proof}

  \begin{thm}\label{thm-Strong qRCDS-struc}
      Let $\Phi$ be a unital quantum channel on $\M{d}$. Then $\Phi$ is quantum RCDS if and only if there exist Hermitian matrices $H,K\in\M{d}$ such that $C_{\Phi}=P_{\Phi}(I\otimes H+K\otimes I)P_{\Phi}$.
  \end{thm}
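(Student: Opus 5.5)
The plan is to translate the RCDS condition into a statement about the Choi matrix and then apply Lemma~\ref{prop-perp-rep}. First we characterize the channels $\Psi$ for which $\Phi-\lambda\Psi$ is CP for some $\lambda>0$. These are exactly the unital quantum channels with $\mathrm{Range}(C_\Psi)\subseteq \mathrm{Range}(C_\Phi)$, that is, $C_\Psi\in\msc{H}_\Phi$, by the Douglas lemma argument already used in the previous proposition.

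Next we note that $\Psi$ is a unital quantum channel exactly when $C_\Psi\geq 0$ and both $(\tr_d\otimes\id)(C_\Psi)=\sum_i\Psi(E_{ii})=I$ and $(\id\otimes\tr_d)(C_\Psi)=I$. Hence the differences $C_{\Psi_1}-C_{\Psi_2}$ of two such channels lie in $V_\Phi$. Since $\ip{\Psi,\Phi}=\tr(C_\Psi C_\Phi)$, the RCDS condition says that $\ip{W,C_\Phi}=0$ for all such differences $W$.

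For the implication ($\Leftarrow$), suppose $C_\Phi=P_\Phi(I\otimes H+K\otimes I)P_\Phi$. For any admissible $\Psi$ we have $P_\Phi C_\Psi P_\Phi=C_\Psi$, so
\[
\ip{\Psi,\Phi}=\tr\big(C_\Psi(I\otimes H+K\otimes I)\big)=\tr\big((\tr_d\otimes\id)(C_\Psi)H\big)+\tr\big((\id\otimes\tr_d)(C_\Psi)K\big)=\tr(H)+\tr(K).
\]
This is a constant $c$. It is positive, since taking $\Psi=\Phi$ gives $c=\ip{\Phi,\Phi}>0$.

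For the implication ($\Rightarrow$), it suffices by Lemma~\ref{prop-perp-rep} to show $C_\Phi\in V_\Phi^\perp$, because $C_\Phi\in\msc{H}_\Phi$ automatically. Fix $W\in V_\Phi$. The key point is that $\Phi$ is an interior point of the admissible set in the right sense: $C_\Phi$ is positive definite when restricted to $\mathrm{Range}(P_\Phi)$, and $W$ is Hermitian and supported there. Therefore $C_\Phi\pm tW\geq 0$ for small $t>0$, for instance $t\le \lambda_{\min}^+(C_\Phi)/\norm{W}$, where $\lambda_{\min}^+$ is the smallest positive eigenvalue. The matrices $C_\Phi\pm tW$ have the same partial traces as $C_\Phi$, namely $I$ and $I$, so they are Choi matrices of unital quantum channels $\Psi_\pm$. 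Their ranges lie in $\mathrm{Range}(C_\Phi)$, so both $\Psi_\pm$ are admissible. The RCDS condition then gives $\ip{\Psi_+,\Phi}=\ip{\Psi_-,\Phi}$, hence $2t\,\tr(WC_\Phi)=0$ and $\ip{W,C_\Phi}=0$. Thus $C_\Phi\in V_\Phi^\perp\cap\msc{H}_\Phi=L_\Phi$, which gives the Hermitian $H$ and $K$.

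The main obstacle is verifying the perturbation step carefully, namely that positivity on the support is preserved and that admissibility ($\Phi-\lambda\Psi_\pm$ being CP) follows from the range inclusion via the Douglas lemma. The remaining steps are linear bookkeeping with the partial-trace identities from \eqref{eq-V-Phi-L-Phi-rela}.
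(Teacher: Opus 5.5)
Your proposal is correct and follows essentially the same route as the paper. For ($\Leftarrow$) it uses the same trace computation based on $P_\Phi C_\Psi P_\Phi = C_\Psi$, and for ($\Rightarrow$) the same perturbation $C_\Phi \pm \epsilon W$ with $W\in V_\Phi$, followed by Lemma~\ref{prop-perp-rep}. The differences are cosmetic: the paper checks admissibility of $\Psi_\pm$ directly via $C_\Phi - \tfrac12 C_{\Psi_\pm} = \tfrac12(C_\Phi \mp \epsilon W)\geq 0$ rather than through the Douglas lemma, and you additionally verify $c=\ip{\Phi,\Phi}>0$, which the paper leaves implicit.
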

  \begin{proof}
      Assume that $C_{\Phi}=P_{\Phi}(I\otimes H+K\otimes I)P_{\Phi}$ for some Hermitian matrices $H,K\in\M{d}$. Let $\Psi$ be a unital quantum channel such that $\Phi-\lambda \Psi$ is CP. Equivalently, $C_\Phi - \lambda C_\Psi$ is positive semidefinite. This implies $\text{Range}(C_\Psi) \subseteq \text{Range}(C_\Phi)$, and therefore $P_\Phi C_{\Psi}P_\Phi = C_{\Psi}$.
Thus, 
\begin{align*}
      \ip{\Psi, \Phi} &= \tr(C_\Psi  P_\Phi (I \otimes H +K \otimes I) P_\Phi) = \tr(P_\Phi C_\Psi P_\Phi (I \otimes H +K \otimes I)) 
      =\tr(C_{\Psi}(I\otimes H+K\otimes I)) \notag \\
          &=\sum_{i,j=1}^d\tr(E_{ij})\tr(\Psi(E_{ij})H)+\sum_{i,j=1}^d\tr(E_{ij}K)\tr(\Psi(E_{ij}))  =\tr(H)+\tr(K),
 \end{align*}
  where  the last equality follows from the fact that $\Psi$ is unital and trace preserving.
       Conversely, assume that $\Phi$ is quantum RCDS. By virtue of the Lemma~\ref{prop-perp-rep}, it is enough to prove $C_{\Phi}\in \msc{H}_{\Phi}\cap V_{\Phi}^{\perp}$. Clearly, $C_{\Phi}\in\msc{H}_{\Phi}$. Let $W\in V_{\Phi}$. Since $P_{\Phi}WP_{\Phi}=W$, we have an $\epsilon>0$ such that $C_{\Phi}\pm\epsilon W$ is positive semidefinite. Thus, there exists a CP map $\Psi_{\pm}$ on $\M{d}$ such that $C_{\Psi_{\pm}}=C_{\Phi}\pm\epsilon W$. As $W\in V_{\Phi}$, we get $\tr_j(C_{\Psi_{\pm}})=I$ for $j=1,2$. Consequently, $\Psi_{\pm}$ is a unital quantum channel. Furthermore, $C_{\Phi}-\frac{1}{2}C_{\Psi_{\pm}}=\frac{1}{2}(C_{\Phi}\mp\epsilon W)$ is positive semidefinite. Therefore, $\Phi-\frac{1}{2}\Psi_{\pm}$ is CP, and hence by our assumption $\ip{\Psi_{+},\Phi}=\ip{\Psi_{-},\Phi}$. Consequently, we get $\ip{W,C_{\Phi}}=0$. This shows that $C_\Phi\in V_{\Phi}^{\perp}\cap\msc{H}_{\Phi}$, and completes the proof.  
  \end{proof}

Given $\mathcal B \subseteq \mcl U(d)$, a unital quantum channel $\Phi:\M{d}\to\M{d}$ is called mixed \it{$\mcl B$-unitary} channel if  there exist $\ell \in \mathbb N,$  $\{\lambda_j\}_{j=1}^{\ell}\subseteq [0,1]$  and unitaries $\{U_j\}_{j=1}^\ell\subseteq \mcl B$ such that $\sum_{j=1}^\ell\lambda_j=1$ and
 \begin{align*}
     \Phi(X)=\sum_{j=1}^{\ell}\lambda_j U_j^{*}X U_j, \quad  X\in \M{d}.
 \end{align*}
For an orthogonal set $\mathcal B \subseteq \mcl U(d)$ with $|\mathcal B| = m$ and any integer $k \in \{1, \ldots, m\}$, a unital quantum channel $\Phi$ is  
called \it{$k$-average of $\mathcal B$} if there exists $S \subseteq \mathcal B$ of cardinality $k$ such that 
$$
\Phi = \frac{1}{k}\sum_{W \in S} \mathrm{Ad}_W.
$$
With this terminologies, we provide a family of quantum RCDS channels illustrating Theorem~\ref{thm-Strong qRCDS-struc}.  
\begin{eg}\label{quantumRCDS_example}
Suppose that $\Phi$ is $k$-average of $\mathcal B$ given by $\Phi = \frac{1}{k}\sum_{W \in S} \mathrm{Ad}_W$. Then,
\begin{align*}
C_\Phi &= \frac{1}{k} \sum_{i,j=1}^d E_{ij} \otimes \sum_{W \in S} \mathrm{Ad}_W(E_{ij}) = \frac{1}{k} \sum_{W \in S} \sum_{i,j=1}^d (I \otimes W^*)(E_{ij} \otimes E_{ij}) (I \otimes W)\\
&= \frac{1}{k} \sum_{W \in S} \sum_{i,j=1}^d 
(I \otimes W^*) (\ranko{e_i \otimes e_i}{e_j \otimes e_j}) (I \otimes W)= \frac{1}{k} \sum_{W \in S} \ranko{(I \otimes W^*)e}{(I \otimes W^*)e},
\end{align*}
where $e = \sum_{i=1}^d e_i \otimes e_i$. Further, since $S$ is an orthogonal set, it follows for any $V,W \in S$,  that
$$
\ip{(I \otimes V^*)e, (I \otimes W^*)e} = \sum_{i,j=1}^d \ip{e_i \otimes V^*e_i, e_j \otimes W^*e_j} = \tr(VW^*) = \begin{cases}
    d, & \text{if $V=W$},\\
    0, & \text{otherwise}.
\end{cases} 
$$
This implies that $\{d^{-1/2}(I \otimes W^*)e : W \in S \}$ is an orthonormal basis of $\text{Range}(C_\Phi)$, and hence $P_\Phi = \sum_{W \in S} \ranko{\frac{(I \otimes W^*)e}{\sqrt{d}}}{\frac{(I \otimes W^*)e}{\sqrt{d}}}$. Therefore, if we take $H = K = \frac{d}{2k}I$, then 
$$
P_\Phi (I \otimes H + K \otimes I)P_\Phi = \frac{d}{k} P_\Phi = C_\Phi.
$$
By Theorem~\ref{thm-Strong qRCDS-struc}, $\Phi$ is quantum RCDS.
\end{eg}

   \begin{cor}\label{thm-RCDS-quaRCDS}
      Let $A$ be a doubly stochastic matrix, and let $\Phi_A$ be the channel given as in \eqref{phiA}. Then $\Phi_A$ is quantum RCDS if and only if $A$ is RCDS.
  \end{cor}
  \begin{proof}
Assume that $\Phi_A$ is quantum RCDS. That is, there exists $c > 0$  such that $\ip{\Psi,\Phi_A}=c$ for all unital quantum channel $\Psi$ with $\Phi-\lambda\Psi$ is CP for some $\lambda>0$. Let $P_0$ be any permutation matrix such that $Skel(A)-P_0$ is entrywise nonnegative matrix. By \cite[Lemma 2.4]{PHSK26}, $A=\lambda_0 P_0+\sum_{j=1}^m \lambda_j P_j$ for some permutation matrices $P_j$'s and scalars $\{\lambda_0,\lambda_1,\cdots,\lambda_m\}\subseteq [0,1]$  with $\lambda_0>0$ and $\sum_{j=0}^m\lambda_j =1$. Now, from the proof of Proposition~\ref{prop-doubly-mixed}, it follows that \begin{align}\label{eq-Phi_A-formula}
\Phi_A=\sum_{j=0}^{d-1}\sum_{\ell=0}^m\frac{\lambda_{\ell}}{d}\mathrm{Ad}_{V^jP_{\ell}^*},
    \end{align}
    where $V=diag(1,\omega,\cdots, \omega^{d-1})$ and $\omega=e^{\frac{2\pi i}{d}}$.
Then by \eqref{eq-Phi_A-formula}, $\Phi_A-\frac{\lambda_0}{d}\mathrm{Ad}_{P_0^*}$ is CP and hence by our assumption, we have $\ip{\mathrm{Ad}_{P_0^{*}},\Phi_A}=c$.  But, $\ip{\mathrm{Ad}_{P_0^{*}},\Phi_A}\overset{\eqref{eq-unitary-doublystochastic}}=\ip{B(P_0^{*}),A}$, where $B(P)_{ij}=\abs{P_{ji}}^2$. Therefore, $\ip{P_0,A}=\ip{B(P_0^{*}),A}=c$. As $P_0$ is arbitrary, we conclude that $A$ is a RCDS matrix. 

Conversely, suppose $A$ is a RCDS matrix. Then, by \cite[Theorem~2.6]{BrDa21}, there exist vectors $u=(u_1,\ldots,u_d)$ and $v=(v_1,\ldots,v_d)$ in $\mathbb R^d$ such that 
\begin{equation}
    \label{A-description}
    A = \big[(u_i + v_j) Skel(A)_{ij}\big]_{i,j=1}^d.
\end{equation} 
Since $C_{\Phi_A} = \sum_{i=1}^d E_{ii} \otimes \Phi_A(E_{ii}) = \sum_{i,j=1}^d a_{ji} E_{ii} \otimes E_{jj}$, we get $P_{\Phi_A} = \sum_{i,j=1}^d Skel(A)_{ji} E_{ii} \otimes E_{jj}$. Here $P_{\Phi_A}$ denotes the orthogonal projection onto $\text{Range}(C_{\Phi})$.  
Now, if we take $H: = \text{diag}(u_1, \ldots, u_d)$,
then 
\begin{align*}
P_{\Phi_A}(I\otimes H)P_{\Phi_A} &=   \Big(\sum_{m,n=1}^d Skel(A)_{nm} E_{mm} \otimes E_{nn}\Big) \Big(I\otimes \sum_{i=1}^d u_i E_{ii}\Big) \Big(\sum_{k,l=1}^d Skel(A)_{lk} E_{kk}
\otimes E_{ll} \Big)\\
&= \sum_{m,n,k,l, i=1}^d u_i Skel(A)_{nm} Skel(A)_{lk} (E_{mm} \otimes E_{nn}) (I \otimes E_{ii}) (E_{kk} \otimes E_{ll})\\
&= \sum_{i,j=1}^d u_j Skel(A)_{ji}  E_{ii} \otimes E_{jj}.
\end{align*}
Similarly, by taking $K: = \text{diag}(v_1, \ldots, v_d)$, we obtain $P_{\Phi_A}(K\otimes I)P_{\Phi_A} = \sum_{i,j=1}^d v_i Skel(A)_{ji}  E_{ii} \otimes E_{jj}$.
Therefore, 
\begin{align*}
C_{\Phi_A} = \sum_{i,j=1}^d a_{ji} E_{ii} \otimes E_{jj}
\overset{\eqref{A-description}}= \sum_{i,j=1}^d  (u_j + v_i) Skel(A)_{ji} E_{ii} \otimes E_{jj}
= P_{\Phi_A} (I \otimes H + K \otimes I) P_{\Phi_A}.
\end{align*}
Hence, the desired conclusion follows from the Theorem~\ref{thm-Strong qRCDS-struc}.
\end{proof}

  \begin{thm}\label{thm-RCDS-Erdos}
      Let $\Phi$ be a quantum RCDS mixed unitary channel on $\M{d}$, and let $S_\Phi$ denote the Kraus space of $\Phi$. Then $\Phi$ is Erd\H{o}s if and only if $\ip{\Phi,\Phi}\geq \ip{\mathrm{Ad}_U,\Phi}$ for every unitary $U\notin S_{\Phi}$.
  \end{thm}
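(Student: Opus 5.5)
The forward direction is immediate from the definition. If $\Phi$ is Erd\H{o}s, then $\ip{\Phi,\Phi}=\max_{U\in\mcl{U}(d)}\ip{\mathrm{Ad}_U,\Phi}\geq \ip{\mathrm{Ad}_U,\Phi}$ for every unitary $U$, and in particular for every $U\notin S_\Phi$. All the content lies in the converse. There I want to show that $\ip{\Phi,\Phi}\geq\ip{\mathrm{Ad}_U,\Phi}$ also holds for every unitary $U\in S_\Phi$. Theorem~\ref{Marcus-Ree-noncom} then gives the reverse inequality $\ip{\Phi,\Phi}\le \max_U\ip{\mathrm{Ad}_U,\Phi}$, which yields equality.

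The key step is to show that for a quantum RCDS mixed unitary channel, $\ip{\mathrm{Ad}_U,\Phi}=\ip{\Phi,\Phi}$ for every unitary $U\in S_\Phi$.

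\begin{enumerate}
\item Fix $U\in S_\Phi\cap\mcl{U}(d)$. The Choi matrix $C_{\mathrm{Ad}_U}=\ranko{(I\otimes U^*)e}{(I\otimes U^*)e}$ is rank one, where $e=\sum_i e_i\otimes e_i$.
\item Take Kraus operators $K_j$ of $\Phi$. Then $C_\Phi=\sum_j \ranko{(I\otimes K_j^*)e}{(I\otimes K_j^*)e}$, so $\text{Range}(C_\Phi)=\{(I\otimes K^*)e: K\in S_\Phi\}$.
\item Since $U\in S_\Phi$, we get $\text{Range}(C_{\mathrm{Ad}_U})\subseteq\text{Range}(C_\Phi)$. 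The Douglas lemma argument used in the preceding proposition then gives $\lambda>0$ with $C_\Phi-\lambda C_{\mathrm{Ad}_U}\ge0$, i.e.\ $\Phi-\lambda\,\mathrm{Ad}_U$ is CP.
\item $\mathrm{Ad}_U$ is a unital quantum channel, so the RCDS hypothesis gives $\ip{\mathrm{Ad}_U,\Phi}=c$ for all such $U$.
\item By Lemma~\ref{lem-mix-unit-int-rep}, write $\Phi=\int_{S_\Phi\cap\mcl{U}(d)}\mathrm{Ad}_U\,d\mu(U)$ with $\mu$ a probability measure. Integrating gives $\ip{\Phi,\Phi}=\int\ip{\mathrm{Ad}_U,\Phi}\,d\mu=c$, exactly as in the uniqueness proposition above.
\end{enumerate}

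Hence $\ip{\mathrm{Ad}_U,\Phi}=\ip{\Phi,\Phi}$ on $S_\Phi\cap\mcl{U}(d)$. Combined with the hypothesis for $U\notin S_\Phi$, this gives $\max_{U}\ip{\mathrm{Ad}_U,\Phi}\le\ip{\Phi,\Phi}$. The maximum is attained by compactness, and together with Theorem~\ref{Marcus-Ree-noncom} we get equality. So $\Phi$ is Erd\H{o}s in the sense of Definition~\ref{def-Erdos-channel}.

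The only point needing care is the range identification in steps 2--3. One must check that the Choi vector of $\mathrm{Ad}_K$ depends linearly on $K$, so that $K\in S_\Phi$ forces the vector $(I\otimes K^*)e$ into $\text{Range}(C_\Phi)$. One should also note that the positive scalar $\lambda$ exists because $C_\Phi$ is invertible on its range. Both are routine once the Choi vectors are written out as above.
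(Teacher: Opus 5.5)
Your proof is correct and follows the paper's argument: the paper also shows that $\ip{\mathrm{Ad}_U,\Phi}$ is constant on $S_\Phi\cap\mcl{U}(d)$, identifies that constant with $\ip{\Phi,\Phi}$ through the mixed unitary decomposition, and splits the maximum into $U\in S_\Phi$ and $U\notin S_\Phi$. The only difference is that the paper gets the constant as $\tr(H+K)$ from the structure theorem (Theorem~\ref{thm-Strong qRCDS-struc}), while you use the RCDS definition directly and also supply the range-inclusion justification the paper leaves implicit (strictly, $K\mapsto (I\otimes K^*)e$ is conjugate-linear rather than linear, which is harmless).
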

  \begin{proof}
      By Theorem~\ref{thm-Strong qRCDS-struc}, there exist Hermitian matrices $H,K$ such that $C_{\Phi}=P_{\Phi}(I\otimes H+K\otimes I)P_{\Phi}$. Note that for any $U \in S_{\Phi}\cap\mcl{U}(d)$, $\mathrm{Ad}_U$ is a unital quantum channel that satisfies $\Phi-\lambda\mathrm{Ad}_U$ is CP for some $\lambda>0$. Therefore, as shown in Theorem~\ref{thm-Strong qRCDS-struc}, we get $\ip{\mathrm{Ad}_U, \Phi} = \tr(H+K)$. Hence
      \begin{align*}
          \max_{U\in \mcl{U}(d)}\ip{\mathrm{Ad}_U,\Phi}&=\max\Big\{\max_{U\in S_{\Phi}\cap\mcl{U}(d)}\ip{\mathrm{Ad}_U,\Phi},\max_{U\notin S_{\Phi}}\ip{\mathrm{Ad}_U,\Phi}\Big\}\\
          &=\max\big\{\tr(H+K),\max_{U\notin S_{\Phi}}\ip{\mathrm{Ad}_U,\Phi}\big\}.
      \end{align*}
      Suppose $\Phi$ is a mixed unitary channel given by $\Phi=\sum_{j=1}^m\lambda_j\mathrm{Ad}_{U_j}$. Then $U_j\in S_{\Phi}\cap\mcl{U}(d)$. Consequently,
      $\ip{\Phi,\Phi}=\sum_{j=1}^m\lambda_j\ip{\mathrm{Ad}_{U_j},\Phi}=\tr(H+K)=\max_{U\in S_{\Phi}\cap\mcl{U}(d)}\ip{\mathrm{Ad}_U,\Phi}$. Thus, $\Phi$ is Erd\H{o}s if and only if $\ip{\Phi,\Phi}=\max_{U\in \mcl{U}(d)}\ip{\mathrm{Ad}_U,\Phi}$ if and only if $\tr(H+K)=\ip{\Phi,\Phi}\geq \ip{\mathrm{Ad}_U,\Phi}$ for all unitary $U\notin S_{\Phi}$.
  \end{proof}

We continue Example~\ref{quantumRCDS_example} and show that the class of channels presented there is also Erd\H{o}s.

\begin{eg} \label{quantumRCDS-example-continue}
Recall from Example~\ref{quantumRCDS_example} that the $k$-average of $\mathcal B$, given by $\Phi = \frac{1}{k}\sum_{W \in S} \mathrm{Ad}_W$,  is quantum RCDS.  
Since the set $S$ is orthogonal, it follows that $S$ is linearly independent, and hence the Choi-rank of $\Phi$ is exactly $k$. 
Moreover, since $S$ is orthogonal with $|S| = k$, we have
\begin{eqnarray} \label{innerproduct-phik}
    \ip{\Phi, \Phi} = \frac{1}{k^2}\sum_{A, B \in S}  \ip{\mathrm{Ad}_A, \mathrm{Ad}_B}
    \overset{\eqref{Ad-Ip-relation}}= \frac{1}{k^2}\sum_{A, B \in S}  |\tr(A^*B)|^2 
    = \frac{d^2}{k^2}\sum_{A \in S} 1 = \frac{d^2}{k}.
\end{eqnarray}
On the other hand, for any unitary $U$, by the Bessel's inequality,
\begin{eqnarray*}
    \ip{\mathrm{Ad}_U, \Phi} = \frac{1}{k}\sum_{W \in S}\ip{\mathrm{Ad}_U, \mathrm{Ad}_W} \overset{\eqref{Ad-Ip-relation}}= \frac{1}{k}\sum_{W \in S}\abs{\tr(U^*W)}^2 
    &\leq& \frac{d}{k} \|U\|_2^2 = \frac{d^2}{k}.
\end{eqnarray*}
This together with \eqref{innerproduct-phik} implies that $\max_{U \in \mcl{U}(d)}\ip{\mathrm{Ad}_U, \Phi} \leq \ip{\Phi, \Phi}$, and hence by Theorem~\ref{thm-RCDS-Erdos}, $\Phi$ is Erd\H{o}s.
\end{eg}
  
As a consequence of Theorem~\ref{thm-RCDS-Erdos}, we also recover the classification result \cite[Lemma~1.6]{PHSK26} related to Erd\H{o}s matrices.  
  
  \begin{cor}\label{classical-Erdos}
      Let $A=[(u_i+v_j)Skel(A)_{ij}]_{i,j=1}^d$ be a doubly stochastic RCDS matrix. Then $A$ is Erd\H{o}s if and only if $$\sum_{\{i,j: P_{ij}=1, a_{ij}=0\}} (u_i+v_j)\geq 0$$ for every permutation matrix $P=[P_{ij}]\notin \mcl{P}_d(A) =\{P\in\mcl{P}_d: Skel(A)\geq P ~ \text{(entrywise)}\}$.
  \end{cor}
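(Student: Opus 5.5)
The plan is to reduce the statement to the classical Erd\H{o}s condition $\ip{A,A}=\max_{P\in\mcl{P}_d}\ip{P,A}$ and then to exploit the RCDS form of $A$ directly. The quantum machinery could be used instead, via Corollary~\ref{thm-RCDS-quaRCDS} and Theorem~\ref{thm-RCDS-Erdos} applied to $\Phi_A$. However, \eqref{eq-classical-quantum-ip-1} already turns the maximum over unitaries into a maximum over permutations, so the cleanest route is the classical one. By Remark~\ref{rmk-Erdos-Erdos channel}, $A$ is Erd\H{o}s if and only if $\ip{A,A}=\max_{P\in\mcl{P}_d}\ip{P,A}$. This is the same dichotomy as in the proof of Theorem~\ref{thm-RCDS-Erdos}, with $S_\Phi\cap\mcl{U}(d)$ replaced by $\mcl{P}_d(A)$.

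The first step is to compute the common value $c=\ip{P,A}$ for $P\in\mcl{P}_d(A)$, where $P$ corresponds to a permutation $\sigma$. Since $Skel(A)_{i\sigma(i)}=1$ for such $P$, we get
\[
\ip{P,A}=\sum_i a_{i\sigma(i)}=\sum_i\bigl(u_i+v_{\sigma(i)}\bigr)=\sum_i u_i+\sum_j v_j .
\]
Next, I show that $\ip{A,A}=c$. By Birkhoff's theorem in the refined form \cite[Lemma~2.4]{PHSK26}, $A=\sum_\ell\lambda_\ell P_\ell$ with every $P_\ell\le Skel(A)$, that is, every $P_\ell\in\mcl{P}_d(A)$. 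Hence $\ip{A,A}=\sum_\ell\lambda_\ell\ip{P_\ell,A}=c$.

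For an arbitrary permutation $P$ with permutation $\sigma$, I split the diagonal sum according to whether $a_{i\sigma(i)}$ vanishes:
\[
\ip{P,A}=\sum_{i:\,a_{i\sigma(i)}\neq 0}\bigl(u_i+v_{\sigma(i)}\bigr)
=\sum_i\bigl(u_i+v_{\sigma(i)}\bigr)-\sum_{\{i,j:\,P_{ij}=1,\ a_{ij}=0\}}(u_i+v_j)
=c-\sum_{\{i,j:\,P_{ij}=1,\ a_{ij}=0\}}(u_i+v_j).
\]
Therefore $\ip{P,A}\le\ip{A,A}$ holds if and only if the stated sum is nonnegative. For $P\in\mcl{P}_d(A)$ the sum is empty, so these permutations are automatically fine. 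Taking the maximum over all $P$ gives the equivalence: $A$ is Erd\H{o}s if and only if the sum is $\ge 0$ for every $P\notin\mcl{P}_d(A)$.

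The only subtle point is that $\ip{A,A}=c$ requires a Birkhoff decomposition supported inside $Skel(A)$. This holds because any permutation matrix appearing with positive weight in a convex decomposition of $A$ satisfies $P\le Skel(A)$, since all terms are nonnegative. There is also a convention issue: the entries where $Skel(A)_{ij}=1$ are exactly those where $a_{ij}\neq0$, so for $a_{ij}=0$ the quantity $u_i+v_j$ is simply the formal value and need not itself vanish. Everything else is routine bookkeeping.
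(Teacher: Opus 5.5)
Your argument is correct, and it is more direct than the paper's.

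The paper opens with exactly your computation. It shows $\ip{A,A}=\ip{Q,A}=\sum_j(u_j+v_j)$ for $Q\in\mcl{P}_d(A)$, and $\ip{A-P,A}=\sum_{\{i,j:P_{ij}=1,\,a_{ij}=0\}}(u_i+v_j)$. You then simply take the maximum over $\mcl{P}_d=\mcl{P}_d(A)\cup(\mcl{P}_d\setminus\mcl{P}_d(A))$. The paper instead proves the reduction ``$A$ is Erd\H{o}s iff $\ip{A,A}\geq\ip{P,A}$ for all $P\notin\mcl{P}_d(A)$'' by passing to $\Phi_A$. It invokes Corollary~\ref{thm-RCDS-quaRCDS} and Theorem~\ref{thm-RCDS-Erdos}. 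It then matches the condition on unitaries $U\notin S_{\Phi_A}$ with the condition on permutations $P\notin\mcl{P}_d(A)$, in two steps: it shows that $P\notin\mcl{P}_d(A)$ forces $P^*\notin S_{\Phi_A}$, and it Birkhoff-decomposes $B(U)$ for the converse.

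That detour exists to exhibit the classical result as a specialization of the quantum RCDS theorem. Logically it is unnecessary. Your route shows the statement is purely combinatorial once the form $A=[(u_i+v_j)Skel(A)_{ij}]$ is given.

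Your version also makes explicit something the paper only asserts (``since $A$ is RCDS''). Namely, $\ip{A,A}$ equals the common diagonal sum because every permutation with positive weight in a Birkhoff decomposition of $A$ lies in $\mcl{P}_d(A)$. The same observation shows $\mcl{P}_d(A)\neq\emptyset$. That gives the reverse inequality $\ip{A,A}\leq\max_{P}\ip{P,A}$ needed for the ``if'' direction, which you use only implicitly; it is worth one sentence.

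Finally, the appeal to Remark~\ref{rmk-Erdos-Erdos channel} is superfluous. The identity $\ip{A,A}=\max_{P}\ip{P,A}$ is simply the definition of an Erd\H{o}s matrix.
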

  \begin{proof}
      Let $A$ be a doubly stochastic RCDS matrix and let $P\notin \mcl{P}_d(A)$ and $Q\in\mcl{P}_d(A)$  be the permutation matrices given by $Qe_j=e_{\sigma(j)}$ and $Pe_j=e_{\pi(j)}$, for some permutation $\sigma$ and $\pi$. Since $A$ is RCDS, we get $\ip{A,A}=\ip{Q,A}=\sum_{j=1}^d u_j+v_j$ for any $Q\in\mcl{P}_d(A)$. Therefore
      \begin{align*}
          \ip{A-P,A} &=\sum_{j=1}^d u_j+v_j-\sum_{j=1}^d a_{\pi(j)j}
          =\sum_{j=1}^d u_j+v_j-\sum_{j:a_{\pi(j)j}\neq 0}u_{\pi(j)}+v_j\\
          &=\sum_{j:a_{\pi(j)j}= 0}u_{\pi(j)}+v_j
          =\sum_{i,j: P_{ij}=1, a_{ij}=0} u_i+v_j.
      \end{align*}
      Thus, it is enough to show that $A$ is Erd\H{o}s if and only if $\ip{A,A}\geq \ip{P,A}$ for all permutation matrix $P\notin\mcl{P}_d(A)$. 
       By Theorem~\ref{thm-RCDS-quaRCDS} the map $\Phi_A$ given by ~\eqref{phiA} is a quantum RCDS mixed unitary channel. From equations ~\eqref{eq-classical-quantum-ip-relation} and ~\eqref{eq-classical-quantum-ip-1}, we also have $\ip{\Phi_A,\Phi_A}=\ip{A,A}$ and $\ip{\mathrm{Ad}_U,\Phi_A}=\ip{B(U),A}$, where $B(U)_{ij}=\abs{U_{ji}}^2$ and $U$ be any unitary. Now, by Remark~\ref{rmk-Erdos-Erdos channel} and Theorem~\ref{thm-RCDS-Erdos}, it is enough to show $\ip{A,A}\geq \ip{P,A}$ for all permutation matrix $P\notin\mcl{P}_d(A)$  if and only if $\ip{A,A}\geq \ip{B(U),A}$ for every unitary $U\notin S_{\Phi}$. For, first assume that $\ip{A,A}\geq \ip{B(U),A}$ for every unitary $U\notin S_{\Phi}$. Now, let $P\notin\mcl{P}_d(A)$, then we claim that $P^*\notin S_{\Phi_A}$. Suppose not, that is, $P^*\in S_{\Phi_A}$, then  $\text{range}(C_{\mathrm{Ad}_{P^*}})\subseteq \text{Range}(C_{\Phi_A})$, or equivalently, there exists $\lambda>0$ such that $C_{\Phi}-\lambda C_{\mathrm{Ad}_{P^*}}$ is positive semidefinite. In particular, for every $1\leq r,s\leq d$,
 \begin{align*}
     0\leq \ip{e_r\otimes e_s,(C_{\Phi_A}-\lambda C_{\mathrm{Ad}_P^*})e_r\otimes e_s}=a_{sr}-\lambda \abs{P_{sr}}^2= a_{sr}-\lambda P_{sr}.
 \end{align*}
 This shows that $A-\lambda P$ is entrywise nonnegative matrix, and hence $P\in\mcl{P}_d(A)$. This is a contraction and proves the claim. By our assumption, $\ip{A,A}\geq \ip{B(P^*),A}=\ip{P,A}$. Conversely, assume that $\ip{A,A}\geq\ip{P,A}$ for all permutation matrix $P\notin\mcl{P}_d(A)$. Let $U\notin S_{\Phi}$ be unitary, then $B(U)$ is a doubly stochastic matrix, and hence by Birkhoff Theorem $B(U)=\sum_{j=1}^m\lambda_jP_j$ for some permutations $P_j$ and scalrs $\lambda_j\in(0,1)$ with $\sum_{j=1}^m\lambda_j=1$.  Since $A$ is RCDS, $\ip{A,A}=\ip{P_j,A}$ for all $P_j\in\mcl{P}_d(A)$. So, 
 \begin{align*}
    \ip{A,A}-\ip{B(U),A}=\sum_{j=1}^m\lambda_j(\ip{A,A}-\ip{P_j,A})=\sum_{\{j: P_j\notin \mcl{P}_d(A)\}}\lambda_j(\ip{A,A}-\ip{P_j,A})\geq 0
 \end{align*}
 This completes the proof.
  \end{proof}

Towards a solution to Question~\ref{all rank -question}, we first obtain the following result, which classifies Erd\H{o}s channels among all the mixed unitary channels having a representation for which the set of Kraus operators forms an orthogonal set. 
\begin{thm}
\label{Erdos-d}
Let $\mathcal B$ be an orthogonal set of unitary elements in $\M{d}$ with $|\mathcal B| = m$. Then, any $k$-average of $\mathcal B$ is an Erd\H{o}s channel. Conversely, 
suppose that $\Phi$ is a mixed $\mathcal B$-unitary channel on $\M{d}$. If $\Phi$ is Erd\H{o}s, then 
$\Phi$ is $k$-average of $\mathcal B$ for some $1 \leq k \leq m$. In such a case, $k$ is uniquely determined and equal to the Choi-rank of $\Phi$.
\end{thm}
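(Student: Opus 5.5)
The first assertion is already contained in Example~\ref{quantumRCDS-example-continue}: a $k$-average of $\mathcal B$ is quantum RCDS by Example~\ref{quantumRCDS_example}. The same example gives $\ip{\Phi,\Phi}=d^2/k$, and Bessel's inequality gives $\ip{\mathrm{Ad}_U,\Phi}\le d^2/k$ for every unitary $U$. Hence equality holds in \eqref{Marcus-Ree-noncom-Ineq}; the bound on all unitaries makes Theorem~\ref{thm-RCDS-Erdos} unnecessary here. So the real content is the converse.

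For the converse, write $\Phi=\sum_{W\in S}\lambda_W\mathrm{Ad}_W$ with $S\subseteq\mathcal B$ and all $\lambda_W>0$, $\sum_W\lambda_W=1$. First I will compute $\ip{\Phi,\Phi}$ using Lemma~\ref{U-W-IP} and orthogonality, namely $\tr(V^*W)=d\,\delta_{V,W}$ for $V,W\in\mathcal B$. This gives
\[
\ip{\Phi,\Phi}=\sum_{V,W\in S}\lambda_V\lambda_W\abs{\tr(V^*W)}^2=d^2\sum_{W\in S}\lambda_W^2,
\]
and similarly $\ip{\mathrm{Ad}_{V},\Phi}=d^2\lambda_V$ for each $V\in S$. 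Since $\Phi$ is Erd\H{o}s, Lemma~\ref{erdos-kraus-attain} forces $d^2\lambda_V=\ip{\Phi,\Phi}$ for every $V\in S$. Thus all $\lambda_V$ are equal, hence $\lambda_V=1/k$ with $k=|S|$, and $\Phi=\frac1k\sum_{W\in S}\mathrm{Ad}_W$ is a $k$-average of $\mathcal B$, with $1\le k\le m$.

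For uniqueness of $k$, I will note that an orthogonal family of nonzero matrices is linearly independent. So the Kraus space $S_\Phi=\lspan S$ has dimension $k$, and the Choi-rank, which equals the dimension of the span of any Kraus family, is $k$. Alternatively, $\ip{\Phi,\Phi}=d^2/k$ already determines $k$.

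The only delicate point is bookkeeping in the converse. A mixed $\mathcal B$-unitary representation might list the same unitary twice, or use scalar multiples of elements of $\mathcal B$ that yield the same $\mathrm{Ad}$. I will handle this by first merging repeated terms so that the $W\in S$ are distinct elements of $\mathcal B$ with positive weights. With distinct elements, the orthogonality computation above is legitimate. Beyond this, no step seems to be an obstacle.
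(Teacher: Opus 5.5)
Your proof is correct and follows the paper's argument. The forward direction uses the Bessel bound of Example~\ref{quantumRCDS-example-continue}, and the converse combines the orthogonality computation $\ip{\Phi,\Phi}=d^2\sum_W\lambda_W^2$, $\ip{\mathrm{Ad}_V,\Phi}=d^2\lambda_V$ with Lemma~\ref{erdos-kraus-attain}. The only difference is that you apply Lemma~\ref{erdos-kraus-attain} to every $V$ with $\lambda_V>0$, which forces equal weights at once; the paper applies it only to the largest weight $\lambda_1$ and then deduces $\lambda_j\in\{0,\lambda_1\}$ from $\sum_j a_j^2=\sum_j a_j$ with $0\le a_j\le 1$, so your version is a small simplification.
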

\begin{proof}
Suppose that $\Phi$ is $k$-average of $\mathcal B$. Then, the Erd\H{o}s property follows from Example~\ref{quantumRCDS-example-continue}. 
Conversely, suppose that $\Phi$ is  a mixed $\mathcal B$-unitary channel and is Erd\H{o}s.
    Since $\Phi$ is mixed $\mathcal B$-unitary, we may enumerate the basis $\mathcal B$ by $\{U_j\}_{j=1}^{m}$ so that 
    \begin{equation*}
     \Phi= \sum_{j = 1}^{m} \lambda_j \mathrm{Ad}_{U_j}
    \end{equation*}
    with $\{\lambda_j\}_{j=1}^{m}$ satisfying
\begin{equation}\label{lambda-conditions-d}
\lambda_1 \geq \lambda_2 \geq \ldots \geq \lambda_{m} \geq 0~\text{such that}~\sum_{j=1}^{m} \lambda_j = 1.    
\end{equation}
Since $\{U_i\}_{i=1}^m$ is orthogonal, we have  
\begin{equation*} 
\ip{\Phi, \Phi} = \sum_{i,j=1}^{m} \lambda_i \lambda_j \ip{\mathrm{Ad}_{U_i},\mathrm{Ad}_{U_j}} \overset{\eqref{Ad-Ip-relation}}=\sum_{i,j=1}^{m} \lambda_i \lambda_j \abs{\tr\big(U_i^*U_j\big)}^2 = d^2\sum_{i=1}^{m} \lambda_i^2. 
\end{equation*}
On the other hand, observe that  
$$
\ip{\mathrm{Ad}_{U_1}, \Phi} = \sum_{j=1}^{m}\lambda_j \ip{\mathrm{Ad}_{U_1}, \mathrm{Ad}_{U_j}} = \sum_{j=1}^{m}\lambda_j \abs{\tr\big(U_1^*U_j\big)}^2=\lambda_1 d^2.
$$ 
Therefore, 
since $\Phi$ is Erd\H{o}s, by Lemma~\ref{erdos-kraus-attain}, it follows that
$\lambda_1 = \sum_{j=1}^{m}\lambda_j^2.$
Observe that $\lambda_1$ is strictly positive. Thus, if we let $a_j = \frac{\lambda_j}{\lambda_1}$ for $j =1, \ldots, d^2$, then the above condition yields that 
\begin{equation}\label{cond-l^2-d}
    \sum_{j=2}^{m} a_j^2 = \frac{1}{\lambda_1} - 1.
\end{equation}
Furthermore, by \eqref{lambda-conditions-d}, 
$$\sum_{j=2}^{m} a_j = \frac{1}{\lambda_1} - 1~~\text{and}~~0 \leq a_3 \leq a_2 \leq a_1 \leq 1,$$ 
which combined with \eqref{cond-l^2-d} yields that 
$\sum_{j=2}^{m} a_j^2 = \sum_{j=2}^{m} a_j.$ 
Thus, since $a_j^2 \leq a_j$ for all $j =1, \ldots, m$, we finally conclude that $a_j \in \{0,1\},~j =2, \ldots, m$, or equivalently, $\lambda_j \in \{0, \lambda_1\},~j =2, \ldots, m$. Therefore, if the Choi-rank of $\Phi$ is $k$, then $\lambda_j = \lambda_1$ for all $j=1, \ldots, k$, and $\lambda_{k+1} =0$ (which implies that remaining $\lambda_j$'s are also zero). This together with \eqref{lambda-conditions-d} yields that $\lambda_j = \frac{1}{k}$ for all $j =1, \ldots, k$.
Equivalently, $\Phi$ takes the form 
$
\Phi= \frac{1}{k}\sum_{j = 1}^{k} \mathrm{Ad}_{U_j}.
$
This completes the proof.
\end{proof}

Let $\omega$ be the $d$-th root of unity and let $U \in \M{d}$ be defined by $Ue_i = e_{i+1}$ for all $i=1, \ldots, d-1$ and $Ue_d=e_1$, and let $V = diag(1, \omega, \ldots, \omega^{d-1})$. Note that both $U$ and $V$ are unitary matrices and $U^d = V^d =I_d$.  Then, the collection $\{d^{-1/2}U^mV^n : m,n = 0,1, \ldots, d-1\}$ is an orthonormal basis of $\mbb M_{d}$ consisting of unitaries. This is also known as \it{Weyl-unitary basis}.
As an application of Theorem~\ref{Erdos-d}, we provide an affirmative answer to Question~\ref{all rank -question}.
\begin{cor}\label{Erdos-d-weyl}
Let $d \geq 2$, and let $k$ be an integer such that $1 \leq k \leq d^2$. The following are valid:
\begin{itemize}
    \item [(i)] there exists an Erd\H{o}s Weyl-unitary channel whose Choi-rank is equal to $k$,
    \item [(ii)] any Erd\H{o}s Weyl-unitary channel of Choi-rank $k$ (up to unitary equivalence) is given by 
    $$\frac{1}{k}\sum_{W \in S} \mathrm{Ad}_W$$ for some set $S$ of Weyl-unitaries with $|S| = k$.
\end{itemize}
\end{cor}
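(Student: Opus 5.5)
The plan is to apply Theorem~\ref{Erdos-d} with $\mathcal B$ taken to be the Weyl-unitary basis
$$\mathcal{W}:=\{U^mV^n : m,n=0,1,\ldots,d-1\}.$$
The first step is to check that $\mathcal W$ is an orthogonal set of unitaries with $|\mathcal W|=d^2$. Unitarity is immediate, since $U$ and $V$ are unitary. Orthogonality follows from the stated fact that $\{d^{-1/2}U^mV^n\}$ is an orthonormal basis of $\M{d}$; concretely, one uses the commutation relation $VU=\omega UV$ and the vanishing of $\tr(U^aV^b)$ unless $a\equiv b\equiv 0 \pmod d$. Distinctness of the $d^2$ elements also comes from this orthonormality. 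By a ``Weyl-unitary channel'' I mean a mixed $\mathcal W$-unitary channel.

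For (i), fix $1\le k\le d^2$ and any subset $S\subseteq\mathcal W$ with $|S|=k$, which exists because $|\mathcal W|=d^2$. Set $\Phi=\frac1k\sum_{W\in S}\mathrm{Ad}_W$. By Theorem~\ref{Erdos-d} (equivalently Example~\ref{quantumRCDS-example-continue}), $\Phi$ is Erd\H{o}s. Since $S$ is orthogonal, it is linearly independent, so it forms a minimal Kraus family and the Choi-rank of $\Phi$ is exactly $k$, as noted in Example~\ref{quantumRCDS-example-continue}.

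For (ii), let $\Phi$ be an Erd\H{o}s mixed $\mathcal W$-unitary channel of Choi-rank $k$. The converse part of Theorem~\ref{Erdos-d} shows that $\Phi$ is a $k'$-average of $\mathcal W$, with $k'$ equal to the Choi-rank. Hence $k'=k$ and $\Phi=\frac1k\sum_{W\in S}\mathrm{Ad}_W$ for some $S\subseteq\mathcal W$ with $|S|=k$. This gives the form on the nose; the phrase ``up to unitary equivalence'' only allows more freedom, and one may additionally normalize via \eqref{Erdos-preserve-equivalence} so that $I\in S$ by composing with $\mathrm{Ad}_{W_0^*}$ for some $W_0\in S$. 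For that normalization I will note that $\{W_0^*W : W\in S\}$ again consists of Weyl-unitaries up to unimodular scalars, using $VU=\omega UV$, and that scalars do not affect $\mathrm{Ad}$.

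The only real obstacle is the orthogonality of $\mathcal W$, which comes down to the trace computation
$$\tr\big((U^mV^n)^*U^{m'}V^{n'}\big)=d\,\delta_{mm'}\delta_{nn'}.$$
This follows from the fact that $U^a$ has zero diagonal for $a\not\equiv0$, so that the trace vanishes, while for $a\equiv 0$ the trace reduces to $\sum_j\omega^{jb}=d\,\delta_{b0}$.
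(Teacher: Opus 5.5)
Your proposal is correct and follows the paper's proof. Both take $\mathcal B$ to be the Weyl basis, derive (i) from the $k$-average direction of Theorem~\ref{Erdos-d} (with Choi-rank $k$ coming from linear independence), and read off (ii) from the converse part of that theorem. Your choice of the unnormalized family $\{U^mV^n\}$ is the correct one, since the paper's $d^{-1/2}U^mV^n$ are not unitaries, and your explicit trace computation supplies the orthogonality that the paper only asserts.
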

\begin{proof}
Let $k$ be an integer such that $1 \leq k \leq d^2$, and let $\mathcal B = \{d^{-1/2}U^mV^n : m,n = 0,1, \ldots, d-1\}$ be the Weyl-unitary basis of $\M{d}$. To see (i), consider a subset $S \subseteq \mathcal B$ of cardinality $k$. Since $\mcl B$ is an orthonormal basis of $\mbb M_{d}$ consisting of unitaries, if we define
$$\Phi_{k, S} := \frac{1}{k}\sum_{W \in S} \mathrm{Ad}_W,$$
 then $\Phi_{k, S}$ is $k$-average of $\mathcal B$. Hence, by Theorem~\ref{Erdos-d}, $\Phi_{k, S}$ is Erd\H{o}s with its Choi-rank equal to $k$. This concludes the part (i). The part (ii) is immediate from Theorem~\ref{Erdos-d}. 
\end{proof}

\subsection{Erd\H{o}s channels of Choi-rank $2$} In this subsection, we classify all the Erd\H{o}s channels of Choi-rank $2$. We begin by presenting the following result, which establishes a necessary condition for a unital quantum channel to be Erd\H{o}s. This result can be viewed as a quantum analog of \cite[Proposition~1.2]{Rag25}.

\begin{lem}\label{lem-Erd-scalar-condition}
    Let $\Phi$ be a mixed unitary channel given by $\Phi=\sum_{j=1}^\ell \lambda_j \mathrm{Ad}_{U_j}$, where $\ell$ is the mixed unitary rank of $\Phi$. Let $A=\big[\abs{\tr(U_i^*U_j)}^2\big]_{i,j=1}^\ell$. Then, $A$ is invertible. Moreover, if $\Phi$ is Erd\H{o}s, then $(\lambda_1,\lambda_2,\cdots,\lambda_\ell)^t=\frac{A^{-1}\mbf{1}}{\ip{\mbf{1},A^{-1}\mbf{1}}}$, where $\mathbf{1}$ is the column vector with all entries equal to $1$.
\end{lem}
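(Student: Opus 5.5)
The plan is to read $A$ as a Gram matrix and use the minimality of $\ell$ for invertibility, then derive the formula for $\lambda$ from Lemma~\ref{erdos-kraus-attain}.

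\emph{Step 1: $A$ is a Gram matrix.} By Lemma~\ref{U-W-IP}, $A_{ij}=\abs{\tr(U_i^*U_j)}^2=\ip{\mathrm{Ad}_{U_i},\mathrm{Ad}_{U_j}}$. So $A$ is the Gram matrix of $\mathrm{Ad}_{U_1},\dots,\mathrm{Ad}_{U_\ell}$ in $\B{\M{d},\M{d}}$ with the Choi inner product. It is real symmetric and positive semidefinite. Hence it is invertible if and only if these maps are linearly independent.

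\emph{Step 2: invertibility via minimality.} Suppose $A$ is singular. Since $A$ is real symmetric, its kernel contains a nonzero real vector $c=(c_1,\dots,c_\ell)$. Then $\norm{\sum_j c_j\mathrm{Ad}_{U_j}}^2=\ip{c,Ac}=0$, so $\sum_j c_j\mathrm{Ad}_{U_j}=0$. Evaluating at $I_d$ gives $\sum_j c_j=0$, so $c$ has a strictly positive entry.

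By minimality of $\ell$, all $\lambda_j>0$. Let $t=\min\{\lambda_j/c_j : c_j>0\}>0$ and set $\lambda'=\lambda-tc$. Then $\lambda'$ has nonnegative entries summing to $1$, at least one entry vanishes, and $\Phi=\sum_j\lambda'_j\mathrm{Ad}_{U_j}$. This is a mixed unitary representation with fewer than $\ell$ unitaries, contradicting minimality. So $A$ is invertible.

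\emph{Step 3: the formula for $\lambda$.} Assume $\Phi$ is Erd\H{o}s and put $M:=\ip{\Phi,\Phi}$. By Lemma~\ref{erdos-kraus-attain}, $\ip{\mathrm{Ad}_{U_i},\Phi}=M$ for every $i$. Expanding $\Phi$ gives $\ip{\mathrm{Ad}_{U_i},\Phi}=\sum_j\lambda_jA_{ij}=(A\lambda)_i$. Hence $A\lambda=M\mathbf 1$, so $\lambda=MA^{-1}\mathbf 1$.

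Taking the inner product with $\mathbf 1$ and using $\sum_j\lambda_j=1$ gives $1=M\ip{\mathbf 1,A^{-1}\mathbf 1}$. In particular $\ip{\mathbf 1,A^{-1}\mathbf 1}\neq0$, and indeed $M=\ip{\Phi,\Phi}>0$ since $\Phi\neq0$. Substituting $M=1/\ip{\mathbf 1,A^{-1}\mathbf 1}$ yields the stated formula.

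The only non-routine step is Step 2, turning a linear dependence among the $\mathrm{Ad}_{U_j}$ into a shorter convex representation. The key points there are that the dependence coefficients can be taken real and that evaluation at $I_d$ forces them to sum to zero.
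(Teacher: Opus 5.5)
Your proof is correct and follows the paper's argument: $A$ as a real Gram matrix, a real linear dependence among the $\mathrm{Ad}_{U_j}$ evaluated at $I_d$ to force zero coefficient sum, a shift along that dependence to kill one coefficient (the paper adds $r\alpha$ using the negative entries, where you subtract $tc$ using the positive ones), and then Lemma~\ref{erdos-kraus-attain} to get $A\lambda=\ip{\Phi,\Phi}\mathbf{1}$. Your explicit observation that minimality of $\ell$ forces every $\lambda_j>0$ is worth keeping, because that is what justifies applying Lemma~\ref{erdos-kraus-attain} at every index; the paper leaves this implicit.
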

\begin{proof}
Observe that $A$ is a real Gram matrix. Thus, to check the invertibility of $A$, it is enough to show that $\{\mathrm{Ad}_{U_j}\}_{j=1}^\ell$ is a linearly independent set over $\mathbb R$. Suppose the set $\{\mathrm{Ad}_{U_j}\}_{j=1}^\ell$ is linearly dependent, i.e., there exists a non-zero vector $(\alpha_1,\alpha_2,\ldots,\alpha_\ell)\in\mbb{R}^{\ell}$ such that $\sum_{j=1}^\ell \alpha_j \mathrm{Ad}_{U_j} = 0$. Evaluating this at the identity matrix gives that
\begin{equation}
\label{sum-coeff-zero}
\sum_{j=1}^\ell \alpha_j = 0.
\end{equation}
Since $(\alpha_1,\ldots,\alpha_\ell)$ is a non-zero vector, $I: = \{j : \alpha_j < 0\} \neq \emptyset$. Let $r = \min\{\frac{-\lambda_j}{\alpha_j} : j \in I\}$.
Then, $\lambda_j + r \alpha_j \geq 0$ for all $j$ and there exists $j_0 \in \{1, \ldots, \ell\}$ such that $\lambda_{j_0} + r \alpha_{j_0} = 0$. Hence, by \eqref{sum-coeff-zero},  
$$
\sum_{\substack{j=1 \\ j \neq j_0}}^\ell \lambda_j + r \alpha_j =1 \quad \text{and} \quad \Phi = \sum_{j=1}^\ell \lambda_j \mathrm{Ad}_{U_j} = \sum_{\substack{j=1 \\ j \neq j_0}}^\ell (\lambda_j + r \alpha_j) \mathrm{Ad}_{U_j}, 
$$
which contradicts the fact that the mixed unitary rank of $\Phi$ is $\ell$. This shows that $A$ is invertible.

Now, suppose that $\Phi$ is an Erd\H{o}s channel. Then, by Lemma~\ref{erdos-kraus-attain},
$
\ip{\Phi, \Phi} = \ip{\mathrm{Ad}_{U_j}, \Phi},~ j = 1, \ldots, \ell.
$ Thus, by Lemma~\ref{U-W-IP},
 $$
 A \begin{bmatrix}
    \lambda_1 \\
    \lambda_2 \\
    \vdots \\
    \lambda_\ell
\end{bmatrix} =    \Big[\ip{\mathrm{Ad}_{U_i}, \mathrm{Ad}_{U_j}}\Big]_{i,j=1}^\ell\begin{bmatrix}
    \lambda_1 \\
    \lambda_2 \\
    \vdots \\
    \lambda_\ell
\end{bmatrix} = \begin{bmatrix}
    \ip{\mathrm{Ad}_{U_1},\Phi} \\
    \ip{\mathrm{Ad}_{U_2},\Phi} \\
    \vdots \\
    \ip{\mathrm{Ad}_{U_\ell},\Phi} 
\end{bmatrix} = \ip{\Phi, \Phi}  \begin{bmatrix}
    1 \\
    1 \\
    \vdots \\
    1
\end{bmatrix}.
 $$
 Since $A$ is invertible, it then follows that $(\lambda_1, \ldots, \lambda_\ell)^t = \ip{\Phi, \Phi}A^{-1}\mathbf{1}$, and hence $$\ip{\Phi,\Phi}\ip{\mbf{1},A^{-1}\mbf{1}} = \sum_{i=1}^\ell \lambda_i  =1.$$ 
Therefore,  $(\lambda_1, \ldots, \lambda_\ell)^t = \frac{A^{-1}\mbf{1}}{\ip{\mbf{1},A^{-1}\mbf{1}}}$. This completes the proof.
\end{proof}

We will now demonstrate that the necessary condition stated above for a mixed unitary channel to be Erd\H{o}s is not generally sufficient. For instance, let $d =3$ and let $\omega$ be the cube root of unity. Suppose $\Phi = \lambda_1 \id + \lambda_2 \mathrm{Ad}_{U_2} + \lambda_3 \mathrm{Ad}_{U_3}$, where $U_2 = diag(1,1,-1)$ and $U_3 = diag(1, \omega, \omega^2)$. Clearly, the mixed unitary rank of $\Phi$ is at most $3$. Since $\{I_3, U_2, U_3\}$ is a linearly independent set, the Choi-rank of $\Phi$ is $3$. Because, the mixed unitary rank of $\Phi$ always bounded below by Choi-rank, the mixed unitary rank of $\Phi$ is equal to $3$. Also, note that the matrix $A = \big[|\tr(U_i^*U_j)|^2\big]_{i,j=1}^3$ is given by   
$$
A= \begin{bmatrix}
9 & 1 & 0 \\
1 & 9 & 4 \\
0 & 4 & 9
\end{bmatrix},
$$
which implies that $\frac{A^{-1}\mbf{1}}{\ip{\mbf{1},A^{-1}\mbf{1}}} = 
(\frac{5}{12}, \frac{1}{4}, \frac{1}{3}).
$ 
However, consider the channel $\Phi = \frac{5}{12}\id + \frac{1}{4}\mathrm{Ad}_{U_2} + \frac{1}{3} \mathrm{Ad}_{U_3}$. Then, $\ip{\Phi, \Phi} = 4$, while for $W = diag(1, -\omega^2, \omega^2)$, we get
$
\ip{\mathrm{Ad}_{W}, \Phi} = \frac{54}{12} > \ip{\Phi, \Phi}.
$
Hence, $\Phi$ fails to be Erd\H{o}s.

\begin{thm}[Characterization of Erd\H{o}s channels of Choi-rank $2$]\label{Rank2-classi}
   Let $\Phi$ be a mixed unitary channel with Choi-rank equal to $2$. Then, the following are equivalent:
   \begin{enumerate}[label=(\roman*)]
       \item $\Phi$ is Erd\H{o}s;
       \item (up to unitary equivalence) there exists unitary $U$ with either $\tr(U)=0$ or $U=zU^*$, for some $\abs{z}=1$ such that
       $\Phi=\frac{1}{2}(\id+\mathrm{Ad}_U)$.
   \end{enumerate}
\end{thm}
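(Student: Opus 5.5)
The plan is to reduce, by unitary equivalence, to a channel whose Kraus space is $\lspan\{I,W\}$ for a single unitary $W$. Then I split according to how many unitaries lie in that span, and in each case use Lemma~\ref{erdos-kraus-attain} for necessity and a direct bound on $\ip{\mathrm{Ad}_V,\Phi}$ for sufficiency.

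\textbf{Reduction.} Write $\Phi=\sum_j\lambda_j\mathrm{Ad}_{U_j}$. Each $U_j$ lies in the two-dimensional Kraus space $S_\Phi$, and $S_\Phi$ is spanned by the $U_j$. Replacing $\Phi$ by $\Phi\circ\mathrm{Ad}_{U_1^*}$ (allowed by \eqref{Erdos-preserve-equivalence} and Definition~\ref{equivalence-reln}), we may assume $S_\Phi=\lspan\{I,W\}$ with $W$ unitary and not a scalar multiple of $I$. Since $W$ is normal, a combination $aI+bW$ with $b\neq0$ is unitary iff
\[
(|a|^2+|b|^2)I+2\,\mathrm{Re}(\bar a bW)=I,
\]
that is, iff $\mathrm{Re}(\bar a b\,w)$ is the same for every eigenvalue $w$ of $W$. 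If $W$ has at least three distinct eigenvalues, no line meets the unit circle in three points, so this forces $a=0$. This gives two cases.

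\textbf{Case B: $W$ has at least three eigenvalues.} The only unitaries in $S_\Phi$ are multiples of $I$ and of $W$, so $\Phi=\lambda\,\id+(1-\lambda)\mathrm{Ad}_W$. Put $t=\tr W$, so $|t|<d$. Lemma~\ref{erdos-kraus-attain} together with Lemma~\ref{U-W-IP} gives
\[
\lambda d^2+(1-\lambda)|t|^2=\lambda|t|^2+(1-\lambda)d^2,
\]
hence $\lambda=\tfrac12$. Then $\ip{\Phi,\Phi}=\tfrac12(d^2+|t|^2)$, and for every unitary $V$ we have $\ip{\mathrm{Ad}_V,\Phi}=\tfrac12(|\tr V|^2+|\tr(V^*W)|^2)$. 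If $t=0$, then $I$ and $W$ are orthogonal and Bessel's inequality gives the bound, exactly as in Example~\ref{quantumRCDS-example-continue}. If $t\neq0$, diagonalize $W=\mathrm{diag}(w_k)$ and perturb with $V_\varepsilon=\mathrm{diag}(e^{i\varepsilon\varphi_k})$. The derivative at $\varepsilon=0$ of $|\tr V_\varepsilon|^2$ vanishes, and the derivative of $|\tr(V_\varepsilon^*W)|^2$ equals $2\sum_k\varphi_k\,\mathrm{Im}(\bar t w_k)$. This is nonzero for a suitable real choice of $(\varphi_k)$ unless every $w_k$ is real up to the phase of $t$, i.e.\ $w_k\in\{\pm t/|t|\}$. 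That would mean $W$ has at most two eigenvalues, which is excluded here, so $\Phi$ is not Erd\H{o}s.

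\textbf{Case A: $W$ has exactly two eigenvalues.} Then $S_\Phi=\lspan\{P,Q\}$ with $P+Q=I$ orthogonal projections, and the unitaries in $S_\Phi$ are $\gamma P+\delta Q$ with $|\gamma|=|\delta|=1$. Summing the decomposition gives
\[
\Phi(X)=PXP+QXQ+cPXQ+\bar cQXP,\qquad |c|\le 1.
\]
Conjugating by a diagonal phase $P+\zeta Q$ makes $c\in[0,1]$. Then
\[
\Phi=\tfrac{1+c}{2}\,\id+\tfrac{1-c}{2}\,\mathrm{Ad}_R,\qquad R=P-Q,
\]
and $c<1$ because the Choi rank is $2$. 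Since $|\tr R|<d$, the same computation via Lemma~\ref{erdos-kraus-attain} forces $c=0$. So $\Phi=\tfrac12(\id+\mathrm{Ad}_R)$ with $R=R^*$, which is the case $U=zU^*$ (for the original $W$, up to a phase). For sufficiency in this case, set $p=\tr P$, $q=\tr Q$, and for a unitary $V$ put $a=\tr(PVP)$ and $b=\tr(QVQ)$. Then $\tr V=a+b$ and $\tr(VR)=a-b$, so
\[
|\tr V|^2+|\tr(VR)|^2=2(|a|^2+|b|^2)\le 2(p^2+q^2)=d^2+(\tr R)^2.
\]
This is exactly $\max_V\ip{\mathrm{Ad}_V,\Phi}\le\ip{\Phi,\Phi}$. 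The converse for $\tr U=0$ is Example~\ref{quantumRCDS-example-continue}, and for a general $U=zU^*$ one reduces to the Case A computation since $U$ is then a phase times a self-adjoint unitary.

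\textbf{Main obstacle.} The delicate step is the non-Erd\H{o}s direction in Case B when $t\neq0$: producing an explicit unitary $V$ with $\ip{\mathrm{Ad}_V,\Phi}>\ip{\Phi,\Phi}$. I expect the first-order perturbation argument above to suffice. I will also have to check that the dichotomy (unitaries in $\lspan\{I,W\}$ versus the spectrum of $W$) is handled cleanly, and that the choice of $U_1$ in the reduction does not matter.
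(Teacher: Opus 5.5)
Your proof is correct and follows essentially the paper's route. Lemma~\ref{erdos-kraus-attain} (which the paper uses through its Gram-matrix form, Lemma~\ref{lem-Erd-scalar-condition}) forces the weights to be $\tfrac12$; a first-order variation at $V=I$ of $\abs{\tr V}^2+\abs{\tr(V^*U)}^2$ forces $\tr(U)U^*=\overline{\tr(U)}\,U$ (restricting to perturbations diagonal in the eigenbasis of $U$ suffices); and Bessel's inequality or the splitting $\tr V=\tr(PVP)+\tr(QVQ)$ gives sufficiency. Your spectral case split also proves the normal form $\lambda\,\id+(1-\lambda)\mathrm{Ad}_U$ for mixed unitary channels of Choi-rank $2$, which the paper only asserts at the start of its proof.
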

\begin{proof} Note that any unital quantum channel $\Phi$ of Choi-rank $2$, (up to unitary equivalence) can be written as
$$
\Phi = \lambda \id + (1-\lambda)\mathrm{Ad}_U,
$$
for some $\lambda\in (0,1)$ and $U \in \mcl U(d)$ such that the set $\{I,U\}$ is linearly independent.

$(i) \implies (ii)$ Assume that $\Phi$ is Erd\H{o}s. Since $\{I,U\}$ is a linearly independent set, it follows that mixed unitary rank of $\Phi$ is $2$.
Thus, by Lemma~\ref{lem-Erd-scalar-condition}, $$\frac{A^{-1}\mbf{1}}{\ip{\mbf{1},A^{-1}\mbf{1}}}=\begin{bmatrix}
    \lambda\\1-\lambda
\end{bmatrix},$$
where $A=\Matrix{d^2&\abs{\tr(U)}^2\\ \abs{\tr(U)}^2&d^2}$. Since $A^{-1}\mbf{1}= \frac{1}{d^2+\abs{\tr(U)}^2} \mbf{1}$, we then obtain that 
$$\frac{A^{-1}\mbf{1}}{\ip{\mbf{1},A^{-1}\mbf{1}}}= \frac{\mbf{1}}{\ip{\mbf{1}, \mbf{1}}} = \frac{1}{2}\mbf{1}.$$ This shows that $\lambda=\frac{1}{2}$. 

Now, to get the require conditions on unitary $U$, define the map $f: \mcl U(d) \rightarrow  \mathbb R$  by
$$f(W) =|\tr(W)|^2 + |\tr(W^*U)|^2,\quad W \in \mcl U(d).$$
Since $\Phi$ is Erd\H{o}s, by Lemma~\ref{U-W-IP}, 
$$
\frac{1}{2}\max_{W \in \mcl U(d)} f(W) =  \max_{W \in \mcl U(d)} \ip{\mathrm{Ad}_W, \Phi} = \ip{\Phi, \Phi} = \frac{1}{2} (d^2 + |\tr(U)|^2), 
$$
and hence, $\max_{W \in \mcl U(d)} f(W) = d^2 + |\tr(U)|^2 = f(I)$. This implies that the function $f$ has the global maximum at the identity matrix $I$. In particular, we have 
\begin{equation}
\label{derivative-zero}
   \lim_{t \rightarrow 0} \frac{d}{dt}f(e^{itH}) = 0 \quad \text{for all Hermitian matrices $H$.}
\end{equation}
For each Hermitian matrix $H$, since $f(e^{itH}) = \tr(e^{itH})\tr(e^{-itH})+ \tr(e^{-itH}U) \tr(e^{itH}U^*)$, we get
\begin{eqnarray*}
\frac{d}{dt}f(e^{itH}) &=&  \tr(e^{itH})\tr(-ie^{-itH}H) + \tr(e^{-itH})\tr(ie^{itH}H)+ \tr(e^{-itH}U) \tr(ie^{itH}U^*H) \notag\\
&&  + \tr(e^{itH}U^*) \tr(-ie^{-itH}UH).    
\end{eqnarray*}
Thus, we obtain
$$
\lim_{t \rightarrow 0} \frac{d}{dt}f(e^{itH}) = \tr(ZH), 
$$
where $Z= i \big(\tr(U)U^* - \overline{\tr(U)}U\big)$. Therefore, by \eqref{derivative-zero}, $\tr(ZH\big) = 0$. Since $Z$ is a Hermitian matrix and $H$ is arbitrary, it follows that $Z =0$. Consequently, $\tr(U)U^* = \overline{\tr(U)}U$. This concludes that either $\tr(U) = 0$, or $U = zU^*$, where $z= \big(\overline{\tr(U)}\big)^{-1} \tr(U){}$. 

$(ii) \implies (i)$  Suppose first that $\tr(U) = 0$. Then, $\{\frac{I_d}{\sqrt{d}},\frac{U}{\sqrt{d}}\}$ forms an orthonormal subset of $\M{d}$ with respect to the Hilbert-Schmidt inner product. By Lemma~\ref{U-W-IP} and the Bessel's inequality, 
\begin{align*}
     \ip{\mathrm{Ad}_W,\Phi}=\frac{1}{2}(\abs{\tr(W^*)}^2+\abs{\tr(W^*U)}^2)\leq \frac{d}{2}\norm{W}_2^2=\frac{d^2}{2}=\ip{\Phi,\Phi}, \quad W \in \mathcal U(d).
\end{align*}
This shows that $\max_{W\in\mcl{U}(d)}\ip{\mathrm{Ad}_W,\Phi}\leq \ip{\Phi,\Phi}.$  Now, the equality follows by using Theorem~\ref{Marcus-Ree-noncom}, implying that $\Phi$ is an Erd\H{o}s channel.

For the remaining case, let $U$ be an unitary such that $U=zU^*$ for some $\abs{z}=1$. Then, by Lemma~\ref{U-W-IP}, 
\begin{eqnarray} \label{average-2-new}
    \ip{\Phi,\Phi}=\ip{\frac{\id+\mathrm{Ad}_U}{2},\frac{\id+\mathrm{Ad}_U}{2}}=\frac{1}{2}(d^2+\abs{\tr(U)}^2).
\end{eqnarray}
Since $U = z U^*$ with $|z| = 1$, the unitary matrix $V := \sqrt{z} U$ satisfies $V^2 = I$. It also follows that the matrices $P_{\pm} := \frac{I_d \pm V}{2}$ are projections, whose ranks are given by
\begin{align}\label{rank-projections}
\mathrm{rank}(P_{\pm}) = \frac{d \pm \tr(V)}{2}.
\end{align}
Furthermore, for any $W \in \mcl{U}(d)$, again by Lemma~\ref{U-W-IP},
\begin{eqnarray*}
  \ip{\mathrm{Ad}_W, \Phi} 
    = \frac{|\tr(W)|^2 + |\tr(W^*U)|^2}{2}
    &=& \frac{|\tr(W)|^2 + |\tr(W^*V)|^2}{2}\\
    &=& \frac{|\tr(WP_+) + \tr(WP_-)|^2 + |\tr(WP_+) - \tr(WP_-)|^2}{2}\\
    &=&|\tr(WP_+)|^2 + |\tr(WP_-)|^2\\
    &\leq& \mathrm{rank}(P_+)^2 +\mathrm{rank}(P_-)^2\\
    &\overset{\eqref{rank-projections}}=& \frac{d^2 + \tr(V)^2}{2}=\frac{d^2 + |\tr(U)|^2}{2}.
\end{eqnarray*}
Therefore, by \eqref{average-2-new}, $\max_{W \in \mcl{U}(d)} \ip{\mathrm{Ad}_W, \Phi} \leq \ip{\Phi, \Phi}$. Now, the equality follows by using Theorem~\ref{Marcus-Ree-noncom}, implying that $\Phi$ is an Erd\H{o}s channel.
\end{proof}
In sharp contrast with the classical setting, we show below that Erd\H{o}s channels of Choi-rank $2$ admit uncountably many inequivalent classes.
\begin{rmk}
\label{inequivalent-channels-rank2}
Let $d=4$ and $U_j ~(j=1,2)$ be unitaries on $\M{4}$ given by
$$
U_j = \operatorname{diag}\!\left(1, e^{i\theta_j}, -1, -e^{i\theta_j}\right), \quad j=1,2,
$$
where $\theta_1 \neq \theta_2$ and $\theta_j \in (0, 2\pi) \setminus \{\pi\}$. Note that $\tr(U_j) = 0,~ j=1,2.$ Additionally, assume that
\begin{equation}\label{addi-assump}
\text{$\theta_1 \pm \theta_2\neq k\pi$  for any integer $k$}.
\end{equation}
We claim that the channels $\frac{1}{2}(\id+\mathrm{Ad}_{U_1})$ and $\frac{1}{2}(\id+\mathrm{Ad}_{U_2})$ are not unitarily equivalent. Suppose on the contrary that the maps $\frac{1}{2}(\id+\mathrm{Ad}_{U_1})$ and $\frac{1}{2}(\id+\mathrm{Ad}_{U_2})$ are unitarily equivalent.
Then, there exist unitaries $W_1$ and $W_2$ such that 
$
    \id+\mathrm{Ad}_{U_1}=\mathrm{Ad}_{W_2W_1}+\mathrm{Ad}_{W_2U_2W_1}.
$
Consequently, there exists $\begin{pmatrix}
\alpha & \beta\\
\gamma & \delta
\end{pmatrix} \in \mcl U(2)$ such that 
$$
\binom{W_2W_1}{W_2U_2W_1} = \begin{pmatrix}
\alpha & \beta\\
\gamma & \delta
\end{pmatrix}
\binom{I}{U_1}.
$$
This implies that $W_2W_1=\alpha+\beta U_1$, and therefore, the spectrum  $\sigma(W_2W_1)$ of $W_2W_1$ is $$\sigma(W_2W_1) =
\left\{
\alpha\pm \beta,
\alpha\pm\beta e^{i\theta_1}
\right\}.
$$
Since $W_2W_1$ is unitary, 
\begin{align*}
|\alpha+\beta|^2= |\alpha-\beta|^2 = 1 \quad \text{and} \quad  
|\alpha+\beta e^{i\theta_1}|^2 = |\alpha - \beta e^{i\theta_1}|^2=1,
\end{align*}
and hence, $\text{Re}(\alpha\overline{\beta}) = \text{Re}(\alpha\overline{\beta} e^{-i\theta_1}) = 0$. Therefore, $\alpha\overline{\beta} = iy$ for some $y \in \mathbb R$, which implies that $y \sin \theta_1 = \text{Re}(\alpha\overline{\beta} e^{-i\theta_1}) = 0$. Since $\theta_1 \in (0, \pi) \cup (\pi, 2\pi)$, it then follows that $y=0$, and hence 
$\alpha\overline{\beta}=0.$ 
Since $\begin{pmatrix}
\alpha & \beta\\
\gamma & \delta
\end{pmatrix} \in \mathcal U(2)$, it follows that either
$
(|\alpha|, |\beta|)=(1, 0)$,
or
$
(|\alpha|, |\beta|)=(0, 1)
$ 
and consequently,
\begin{align*}
\binom{W_2W_1}{W_2U_2W_1}~\text{is either}~ \binom{\alpha I}{\delta U_1},
~\text{or}~
\binom{\beta U_1}{\gamma I}.
\end{align*} 
Now, let us treat the case that $\binom{W_2W_1}{W_2U_2W_1} =  \binom{\alpha I}{\delta U_1}$ (the other case can be proved exactly in a similar way). Then, we have
$$
W_1^* U_2W_1 = (W_2W_1)^* (W_2U_2W_1) = \overline{\alpha} \delta U_1.
$$
As $\abs{\alpha}=\abs{\delta}=1$, we get that $W_1^* U_2W_1 = e^{i \theta} U_1$ for some $\theta \in [0, 2\pi)$, and hence, $\sigma(U_2) = \sigma(e^{i \theta} U_1).$ This combined with the fact 
$\sigma(U_2) = \{\pm 1, \pm e^{i\theta_2}\}$, implies that 
$$\{\pm 1, \pm e^{i\theta_2}\} = \{
\pm e^{i\theta},
\,\pm e^{i(\theta_1+\theta)}
\}.
$$
In case, if $e^{i(\theta_1+\theta)} = 1$ and $e^{i \theta} = e^{i \theta_2}$, then both $\theta_1+\theta$ and $\theta - \theta_2$ are an integer multiple of $2\pi$. This yields that $\theta_1+\theta_2 =  2 k \pi$ for some $k \in \mathbb Z$, which contradicts the assumption \eqref{addi-assump}. By using \eqref{addi-assump}, other cases also lead to a contradiction. Thus, we conclude that  $\frac{1}{2}(\id+\mathrm{Ad}_{U_1})$ and $\frac{1}{2}(\id+\mathrm{Ad}_{U_2})$ are inequivalent channels. 
\end{rmk}

\subsection{Erd\H{o}s channels on $\M{2}$} In this subsection, we classify all the Erd\H{o}s channels (up to unitary equivalence) on $\M{2}$. Before we state the classification result, we set some notations. Let $\sigma_1, \sigma_2$ and $\sigma_3$ denote the matrices 
$\begin{pmatrix}
    0 & 1\\
    1 & 0
\end{pmatrix}, \begin{pmatrix}
    0 & -i\\
    i & 0
\end{pmatrix}$ and  $\begin{pmatrix}
    1 & 0\\
    0 & -1
\end{pmatrix}$, respectively. These matrices along with the identity matrix are known as Pauli matrices. Note that the Pauli matrices are unitaries, and form an orthogonal basis of $\M{2}$. In fact, the set 
$$\Big\{\frac{I_2}{\sqrt{2}},\frac{\sigma_j}{\sqrt{2}}: 1\leq j\leq 3\Big\}$$ is an orthonormal basis for $\M{2}$ with respect to the Hilbert-Schmidt inner product.
Below is the classification result on $\M{2}$.

\begin{thm}\label{Erdos-2}
Let $\Phi: \mathbb{M}_2 \to \mathbb{M}_2$ be a unital quantum channel, and consider the unitaries $\sigma_j$ ~($j=1,2,3$) defined above. 
Then, $\Phi$ is an Erd\H{o}s channel if and only if $\Phi$, up to unitary equivalence, belongs to the set
$$
\Big\{\id, \frac{\id +\mathrm{Ad}_{\sigma_1}}{2}, \frac{\id + \mathrm{Ad}_{\sigma_1}+ \mathrm{Ad}_{\sigma_2}}{3}, \frac{\id + \sum_{j=1}^3 \mathrm{Ad}_{\sigma_j}}{4}\Big\}.
$$
\end{thm}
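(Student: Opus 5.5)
The plan is to reduce every unital qubit channel, up to unitary equivalence, to a Pauli-diagonal channel. Then Theorem~\ref{Erdos-d} applies with $\mathcal B=\{I_2,\sigma_1,\sigma_2,\sigma_3\}$, and what remains is to sort the resulting $k$-averages into equivalence classes.

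\emph{Step 1 (normal form).} Write $\sigma_0:=I_2$. A unital, trace-preserving map on $\M{2}$ preserves the space of traceless Hermitian matrices. In the Bloch picture it therefore acts by a real $3\times 3$ matrix $T$ on the coefficients with respect to $\sigma_1,\sigma_2,\sigma_3$. I would take a singular value decomposition $T=R_1DR_2$ with $R_1,R_2\in SO(3)$ and $D$ real diagonal; the entries of $D$ may carry signs, since we insist on $\det R_i=1$. Every element of $SO(3)$ is implemented by some $\mathrm{Ad}_U$ with $U\in\mathcal U(2)$, because $SU(2)\to SO(3)$ is surjective. Hence $\Phi$ is unitarily equivalent, in the sense of Definition~\ref{equivalence-reln}, to a unital channel $\Psi$ whose Bloch matrix is $D$.

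Such a $\Psi$ satisfies $\Psi(\sigma_j)=d_j\sigma_j$ and $\Psi(I)=I$. A direct computation shows that its Choi matrix is diagonal in the Bell basis $\{(I\otimes\sigma_j)e\}_{j=0}^3$, with eigenvalues proportional to numbers $p_0,\dots,p_3$ obtained from $(1,d_1,d_2,d_3)$ by a Hadamard-type transform. Complete positivity forces $p_j\ge 0$, and trace preservation forces $\sum_j p_j=1$. Thus $\Psi=\sum_{j=0}^3 p_j\mathrm{Ad}_{\sigma_j}$, so $\Psi$ is a mixed $\mathcal B$-unitary channel. In particular every unital qubit channel is mixed unitary, and Definition~\ref{def-Erdos-channel} applies to it. I expect this step to be the main obstacle: it needs careful bookkeeping of the SVD signs and of the Bell-basis eigenvalue computation, although it is essentially the classical Fujiwara--Algoet/King--Ruskai normal form.

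\emph{Step 2 (apply Theorem~\ref{Erdos-d}).} Erd\H{o}sness is invariant under unitary equivalence by \eqref{Erdos-preserve-equivalence}. So if $\Phi$ is Erd\H{o}s, then so is $\Psi$, and Theorem~\ref{Erdos-d} shows that $\Psi=\frac1k\sum_{W\in S}\mathrm{Ad}_W$ for some $S\subseteq\mathcal B$ with $|S|=k$. Conversely, each of the four listed channels is a $k$-average of $\mathcal B$ for $k=1,2,3,4$. Each is therefore Erd\H{o}s by Theorem~\ref{Erdos-d}, and so is everything unitarily equivalent to it.

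\emph{Step 3 (reduce subsets).} It remains to show that for each $k$ all $k$-subsets $S\subseteq\mathcal B$ give equivalent channels. Two facts do this.

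First, for a fixed $a$ we have $\mathrm{Ad}_{\sigma_a}\circ\mathrm{Ad}_{\sigma_b}=\mathrm{Ad}_{\sigma_b\sigma_a}$. Since $\sigma_b\sigma_a$ is a unimodular scalar times a Pauli matrix, and $\mathrm{Ad}$ ignores such phases, left composition with $\mathrm{Ad}_{\sigma_a}$ permutes $\mathcal B$ (modulo phases) and sends $\sigma_a$ to $I$. So we may assume $I\in S$.

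Second, conjugation by suitable unitaries, for instance the Hadamard gate and the phase gate, permutes $\{\sigma_1,\sigma_2,\sigma_3\}$ up to sign and fixes $I$. Using $\mathrm{Ad}_{U}\circ\mathrm{Ad}_{\sigma}\circ\mathrm{Ad}_{U^*}=\mathrm{Ad}_{U^*\sigma U}$, any $S$ containing $I$ can be moved to $\{I\}$, $\{I,\sigma_1\}$, $\{I,\sigma_1,\sigma_2\}$ or $\mathcal B$. This yields exactly the four listed representatives.

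Finally, these four are pairwise inequivalent, because their Choi ranks $1,2,3,4$ are distinct and Choi rank is a unitary invariant. This is consistent with the statement that there are exactly four classes.
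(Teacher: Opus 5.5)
Your proposal is correct and follows the same route as the paper. Both bring $\Phi$, by unitary equivalence, to a Pauli-diagonal channel $\sum_j\lambda_j\mathrm{Ad}_{\sigma_j}$ and then invoke Theorem~\ref{Erdos-d} with $\mathcal B=\{I_2,\sigma_1,\sigma_2,\sigma_3\}$. The only difference is that the paper cites \cite[Theorem 2.2]{ChLi23} for a normal form whose weights are already ordered, $\lambda_0\geq\lambda_1\geq\lambda_2\geq\lambda_3\geq 0$, so the $k$ equal nonzero weights are automatically those of $\id,\mathrm{Ad}_{\sigma_1},\dots,\mathrm{Ad}_{\sigma_{k-1}}$; you instead derive an unordered normal form via the Bloch-matrix SVD and supply the ordering by hand in Step~3 (Pauli multiplication and Clifford conjugation), and your Choi-rank check that the four classes are distinct is a small addition the paper leaves implicit.
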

\begin{proof}
     Since $\Phi$ is a unital quantum channel. By \cite[Theorem 2.2]{ChLi23}, there exist $U,W\in\mathcal U(2)$ such that 
    \begin{equation*}
        \mathrm{Ad}_U\circ\Phi\circ\mathrm{Ad}_W(X)= \lambda_0 X+ \lambda_1 \sigma_1 X\sigma_1 + \lambda_2 \sigma_2 X \sigma_2 + \lambda_3 \sigma_3X\sigma_3, \quad X \in \M{2},
    \end{equation*}
    where $\{\sigma_1, \sigma_2,\sigma_3\}$ are Pauli matrices, and $\lambda_0 \geq \lambda_1 \geq \lambda_2 \geq \lambda_3 \geq 0~\text{such that}~\sum_{j=0}^3 \lambda_j = 1.$
Suppose that $\Phi$ is Erd\H{o}s. By \eqref{Erdos-preserve-equivalence}, $\mathrm{Ad}_U\circ\Phi\circ\mathrm{Ad}_W$ is also Erd\H{o}s. As the set $\{I_2, \sigma_1, \sigma_2, \sigma_3\}$ is an orthogonal basis of $\M{2}$ consisting of unitaries, Theorem~\ref{Erdos-d} yields that
$$
\mathrm{Ad}_U\circ\Phi\circ\mathrm{Ad}_W \in \Big\{\id, \frac{\id +\mathrm{Ad}_{\sigma_1}}{2}, \frac{\id + \mathrm{Ad}_{\sigma_1}+ \mathrm{Ad}_{\sigma_2}}{3}, \frac{\id + \sum_{j=1}^3 \mathrm{Ad}_{\sigma_j}}{4}\Big\},
$$
completing the forward part. The converse part is immediate from Theorem~\ref{Erdos-d}. This completes the proof.
\end{proof}

\section*{Acknowledgement}
Both authors acknowledge the support received from the Indian Institute of Technology Bombay through an Institute postdoctoral fellowship. They also thank B. V. R. Bhat for carefully reading the manuscript and for helpful suggestions that improved Proposition~\ref{prop-Delta-range} and the overall presentation. The authors are also thankful to R. Tripathi for several useful suggestions and comments.

 \section*{Data availability}
The manuscript has no associated data.
\section*{Declarations}
 \subsection*{Conflict of interest} The authors have no conflict of interest.
\printbibliography

 \end{document}